\documentclass[a4paper,UKenglish]{llncs}

\pagestyle{plain}
\usepackage[utf8x]{inputenc}
\usepackage{pslatex}
\usepackage[OT1]{fontenc}
\usepackage{amsmath}
\usepackage{amsfonts}
\usepackage{amssymb}
\usepackage{wasysym}
\usepackage{stmaryrd}
\usepackage{bussproofs} 
\usepackage{tikz} 
\usetikzlibrary{arrows,shapes,snakes,automata,backgrounds,petri} 
\usepackage{xifthen}
\usepackage{relsize} 

\usepackage{xypic} 

\usepackage{url}

\newcommand{\RLabel}[1]{\RightLabel{\begin{scriptsize} #1 \end{scriptsize}}}

\usepackage[nospace]{cite}

\newcommand{\ERL}{ERL}
\newcommand{\ERLast}{$\mbox{\rm ERL}^*$}

\newcommand{\BI}{BI}
\newcommand{\EL}{EL}
\newcommand{\BBI}{BBI}

\newcommand{\lang}{\mathcal{L}}

\newcommand{\Atom}[1]{\mbox{\rm #1}}
\newcommand{\Prop}{\mbox{\rm Prop}}

\newcommand{\satisfies}{\models_\mathcal{M}}  
\newcommand{\valid}{ \vDash }

\newcommand{\interpretationMRD}[1]{ \ifthenelse{\isempty{#1}}
                                            {\llbracket \cdot \rrbracket}
                                            {\llbracket #1 \rrbracket} }
\newcommand{\interpretationRP}[1]{ \ifthenelse{\isempty{#1}}
                                            {i}
                                            {i(#1)} }

\newcommand{\ARD}[1]{ \ifthenelse{\isempty{#1}}
                                            {ARD}
                                            {ARD(#1)} }

\newcommand{\vp}[1]{{\mathrm #1}}
\newcommand{\BIatop}{\top}
\newcommand{\BIabot}{\bot}
\newcommand{\BIaand}{\wedge}
\newcommand{\BIaor}{\vee}
\newcommand{\BIaneg}{\neg}
\newcommand{\BIaimp}{\rightarrow}
\newcommand{\BImI}{\vp{I}}
\newcommand{\BImast}{\ast}
\newcommand{\BImimp}{\mathbin{-\hspace{-0.1cm}\ast}}

\newcommand{\BIK}[1]{\mathbf{K}_{#1}}
\newcommand{\BIKt}[1]{\widetilde{\mathbf{K}}_{#1}}

\newcommand{\BIC}[2]{\mathbf{L}_{#1}^{#2}}
\newcommand{\BID}[2]{\mathbf{M}_{#1}^{#2}}
\newcommand{\BIE}[2]{\mathbf{N}_{#1}^{#2}}

\newcommand{\BICt}[2]{\widetilde{\mathbf{L}}_{#1}^{#2}}
\newcommand{\BIDt}[2]{\widetilde{\mathbf{M}}_{#1}^{#2}}
\newcommand{\BIEt}[2]{\widetilde{\mathbf{N}}_{#1}^{#2}}

\newcommand{\EK}[1]{\mathbf{K}_{#1}}

\newcommand{\setAg}{ A }
\newcommand{\setR}{ R }
\newcommand{\compR}{ \bullet }
\newcommand{\unitR}{ e }
\newcommand{\relR}[1]{ \sim_{#1} }
\newcommand{\PRM}{ \mathcal{R} }
\newcommand{\model}{ \mathcal{M} }
\newcommand{\val}{ V }
\newcommand{\forcing}[1]{ \vDash_{#1} }
\newcommand{\ESL}{ESL}
\newcommand{\goes}[1]{\stackrel{#1}{\longrightarrow}}

\newcommand{\relCr}{ \simeq } 
\newcommand{\relCag}[1]{  \eqcirc_{#1}} 

\newcommand{\subLabelR}[1]{ \mathcal{E}(#1) }
\newcommand{\domainR}[1]{ \mathcal{D}_r(#1) }

\newcommand{\alphabetR}[1]{ \mathcal{A}_r(#1) }
\newcommand{\closure}[1]{ \overline{#1} }

\newcommand{\labelledFormula}[3]{ ( \mathbb{#1} #2 : #3 )}
\newcommand{\CSS}[2]{ \langle #1, #2 \rangle }
\newcommand{\CSSfiniteInclusion}{ \preccurlyeq_f }
\newcommand{\CSSinclusion}{ \preccurlyeq }
\newcommand{\concatList}{ \oplus }

\newcommand{\realizationR}[1]{ \vert #1 \vert }

\newcommand{\lgg}{ \llbracket } 
\newcommand{\ldd}{ \rrbracket } 

\newcommand{\satisfaction}[1]{ \models_{#1}}

\newcommand{\setW}{W}
\newcommand{\relSymbol}{ \mathbf{R} }
\newcommand{\valuation}{ \mathcal{V} }

\newcommand{\BIdp}[1]{\lozenge_{#1}}
\newcommand{\BIbp}[1]{\square_{#1}}
\newcommand{\rel}[4]{ (#1, #2) \relSymbol (#3, #4) }
\newcommand{\BId}{\lozenge}
\newcommand{\BIb}{\square}
\newcommand{\BIdr}{\lozenge_{\compR}}
\newcommand{\BIbr}{\square_{\compR}}

\title{A Substructural Epistemic Resource Logic: \\ Theory and
  Modelling Applications} 
\author{Didier Galmiche*, Pierre Kimmel*, and David Pym$\dagger$}
\institute{*Universit\'e de Lorraine, CNRS, LORIA, France \quad
  $\dagger$University College London, UK}

\begin{document}

\maketitle

\begin{abstract}
We present a substructural epistemic logic, based on Boolean BI,
in which the epistemic modalities are parametrized on agents' local
resources. The new modalities can be seen as generalizations
of the usual epistemic modalities. The logic combines Boolean
BI's resource semantics --- we introduce BI and its resource semantics 
at some length --- with epistemic agency. We illustrate the use
of the logic in systems modelling by discussing some examples 
about access control, including semaphores, using resource tokens. 
We also give a labelled tableaux calculus and establish soundness 
and completeness with respect to the resource semantics.
\end{abstract}

\section{Introduction} \label{sec:introduction}

The concept of resource is important in many fields including, among
others, computer science, economics, and security. For example, in 
operating systems, processes access system resources such as memory, 
files, processor time, and bandwidth, with correct resource usage 
being essential for the robust function of the system. The internet 
can be regarded as a giant, dynamic net of resources, in which Uniform 
Resource Locators refer to located data and code. 

In recent years, the concept of resource has been studied and analysed
in computer science through the bunched logic, BI,
\cite{OP1999,POY,Gal05a} and its variants,  such as Boolean \BI\
(\BBI) \cite{OHea01a} and bunched modal logics \cite{Cour15a,CGP15},
and applications, such as Separation Logic \cite{OHea01a,Rey02a}.   

The truth-functional, Kripke semantics of these logics, based on
preordered partial monoids is sketched below. However, before
proceeding to describe this  semantics, it is perhaps worth observing
that this choice of structure for BI's  models can be motivated
directly in terms of natural requirements for the  properties of a
notion of resource. Assuming a set of resource elements, we  expect to be
able to  
\begin{itemize}
\item[-] combine two resource elements to give a new resource element,
	and 
\item[-] to be able to compare two resource elements, to determine
	which is the greater.
\end{itemize}  
It is also natural to expect that the combination of elements be
partial and this is indeed amply justified by leading examples. These
simple assumptions, that  around are cleanly captured by preordered
partial monoids, have led to a remarkably useful `resource
semantics'. The need for partiality arises in two ways. Conceptually,
we observe that in our semantics of resources it is quite natural to
expect that not all combinations of resource elements will exist
(Separation Logic \cite{OHea01a,Rey02a} provides an immediate and
compelling example). Second, partiality is technically convenient for
BI's metatheory \cite{Gal05a}.  

These considerations lead to a semantics for BI based on 
partially ordered partial monoids of worlds,
\[
	\mbox{{\bf R}} = (R , \sqsubseteq , \bullet , e). 
\] 
Here, composition of resources is captured by the partial 
monoidal operation, $\bullet$, with unit $e$, and comparison 
of resources is captured by the partial order $\sqsubseteq$. 
Where defined, this structure is required to satisfy the 
bifunctoriality condition that if $r_1 \sqsubseteq s_1$ and 
$r_2 \sqsubseteq s_2$, then $r_1 \bullet r_2 \sqsubseteq s_1 \bullet
s_2$.  Let us note that $\downarrow$ denotes definedness of the
composition. 
 
Given such structures, the logic BI of bunched implications --- see,
for example, \cite{OP1999,Pym02,POY,Gal05a} --- which freely combines
intuitionistic propositional additives with intuitionistic
propositional multiplicatives --- has its Kripke semantics given by
the following  satisfaction relation, where $V$ is an interpretation
of propositional letters in $\wp(R)$, in the usual way: 
\[
\begin{array}{rcl}
r \models \Atom{p} & \mbox{\rm iff} & \mbox{$r \in V(\Atom{p})$}  \\
r \models \BIabot   & \mbox{\rm never} &  \\
r \models\BIatop    & \mbox{\rm always} &  \\
r \models \BIaneg \phi  & \mbox{\rm iff} & \mbox{\rm $r \not\models \phi$} \\
\end{array} \qquad
\begin{array}{rcl}
r \models \phi \BIaor \psi     & \mbox{\rm iff} & 
	\mbox{\rm $r\models \phi$ or $r\models \psi$} \\
r \models \phi \BIaand \psi  & \mbox{\rm iff}  & 
	\mbox{\rm $r\models \phi$ and $r\models \psi$} \\
r \models \phi \BIaimp \psi  & \mbox{\rm iff} & \mbox{for all $r
                                                \sqsubseteq s$,} \\  
	& & \mbox{\rm if $s \models \phi$, then $s\models \psi$} \\
\end{array}
\]
\[
\begin{array}{rcl}
r \models \BImI & \mbox{\rm iff} & \mbox{\rm $e \sqsubseteq r$} \\
r \models \phi \BImast \psi  & \mbox{\rm iff} & \mbox{\rm there exist 
	$r_1 , r_2 \in R$ s.t.     $r_1\bullet r_2\downarrow$, $r
                                                \sqsubseteq r_1\bullet
                                                r_2$, and} \\  
	& & \mbox{\rm $r_1\models \phi$ and $r_2 \models \psi$} \\
r \models \phi \BImimp \psi & \mbox{\rm iff} & \mbox{\rm for all $r' \in R$, if 
	$r \bullet r' \downarrow$ and $r' \models \phi$,}  \\
	& & \mbox{\rm then $r \bullet r' \models \psi$}
\end{array}
\]

This resource semantics for BI --- that is, the interpretation of BI's
semantics in terms of  resources --- underpins its applications to
Separation Logic --- and its family of derivatives;  see
\cite{DP-MFPS-2017,DP-2019-LMCS} for an extensive discussion --- and
is mainly  concerned with sharing and separation. 

Specifically, Separation Logic is usually given as a presentation
(often using Hoare triples) of a \emph{specific theory of Boolean BI}
for a language of memory  cells and pointers with a model based on the
stack and the heap \cite{OHea01a}.  Versions of Separation Logic that
are based on (intuitionistic) BI, as given  above, are also possible
\cite{OHea01a}.   

In Boolean BI (BBI), \cite{OHea01a,Rey02a}, the additives are
classical, so that the order  is collapsed to equality in the partial
monoid. Thus we have   
\[
\begin{array}{rcl}
r \models \phi \BIaimp \psi  & \mbox{\rm iff} & \mbox{if $r \models
                                                \phi$, then $r \models
                                                \psi$} \\ 
r \models \BImI                   & \mbox{\rm iff} & \mbox{\rm $e = r$} \\
r \models \phi \BImast \psi  & \mbox{\rm iff} & \mbox{\rm there exist 
	$r_1 , r_2 \in R$ s.t.     $r_1\bullet r_2 \downarrow$, $r =
                                                r_1\bullet r_2$, and}
  \\  
	& & \mbox{\rm $r_1\models \phi$ and $r_2 \models \psi$} 
\end{array}
\]
The semantics described above is otherwise unchanged.  

Thus sharing of resources is captured by additive connectives, such as
$\BIaand$, while  separation of resources is captured by
multiplicative connectives, such as $\BImast$.  These connectives are
the logical kernels of the family of separation logics, with
resources being interpreted in various ways, such as memory regions,
\cite{OHea01a,Rey02a}, or elements of other particular monoids of
resources  \cite{CMP12}. This semantic view of resource stands in
stark contrast to the the `number-of-uses' reading of Linear Logic's
proof theory \cite{Gir86}. We shall  return to this point in the
sequel, where we consider the evolution of a model of  system of
resources.   

This framework of resource semantics has also been extended into modal
logic.  Specifically, we can set up a conservative extension (a `Logic
of Separating Modalities'  or LSM \cite{CGP15})  of the modal logic S4 
which adds \emph{multiplicative modalities} --- modalities that are
parametrized on (local) resources. These modalities are defined
relative to two-dimensional  worlds, one of which captures the S4
accessibility relation and one of which supports the  resource
parametrization.  
        
Roughly speaking, an LSM model is a 4-tuple $(\setW, \PRM, \relSymbol,
\valuation)$, where $W$ is a set of worlds, $\mathcal{R}$ is a partial
monoid of `resources' $(Res, \bullet, e)$, ${\bf R} \subseteq (W
\times Res) \times (W \times Res)$ is a reflexive and transitive
relation, and $V$ is an interpretation of propositional letters  in
$\wp(W \times Res)$.  Then, using the both dimensions of `worlds' to
handle, respectively, both classical modality and resource
parametrization,  we have  
\[
\begin{array}{rcl}
w, r \models \BIdp{s} \phi & \mbox{\textup{iff}} & \mbox{\textup{there
                                                   exist $w' \in
                                                   \setW$ and $r' \in
                                                   \setR$ such that 
     $r \compR s \downarrow$,}} \\
& & \mbox{\textup{$\rel{w}{r \compR s}{w'}{r'}$ and $w', r' \models \phi$}} \\
          & & \\
w, r \models \BIbp{s} \phi & \mbox{\textup{iff}} & \mbox{\textup{for
                                                   all  $w' \in \setW$
                                                   and all  $r' \in
                                                   \setR$, if $r
                                                   \compR s 
  \downarrow$ and}} \\
&  & \mbox{\textup{$\rel{w}{r \compR s}{w'}{r'}$, then $w', r' \models
     \phi$.}} \\ 
\end{array}
\]
Here, $s$ is the local resource, associated with the modality, and
$r$, in the model, is the ambient resource. The modalities are read as
asserting that $\phi$ is possibly (respectively, necessarily) true at
the world $(w,r)$ subject to the availability of additional resource $s$. 

Note that two other pairs of modalities are derivable from these:
\begin{itemize}
\item[-] The basic additive modalities: 
\[
\begin{array}{rcl}
w, r \models \BId \phi & \mbox{\textup{iff}} & \mbox{\textup{there
                                               exist $w' \in \setW$
                                               and $r' \in \setR$ such
                                               that $\rel{w}{r}{w'}{r'}$}} \\
                             &
         & \mbox{\textup{and $w',r' \models \phi$}} \\
w, r \models \BIb \phi & \mbox{\textup{iff}} & \mbox{\textup{for
  all $w' \in \setW$ and all $r' \in \setR$,  if $\rel{w}{r}{w'}{r'}$
  then}} \\
                      &      & \mbox{\textup{$w', r'  \models \phi$.}} \\
\end{array}
\]
\item[-] Multiplicative modalities with undetermined additional resource parameters: 
\[
\begin{array}{rcl}
w, r \models \BIdr \phi & \mbox{\textup{iff}} & \mbox{\textup{there exist $w' \in \setW$ and $s, r' \in \setR$ such that  $r
  	\compR s \downarrow$,}} \\
                                         &
                                                         &  \mbox{\textup{$\rel{w}{r
                                                             \compR
                                                             s}{w'}{r'}$, and $w', r' \satisfaction{\model} \phi$}} \\
w, r \models \BIbr \phi & \mbox{\textup{iff}} & \mbox{\textup{for
  	all $w' \in \setW$ and all $s, r' \in \setR$, if                                $(r \compR s \downarrow$ and}} \\
                                   &
                                                       &  \mbox{\textup{$\rel{w}{r
                                                           \compR
                                                           s}{w'}{r'})$ then $w', r' \models \phi$.}} \\
\end{array}
\]
\end{itemize}
Full details of the derivations of these modalities may be found in
\cite{CGP15} (Lemma 6), where the conservativity of LSM over S4 
is also established (in Section 5). The key feature of BI as a
modelling tool (and hence of its specific model Separation Logic) is
its control of the representation and handling of resources provided
by the resource semantics and the associated proof systems. Notice
that, in the semantics given above, the components of the additive
conjunction, $\wedge$, share resources whereas the truth condition for
the multiplicative conjunction, $\ast$, requires separate resources
for each component. Notice also that this interpretation extends to
the multiplicative implication as follows: $\BImimp$ can be seen as
(the type of) a function that combines the resource required to support  
itself with the resource required to support its argument to give the 
resource required to support the application of the function to its argument 
(see \cite{OP1999,OHearn2003}). Finally, notice also that we do \emph{not} 
assume (in the manner of hybrid logic) the existence
of an atomic proposition for each element `s' of the set $Res$ with $r
\models s$ iff $r = s$: from the perspective of  resource semantics,
such an assumption --- the motivations for which would be somewhat
technical and essentially syntactic --- is not well supported. In
particular, we would argue that such an assumption obscures the
natural structure of the modalities that we wish to  explore
and. moreover, imposes a constraint on the relationship between worlds
and their  properties that we do not wish to take in general. We will
return to this point briefly in  Section~\ref{sec:erl}. 

BI's sequent proof systems employ \emph{bunches}, with two context-building 
operations: one for the additives ---characterized by $\wedge$, which
admits weakening and contraction --- and one for the multiplicatives
--- characterized by $\ast$, which admits neither weakening nor
contraction. Bunches are not finite sequences of formulae, but rather
are finite trees, with formulae at the leaves and the context building
operations at the internal vertices.  For the details of the set-up,
see \cite{OP1999,POY,OHearn2003}. 

In this set-up, we have the following right rules for the conjunctions
and their corresponding implications, $\rightarrow$ and $\BImimp$: 
\[
\frac{\Gamma \vdash \phi \quad \Delta \vdash \psi}
	{\Gamma \, ; \, \Delta \vdash \phi \wedge \psi}\quad\mbox{$\wedge$R}
\qquad \mbox{\rm and} \qquad 
\frac{\Gamma \, ; \, \phi \vdash \psi}{\Gamma \vdash \phi \rightarrow
  \psi}\quad\mbox{$\rightarrow$R}  
\]
and 
\[
\frac{\Gamma \vdash \phi \quad \Delta \vdash \psi}
	{\Gamma \, , \, \Delta \vdash \phi \ast \psi}\quad\mbox{$\ast$R}
\qquad \mbox{\rm and} \qquad 
\frac{\Gamma \, , \, \phi \vdash \psi}
	{\Gamma \vdash \phi \BImimp \psi}\quad\mbox{$\BImimp$R}.
\]
Again, details may be found in the references given above. 

In this setting, the structural rules of Weakening and Contraction arise as follows:
\[
	\frac{\Gamma(\phi) \vdash \chi}{\Gamma(\phi \, ; \, \psi)
          \vdash \chi}\quad\mbox{W} \qquad\mbox{\rm and}\qquad 
	\frac{\Gamma(\phi \, ; \, \phi) \vdash \psi}{\Gamma(\phi)
          \vdash \psi}\quad\mbox{C}.  
\]
In the former rule, the leaf $\phi$ is replaced by the bunch $\phi \,
; \, \psi$ and, in the latter rule,  the sub-bunch (in the evident
sense) $\phi \ ; \, \phi$ is replaced by the formula $\phi$.  In both
cases, \, ; \, (rather than ,~) is used. Again, details may be found
in the references given above.   

The soundness and completeness of BI's proof systems for the semantics
given above is established in \cite{OP1999,POY} and elsewhere and via
labelled tableaux in \cite{Gal05a}, and the completeness of BBI for
the partial monoid semantics described above is discussed 
comprehensively in \cite{Lar14a}.   

The idea of resource semantics as it derives from BI and its models and 
its use as modelling tool is discussed extensively in \cite{Pym2019}, in 
an article that is intended to be widely accessible to logicians and 
computer scientists. 

Girard's Linear Logic (LL) \cite{Gir86} also decomposes the logical
connectives into additive and multiplicative forms (for classical and
intuitionistic conjunction and disjunction, but not for intuitionistic
implication). However, it does so in a very different way from
BI. Instead of employing bunches to allow control of the structural
rules, LL introduces the so-called exponentials ! and ? --- modalities, 
similar to S4's $\Box$ and $\Diamond$) --- which have the
following left and right rules:  
\[
\begin{array}{c@{\qquad}c}
\dfrac{\Gamma , \phi \vdash \Delta}{\Gamma , ! \phi \vdash \Delta}\quad{!L} & 
	\dfrac{! \Gamma \vdash \phi , ?\Delta}{! \Gamma \vdash ! \phi , ?\Delta}\quad{!R}
  \\ & \\ 
\dfrac{! \Gamma , \phi \vdash ? \Delta}{! \Gamma , ?\phi \vdash ? \Delta}\quad{?L} & 
	\dfrac{\Gamma \vdash \phi , \Delta}{\Gamma \vdash  ? \phi , \Delta}\quad{? R} \\
\end{array}
\]
Then the structural rules of Weakening and Contraction arise as
\[
\begin{array}{c@{\qquad}c}
\dfrac{\Gamma \vdash \Delta}{\Gamma , ! \phi \vdash \Delta}\quad{W L} & 
	\dfrac{\Gamma \vdash \Delta}{\Gamma \vdash ? \phi , \Delta}\quad{W R} 
\end{array}
\] 
and 
\[
\begin{array}{c@{\qquad}c}
\dfrac{\Gamma , ! \phi , ! \phi \vdash \Delta}{\Gamma , ! \phi \vdash \Delta}\quad{C L} & 
	\dfrac{\Gamma \vdash ? \phi , ? \phi , \Delta}{\Gamma \vdash ? \phi ,  \Delta}\quad{C R} 
\end{array}
\]
Restricting to a single-conclusioned calculus for intuitionistic LL,
we have just the $!$\,.  

At this point, we may ask what is the relationship between BI and
LL. The short answer is that they are essentially incomparable. This
is explained in detail in the references  given above (e.g.,
\cite{OP1999,Pym02,Pym2019}), but the essential point can be seen  in
terms of their differing treatments of intuitionistic implication. In
BI, which can be  considered to  freely combines intuitionistic
propositional logic and multiplicative  propositional linear logic,
intuitionistic implication is present directly. In LL, intuitionistic
implication, $\phi \supset \psi$, is represented using Girard's
translation 

\begin{equation} \label{eqn:!-endo}
	\phi \supset \psi \,=\, !\, \phi \multimap \psi 
\end{equation}

Such a representation does not exist in BI. This can be seen, as
described in \cite{OP1999,Pym02,Pym2019}, using an argument based on
category-theoretic models of BI's proofs.  Specifically, BI's proofs
are modelled by bi-cartesian doubly closed categories, and there is
no endofunctor $!$ on such a category that satisfies (the
interpretation of) Equation (\ref{eqn:!-endo}).  

Returning briefly to truth-functional semantics and its resource
interpretation, we remark that LL's recently developed Kripke
semantics \cite{Coumans2014} does not, as it  stands, admit a direct
resource interpretation of the kind outline above. The possibility
of such interpretations is an interesting issue.  

Modal extensions of BI, such as MBI \cite{CMP12,AP16}, DBI, and DMBI
\cite{Cour15a}, have been proposed to introduce dynamics into resource
semantics. In recent work, the idea of introducing agents, together
with their knowledge, into the resource semantics has led to an
Epistemic Separation Logic, called ESL, in which epistemic possible
worlds are considered as resources \cite{Cour15b}. This logic
corresponds to an extension of Boolean BI with a knowledge modality,
$\BIK{a}$, such that $\BIK{a} \phi$ means that the agent $a$ knows
that $\phi$ holds.   

Various previous works on epistemic logics consider the concept of
resource, using a variety of approaches. They include
\cite{Baltag06,Halpern03,Naumov15}. Here we aim to explore more deeply
the idea of epistemic reasoning \cite{EpiHandbook} in the context of
resource semantics, and its associated logic, by taking the basic
epistemic modality $\BIK{a}$ and parametrizing it with a resource $s$,
with the associated  introduction of relations not only between
resources,  according to an agent, but also between composition  of
resources in  different ways. The parametrizing resource may be
thought of as being associated with, or local to, the agent. This
approach leads to the definition of two new modalities $\BIC{a}{s}$
and $\BID{a}{s}$, and, consequently, to a new logic in which, as a
leading example, we can obtain an account of access to resources and
its control, whether they be  pieces of knowledge, locations, or
other entities. We call this logic \emph{Epistemic Resource Logic} or
\ERL.   

In Section~\ref{sec:erl}, we set up the logic \ERL\ by a semantic
definition and, in Section~\ref{sec:properties},  we give the key
conservative extension properties   of the logic and also introduce a
useful sublogic, \ERLast. In Section~\ref{sec:modelling},  we explain
how to  use the logic to model and reason about the relationship
between a security policy --- in the  context of access control ---
and the system  to which it is applied (cf.  Schneier's Gate problem
\cite{Schneier2005}). Our application to systems security policy
stands in contrast  to other work (e.g., \cite{Pucella2015}) in which
epistemic logic  has been applied to the analysis of cryptographic
protocols. We complete this section with other examples, including
joint access and semaphores, which illustrate the applicability of \ERL\
in these perspectives. In Section~\ref{sec:tableaux}, we set up a
labelled tableaux calculus for \ERL, and establish soundness with
respect to \ERL's semantic definition and also completeness from a
countermodel extraction method. Let us note that we apply the approach
and techniques already used for designing such labelled tableaux for
other modal extensions of \BBI\ \cite{Cour15a,CGP15,Cour15b}. Details
of the arguments are provided  in the appendices. Our arguments
encompass also the sublogic \ERLast.   

Further work will be devoted to further study of the logic and its
variants, including intuitionistic and dynamic systems, to local
reasoning  for resource-carrying agents  \cite{OHea01a,Rey02a}, to
connections with other approaches to modelling the relationship
between policy and implementation in system management
\cite{Caires10}, and to approaches  involving logics for layered
graphs \cite{AP16,CMP15}.  The work presented here builds upon and 
strongly develops early ideas presented in \cite{ICLA17}.

\section{An epistemic resource logic} \label{sec:erl} 

Epistemic logic is the logic of knowledge and belief. It is concerned
with what agents know and believe. The knowledge and beliefs of agents
are represented  using modalities which assert the truth of
propositions relative to agents' judgements of the relationship
between worlds \cite{EpiHandbook}. In the setting of resource
semantics, worlds are interpreted as representing available resources
and agents make judgements about the equivalence of resources.  

The language $\lang$ of the epistemic resource logic, or \ERL, is
obtained by adding two new modal operators $\BIC{}{}$ and $\BID{}{}$
to the \BI\ language. In order to define the language of \ERL, we
introduce the following structures: a finite set of agents $A$; a
finite set of resources $Res$, with a particular element, $e$; an
internal composition operator $\cdot$ on $Res$ ($\cdot : Res \times
Res\rightharpoonup Res$); a countable set of propositional symbols
$\Prop$. The language $\lang$ of \ERL\ is defined as follows: 
\[
\phi ::= \Atom{p} \mid \BIabot \mid \BIatop \mid \neg \phi \mid
   \BImI \mid \phi \BIaor
   \psi \mid \phi \BIaand \psi \mid \phi \BIaimp \phi \mid \phi \BImast
   \phi \mid \phi \BImimp \phi \mid \BIC{a}{s}\phi \mid \BID{a}{s} \phi, 
\]
where $\Atom{p} \in \Prop$, $a\in A$ and $s\in Res$. 

In this context we call $s$ the agent's {\em local resource}. We also
define the following operators:  
$\BIDt{a}{s} \phi \equiv \BIaneg \BID{a}{s} \BIaneg\phi$ and
$\BICt{a}{s} \phi \equiv \BIaneg \BIC{a}{s} \BIaneg\phi$. The meanings
of these connectives are defined in the sequence of definitions that
follow below.  For simplicity, we write $rs$ instead of $r \cdot s$
and so write $\BIC{a}{rs}\phi$ instead of $\BIC{a}{r \cdot s} \phi$.  

Note that we introduce modalities that depend on agents and resources,
and compare them with previous work on an epistemic extension of
Boolean BI \cite{Cour15b}. With a slight abuse of
notation, we have explicit resources  in the language syntax: just
as in \cite{CGP15}, we must assume  that the resource elements present
in the syntax of the modalities  have counterparts in the partial
resource monoid semantics. This design choice has consequences both
for  the expressivity of the logic and for the formulation of the
tableaux calculus.  In the sequel, $\downarrow$ denotes definedness and 
$\uparrow$ undefinedness. 

\begin{definition}[Partial resource monoid] \label{def_prm} 
A \emph{partial resource monoid} (PRM) is a structure
$\mathcal{R}=(R,\bullet)$ such that 
\begin{itemize}
\item $R$ is a set of \emph{resources} such that $Res\subseteq R$
	(which notably means that $e\in R$), and 
\item $\bullet : R \times R \rightharpoonup R$ is an operator on $R$ such
	that, for all $r_1,r_2,r_3 \in R$, 
	\begin{itemize}
	\item[-] $\bullet$ is an extension of $\cdot$: if $r_1, r_2, r_3 \in
	Res$, then $r_1 = r_2 \cdot r_3$ iff $r_1 = r_2 \bullet r_3$, 
	\item[-] $e$ is a neutral element: $r_1 \bullet e\downarrow$ and
	$r_1 \bullet e=r_1$, 
	\item[-] $\bullet$ is commutative: if $r_1 \bullet r_2\downarrow$, then
	$r_2\bullet r_1\downarrow$ and $r_2\bullet r_1 = r_1\bullet r_2$, and 
	\item[-] $\bullet$ is associative: if $r_1 \bullet (r_2 \bullet
 	 r_3)\downarrow$, then $(r_1 \bullet r_2) \bullet r_3\downarrow$ and
 	 \\ $(r_1 \bullet r_2\bullet)  r_3 = r_1 \bullet (r_2 \bullet r_3)$.  
	\end{itemize}
\end{itemize}
\end{definition}
We call $e$ the \emph{unit resource}  and $\bullet$ the \emph{resource
  composition}. Henceforth, $\wp(R)$ denotes the powerset of $R$. 
  
Note that we implicitly consider that the resource composition
$\bullet$ is compatible with equality between resources. That means
that if $r_1=r_2$ and $r_1 \bullet r_3 \downarrow$, then $r_2 \bullet
r_3\downarrow$ and $r_2 \bullet r_3 = r_1 \bullet r_3$
(right-composition  property of $\bullet$). We also have the
left-composition since $\bullet$ is commutative.  

\begin{definition}[Model] \label{def_model}
A \emph{model} is a triple $\mathcal{M}=(\mathcal{R},\{\sim_a\}_{a\in
  A}, V)$ such that 
\begin{itemize}
	\item $\mathcal{R} = (R , \bullet)$ is a PRM, 
	\item for all $a \in A$, $\sim_a\subseteq R\times R$ is an
          equivalence relation, and 
	\item $V : \Prop \rightarrow \wp(R)$ is a valuation function.
\end{itemize}
\end{definition}

We can place this logic in the context of our previous work on
modal \cite{CMP12,CMP15}  and epistemic extensions of (Boolean)
BI \cite{Cour15a,Cour15b}.  In \cite{Cour15b}, an epistemic
extension of Boolean BI, called ESL, is introduced.  In this logic,
there is just one epistemic modality, $K_a$, which allows the
knowledge  of an agent $a$ to be expressed. The modalities employed in
this system and those employed in the system presented herein stand in
contrast to the  modalities of the system LSM described in
Section~\ref{sec:introduction}  in that they make essential use of the
notion of agent in their definition.  

More formally, the semantics of this modality is  defined by $r
\satisfies K_a\phi$ if and only if, for all $r'$ such that $r \sim_a
r'$,  $r' \satisfies\phi$, where $r$ and $r'$ are semantic worlds (or
resources) and $\sim_a$ is a relation between worlds that expresses
that they are equivalent from the point of view of the agent $a$. The
parametrization of modalities on resources derives from ideas that are
conveniently  expressed in, for example, \cite{CMP12,CMP15}.  \\

In this paper, we aim to develop the idea in order to consider a modality
like $K_a$ and to parametrize it on a resource $s$, requiring the
world relation to be of the form  $r \bullet s \sim_a r'$ or $r \sim_a
r' \bullet s$ or even $r \bullet s \sim_a r'\bullet s$.  Then, in the
spirit of \ESL, we define a new logic from Boolean BI that allows us
to model not only relations between resources according to an agent,
but also how those relations are restricted by resources. We can also 
consider the resources upon which the agent's relation are parametrized 
to be local to the agent. 

In this spirit, we define two new modalities $\BIC{a}{s}\phi$ and 
$\BID{a}{s}\phi$, with the notation building on the usual one in epistemic logic, 
for which we have the following semantics expressing two forms of the 
agent's contingency for truth in the presence of composable resources:  

\begin{enumerate}
\item $\BIC{a}{s}\phi$ expresses that the agent, $a$, can establish
 the truth of $\phi$ using a given resource whenever the ambient
 resource, $r$, can be combined with the  agent's local resource,
 $s$, to yield a resource that $a$ judges to be  equivalent to that
 given resource. 

\medskip 
 
In other words $\BIC{a}{s}\phi$ is true relative to the ambient
resource, $r$, iff for $a$'s views of the combination 
of the ambient resource, $r$, and its local resource, $s$, $\phi$ is
true. More formally we have  
\[
	\begin{array}{rcl}
	r \satisfies \BIC{a}{s}\phi & \mbox{\rm iff} & \mbox{\rm if $ r
	\bullet s \downarrow$ then for all $r'\in R$, if
	$r \bullet s \sim_a r'$, then $r' \satisfies\phi$}
	\end{array}
\]

\item $\BID{a}{s}\phi$ expresses that the agent, $a$, can establish
 the truth of $\phi$ if there exists a resource that can be combined
 with  its local resource, $s$, such that  $a$ judges the  combined
 resource  to be equivalent to the ambient resource, $r$.   
 
 \medskip 
 
In other words $\BID{a}{s}\phi$ is true relative to the ambient
resource, $r$, iff for $a$'s views, the ambient resource is the
combination of the local resource, $s$, with another resource that
makes $\phi$ true. More formally we have  
\[
\begin{array}{rcl}
r \satisfies \BID{a}{s}\phi & \mbox{\rm iff} & \mbox{\rm there
exists $r' \in R$ such that $r' \bullet s \downarrow$ and 
$r \sim_a r' \bullet s$ and $r'  \bullet s \satisfies \phi$}
\end{array}
\]
\end{enumerate}

\ERL\ can thus be seen as a particular epistemic logic that provides
new modalities which model access to resources, whether they are
interpreted as pieces of knowledge, locations, or otherwise. \\

Note that we could obtain operators with similar semantics by taking 
the epistemic separation logic \ESL\ \cite{Cour15b} and adding it the 
hybrid operators of the hybrid logic HyBBI \cite{Broth14}.
Such a new logical framework would allow us to use symbols, called
nominals, that force a  formula to be valid for a specific
resource. Namely, if we consider a nominal $n_s$ forcing the resource
$s$, we then could define the modality $\BIC{a}{s}\phi$ by
$\BIC{a}{s}\phi \equiv n_s \BImimp \BIK{a}\phi$ and we recover the
semantics given in this section for this modality. Moreover, we could
also define the modality $\BID{a}{s}\phi$ by $\BID{a}{s}\phi \equiv
\BIKt{a}  ((\top \BImast n_s) \BIaand \phi)$. Observations  like this
are quite common for logics of the kinds considered heren but our view
is that conceptual clarity, rather than syntactic ingenuity, should
drive the design choices.  

This hybrid approach based on nominals represents a significant
technical  addition to our semantic assumptions that is not justified
by the motivations of resource semantics, adding a confusion between 
resources and propositions that we consider to be inconvenient for our
intended modelling applications. Moreover, we would argue that the
identities between the modalities that are induced obscures rather
than elucidates their meaning --- although we would concede that the
identities may be of use in mechanical implementations --- and leads
to a less elegant analysis. Furthermore, working with the hybrid
semantics requires additional work in setting the tableaux-based
metatheory for the logic, as discussed in Section~\ref{sec:tableaux}. 

It therefore seems appropriate to add the epistemic operators
systematically in a clean semantic setting. 

\begin{definition}[Satisfaction and validity] \label{def:sat-valid}
Let  $\mathcal{M}=(\mathcal{R},\{\sim_a\}_{a \in A}, V)$ be a
model. The satisfaction relation $\satisfies \subseteq R\times\lang$
is defined, for all $r \in R$, as follows:
\[
{ \begin{array}{rcl}
r \satisfies \Atom{p}               & \mbox{\rm iff} &
\mbox{$r \in V(\Atom{p})$}  \\
r \satisfies \BIabot                 & \mbox{\rm never} &  \\
r \satisfies\BIatop                  & \mbox{\rm always} &  \\
r \satisfies \BIaneg \phi         & \mbox{\rm iff} &
\mbox{\rm $r\not\satisfies \phi$} \\
\end{array} \qquad
\begin{array}{rcl}
r \satisfies \phi \BIaor \psi     & \, \mbox{\rm iff} \, &
	\mbox{\rm $r\satisfies\phi$ or $r\satisfies\psi$} \\
r \satisfies \phi \BIaand \psi  & \, \mbox{\rm iff} \, & \mbox{\rm
 	$r\satisfies\phi$ and $r\satisfies\psi$} \\
r \satisfies \phi \BIaimp \psi  & \, \mbox{\rm iff} \, & \mbox{\rm if $r
 \satisfies \phi$, then $r\satisfies\psi$} \\
\end{array}}
\]
\[
{\begin{array}{rcl}
r \satisfies \BImI                   & \mbox{\rm iff} &
\mbox{\rm $r = e$} \\
r \satisfies \phi \BImast \psi  & \mbox{\rm iff} & \mbox{\rm there
 exist $r_1 , r_2 \in R$ s.t.     $r_1\bullet r_2\downarrow$,
 $r_1\bullet r_2 = r$, and } \mbox{\rm $r_1\satisfies \phi$ and $r_2
 \satisfies \psi$} \\
r \satisfies \phi \BImimp \psi & \mbox{\rm iff} & \mbox{\rm for all
 $r' \in R$, if $r\bullet
   r' \downarrow$ and $r' \satisfies\phi$, then $r \bullet r'
       \satisfies \psi$} \\
   & & \\
r \satisfies \BIC{a}{s}\phi & \mbox{\rm iff} & \mbox{\rm if $ r \bullet s
                                              \downarrow$ then for all
 $r'\in R$, if $r \bullet s \sim_a r'$, then $r' \satisfies\phi$} \\
r \satisfies \BID{a}{s}\phi & \mbox{\rm iff} & \mbox{\rm there
exists $r' \in R$ such that $r' \bullet s \downarrow$ and 
$r \sim_a r' \bullet s$ and $r'  \bullet s \satisfies \phi$.} \\
\end{array}}
\]
A formula $\phi$ is \emph{valid}, denoted $\valid\phi$, if and only if,
for any model $\mathcal{W}$ and any resource $r$, we have $r
\satisfies \phi$.
\end{definition}
 
\begin{proposition}[Satisfaction for the secondary modalities] 
	\label{prop:sat-secondary-mods} 
Let  $\mathcal{M}=(\mathcal{R},\{\sim_a\}_{a\in A}, V)$ be a model,
and let $r\in R$. The following statements hold: 
\begin{enumerate}
\item $r \satisfies \BICt{a}{s} \phi$ iff if $r \bullet s \downarrow$
  then there exists $r' \in R$ such that  $r \bullet s \sim_a r'$ and
  $r' \satisfies \phi$;   
\item $r \satisfies \BIDt{a}{s}\phi$ iff for all $r'\in R$, if $r'
\bullet s  \downarrow$ and $r\sim_a r' \bullet s$,  then $r' \bullet s
\satisfies\phi$. 
\end{enumerate}
\end{proposition}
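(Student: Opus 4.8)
The plan is to treat both identities as purely definitional dualizations and to discharge them by unfolding the satisfaction clauses of Definition~\ref{def:sat-valid} under classical propositional reasoning. The two derived connectives are abbreviations, $\BICt{a}{s}\phi \equiv \BIaneg\BIC{a}{s}\BIaneg\phi$ and $\BIDt{a}{s}\phi \equiv \BIaneg\BID{a}{s}\BIaneg\phi$, so the entire argument rests on the fact that the additive negation is Boolean, i.e. $r\satisfies\BIaneg\chi$ iff $r\not\satisfies\chi$ for every formula $\chi$ and every $r\in R$. This yields in particular double-negation elimination, $r'\not\satisfies\BIaneg\phi$ iff $r'\satisfies\phi$, which is exactly what cancels the doubled negations produced by the abbreviation together with the De~Morgan steps.

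For part~(1) I would begin with $r\satisfies\BICt{a}{s}\phi$ iff $r\not\satisfies\BIC{a}{s}\BIaneg\phi$ and then substitute the clause for $\BIC{a}{s}$, which reads: if $r\bullet s\downarrow$ then for all $r'\in R$, if $r\bullet s\sim_a r'$ then $r'\satisfies\BIaneg\phi$. Negating this classically, the outer implication is refuted precisely when its antecedent $r\bullet s\downarrow$ holds while its consequent fails; the consequent, being a universal, fails precisely when some $r'$ satisfies both $r\bullet s\sim_a r'$ and $r'\not\satisfies\BIaneg\phi$. Double-negation elimination rewrites the last conjunct as $r'\satisfies\phi$, leaving the condition that $r\bullet s\downarrow$ together with the existence of an $r'\in R$ with $r\bullet s\sim_a r'$ and $r'\satisfies\phi$, which is the content of~(1).

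For part~(2) I would proceed symmetrically from $r\satisfies\BIDt{a}{s}\phi$ iff $r\not\satisfies\BID{a}{s}\BIaneg\phi$, substituting the clause for $\BID{a}{s}$: there exists $r'\in R$ with $r'\bullet s\downarrow$, $r\sim_a r'\bullet s$ and $r'\bullet s\satisfies\BIaneg\phi$. Here the negation turns the existential into a universal and, by De~Morgan on the body, pushes the definedness and equivalence conditions into the antecedent of an implication while negating the final conjunct; double-negation elimination then replaces $r'\bullet s\not\satisfies\BIaneg\phi$ by $r'\bullet s\satisfies\phi$, giving exactly the condition in~(2): for all $r'\in R$, if $r'\bullet s\downarrow$ and $r\sim_a r'\bullet s$ then $r'\bullet s\satisfies\phi$.

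I expect the only delicate point to be the bookkeeping of the definedness guard $r\bullet s\downarrow$ under negation in part~(1). One must track that, because $\sim_a$ is a relation on $R$, the atomic condition $r\bullet s\sim_a r'$ already presupposes $r\bullet s\in R$, so that negating the outer implication contributes the definedness premise as a genuine side-condition rather than merely vacuously; checking carefully how the guard survives the negation, and in particular what happens in the case $r\bullet s$ undefined, is where care is needed. Everything else reduces to the standard propositional equivalences $\neg(P\to Q)\equiv P\wedge\neg Q$ and $\neg(P\wedge Q)\equiv P\to\neg Q$, the De~Morgan laws for quantifiers, and double-negation elimination, all of which are licensed by the Boolean reading of the additives.
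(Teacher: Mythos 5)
Your proposal is correct and follows essentially the same route as the paper's own proof: unfold the abbreviation $\BICt{a}{s}\phi \equiv \BIaneg\BIC{a}{s}\BIaneg\phi$, apply the Boolean clause for $\BIaneg$, and push the negation through the satisfaction clause using De~Morgan and double-negation elimination. If anything, you are more attentive than the paper's one-line chain of equivalences to how the definedness guard $r \bullet s \downarrow$ behaves under negation (your unfolding correctly produces the guard as a conjunct, a point the paper's proof silently elides); otherwise the two arguments coincide.
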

\begin{proof}

Consider the first part, 1. 
$\BICt{a}{s} \phi \equiv \BIaneg \BIC{a}{s}
  \BIaneg\phi$, so $r \satisfies \BICt{a}{s} \phi$ \mbox{\rm iff} $r
  \satisfies \BIaneg \BIC{a}{s} \BIaneg\phi$  iff $r \not\satisfies \BIC{a}{s}
          \BIaneg\phi$ iff there exists $r' \in R$ s.t. $r
          \bullet s \sim_a r'$ and $r' \not\satisfies \BIaneg\phi$ 
	iff there exists $r'\in R$ s.t. $r \bullet s \sim_a r'$ and
        $r' \satisfies \phi$.   
Proof of 2 is similar. 
\end{proof}

More intuitively,  we can see that $\BICt{a}{s}\phi$ expresses that the
agent, $a$, can establish the truth of $\phi$ if there exists a
resource such that the combination of the ambient resource, $r$, and
the local resource, $s$, is judged by $a$ to be equivalent to that
resource. Similarly, $\BIDt{a}{s}\phi$ expresses that the agent, $a$,
can establish the truth of $\phi$ using a resource that is the
combination of its local resource, $s$, with any resource such that
$a$ judges the combined resource to be equivalent to the ambient
resource, $r$.  We shall see later that these dual modalities can be also
useful for modelling systems.   

Returning to the possible representation of the modalities in an
hybrid version of \ESL, we could then define these modalities as
follows:  $\BICt{a}{s}\phi \equiv (\top \BImast n_s) \BIaand \BIKt{a}\phi$  
and  $\BIDt{a}{s}\phi \equiv \BIK{a} ((\top \BImast n_s) \BIaimp \phi)$, 
with  $n_s$ being a nominal forcing the resource $s$. As we have previously 
explained, here we aim at avoiding confusion between resources (which are 
part of the model) and propositions (which are part of the language) that 
we consider to be inconvenient for our intended modelling applications.

Note that the first point of the definition of $\bullet$, in
Definition~\ref{def_prm}, implies that the three other definitions
(neutral element, commutativity, and associativity) extend to
$\cdot$, so that the following are semantically equivalent  (i.e.,
every valid formula in the one is valid in the other) for any agent
$a$ and any  resources $r$, $s$, and $t$:  $\BIC{a}{re} \phi$ $\equiv$
$\BIC{a}{r} \phi$, $\BIC{a}{rs}$ $\equiv$ $\BIC{a}{sr}$, and
$\BIC{a}{r(st)}$  $\equiv$ $\BIC{a}{(rs)t}$. Of course, such 
equivalences also hold for $\BID{}{}\phi$, $\BICt{}{}\phi$, and
$\BIDt{}{}\phi$.

\section{Some properties of \ERL} \label{sec:properties}

We show that  \ERL\ is a conservative extension of Boolean \BI\ (\BBI)
and Epistemic Logic (\EL) and that, in the presence of additional
properties of the partial resource monoid (Definition \ref{def_prm}),
there are some noteworthy relationships between modalities.

We consider two fragments of \ERL. First, ${\rm \ERL}_{\rm BBI}$ ---
corresponding to BBI \cite{OHea01a} --- with $A=\emptyset$ on the
language $\mathcal{L}_{\mid BBI}$  defined as $\lang$ excluding the
$\BIC{a}{s}$ and $\BID{a}{s}$ operators. Second, ${\rm \ERL}_{\rm EL}$
--- corresponding to the epistemic logic EL consisting of classical
propositional additives and the basic epistemic operator $\EK{a}$
\cite{EpiHandbook} ---  with $Res=\{e\}$, on the language
$\mathcal{L}_{\mid EL}$ defined as $\lang$ excluding  $\BImI$,
$\BImast$, and $\BImimp$ and with $\BIC{a}{s}$ and $\BID{a}{s}$,
replaced  by the operator $\EK{a}$, which is defined, for all agents
$a$, by $\EK{a}\phi = \BIC{a}{e} \phi = \BID{a}{e} \phi$.  

\begin{proposition}[\ERL\ is a conservative extension of BBI and
  EL] \label{th_ral_extension} If, in every model of BBI, the neutral
  element of the composition is the element $e$ of $Res$,  then {\rm
    \ERL}$_{\rm BBI}$ is semantically equivalent to Boolean BI
  (BBI). If the agent sets  are the same for the two languages, {\rm
    \ERL}$_{\rm EL}$ is semantically equivalent to the epistemic logic EL.  
\end{proposition}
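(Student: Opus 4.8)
The plan is to prove each of the two claims separately, since they concern disjoint restrictions of the structures. In both cases the strategy is the same: show that the models of \ERL\ (restricted appropriately) and the models of the target logic (BBI or EL) are in bijective correspondence in a way that preserves the satisfaction relation on the shared fragment of the language, and hence that validity coincides. Because the language fragments $\mathcal{L}_{\mid BBI}$ and $\mathcal{L}_{\mid EL}$ are defined by syntactic restriction of $\lang$ and the satisfaction clauses in Definition~\ref{def:sat-valid} for the shared connectives are literally the standard BBI (respectively EL) clauses, the truth-value of any formula in the fragment depends only on the corresponding reduct of the model.

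For the BBI part, I would first observe that an \ERL\ model with $A=\emptyset$ is exactly a pair $(\mathcal{R},V)$ where $\mathcal{R}=(R,\bullet)$ is a PRM and $V:\Prop\to\wp(R)$, since the family $\{\sim_a\}_{a\in A}$ is empty and carries no data. The content of Definition~\ref{def_prm} --- that $\bullet$ is a commutative, associative partial operation with unit $e$ --- is precisely the partial monoid structure underlying the standard partial-monoid semantics of BBI recalled in the introduction (with the order collapsed to equality). The hypothesis ``in every model of BBI the neutral element is the element $e$ of $Res$'' is exactly what aligns the distinguished unit of the BBI monoid with the $e$ guaranteed by $Res\subseteq R$. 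I would then argue by a routine structural induction on $\phi\in\mathcal{L}_{\mid BBI}$ that $r\satisfies\phi$ in the \ERL\ model iff $r\models\phi$ in the corresponding BBI model, comparing the clauses for $\Atom{p},\BIabot,\BIatop,\BIaneg,\BIaor,\BIaand,\BIaimp,\BImI,\BImast,\BImimp$ one by one; each is identical by inspection. Consequently $\valid\phi$ in ${\rm \ERL}_{\rm BBI}$ iff $\phi$ is BBI-valid.

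For the EL part, I would set $Res=\{e\}$ and use the definition $\EK{a}\phi=\BIC{a}{e}\phi=\BID{a}{e}\phi$. The key computation is to verify that under $Res=\{e\}$ the clauses for $\BIC{a}{e}$ and $\BID{a}{e}$ both collapse to the usual epistemic clause $r\satisfies\EK{a}\phi$ iff for all $r'$ with $r\sim_a r'$ we have $r'\satisfies\phi$. For $\BIC{a}{e}$ this follows because $r\bullet e\downarrow$ always and $r\bullet e=r$, so the guard is trivially met and the condition becomes ``for all $r'$, if $r\sim_a r'$ then $r'\satisfies\phi$''. For $\BID{a}{e}$ one uses reflexivity and the equivalence-relation properties of $\sim_a$; this is the only step requiring a small argument and is where the claimed equality $\BIC{a}{e}\phi=\BID{a}{e}\phi$ must be justified carefully. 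An \ERL\ model with $Res=\{e\}$ then reduces to a Kripke frame $(R,\{\sim_a\}_{a\in A},V)$ with each $\sim_a$ an equivalence, i.e.\ exactly an EL (S5 multi-agent) model, and a structural induction on $\phi\in\mathcal{L}_{\mid EL}$ transfers satisfaction in both directions, giving the claimed semantic equivalence.

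The main obstacle is the $\BID{a}{e}$ collapse: one must check that the existential clause for $\BID{}{}$, which quantifies over a witness $r'$ with $r\sim_a r'\bullet e$ and evaluates $\phi$ at $r'\bullet e=r'$, genuinely yields the same modality as the universal $\BIC{}{}$ clause, and this uses that $\sim_a$ is an equivalence (in particular reflexivity and symmetry) rather than an arbitrary relation. Everything else is bookkeeping: confirming that the restricted model classes coincide with the standard BBI and EL model classes, and that the shared satisfaction clauses match verbatim, so that the inductions go through without incident.
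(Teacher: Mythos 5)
The paper states this proposition without giving a proof, so there is nothing to compare against directly; judged on its own terms, your BBI half is fine, but the EL half has a genuine gap at precisely the step you flag as needing ``a small argument''. Under $Res=\{e\}$ the clause for $\BIC{a}{e}$ does collapse to the universal epistemic clause, since $r\bullet e\downarrow$ and $r\bullet e=r$ always hold. But the clause for $\BID{a}{e}$ collapses to ``there exists $r'$ with $r\sim_a r'$ and $r'\satisfies\phi$'', which is the \emph{existential} dual $\BIaneg\EK{a}\BIaneg$, not $\EK{a}$ itself. No appeal to reflexivity, symmetry, or transitivity of $\sim_a$ can make an existential and a universal quantification over the same equivalence class coincide: take $R=\{e,r_1\}$ with $e$ the unit, $\sim_a$ the total relation, and $V(p)=\{e\}$; then $e\satisfies\BID{a}{e}\,p$ (witness $r'=e$) while $e\not\satisfies\BIC{a}{e}\,p$ (since $e\sim_a r_1$ and $r_1\not\satisfies p$). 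So the identity $\BIC{a}{e}\phi=\BID{a}{e}\phi$ that you propose to ``justify carefully'' is false; it is, admittedly, inherited from the paper's own formulation of ${\rm \ERL}_{\rm EL}$, but no proof can establish it.

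The repair is mild and does not threaten conservativity. The correct identifications are $\EK{a}\phi\equiv\BIC{a}{e}\phi\equiv\BIDt{a}{e}\phi$ (cf.\ item~7 of Lemma~\ref{lem:eq}, whose proof for the unit resource needs none of the \ERLast\ compatibility assumption), and dually $\BID{a}{e}\phi\equiv\BIaneg\EK{a}\BIaneg\phi$. Since EL is closed under negation, both $\BIC{a}{e}$ and $\BID{a}{e}$ remain interdefinable with $\EK{a}$; the translation underlying your induction should therefore send $\BIC{a}{e}$ to $\EK{a}$ and $\BID{a}{e}$ to $\BIaneg\EK{a}\BIaneg$. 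With that correction, the bijection between \ERL\ models with $Res=\{e\}$ and multi-agent S5 models, and the structural induction transferring satisfaction, go through exactly as you describe, as does the whole of your BBI argument.
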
 

We now consider some properties of \ERL; specifically, the way in
which the different operators  behave when they are used together in
formulae. One interesting property we might require in  our semantics,
which is based  on monoidal structure, is the compatibility of
$\sim_a$ and  $\bullet$. More precisely,  we might require that if two
resources are equivalent for an agent  $a$, then the  composition with
a third resource be transferred through this equivalence.    

Although such a property can be very useful, it introduces, from the
modelling perspective,  some quite strong properties: the transmission
of properties of resources through agent-dependent equivalence is a
strong assertion regarding  agents' private accesses, and should be
avoided when modelling some security  properties.  

Considering these concerns, we take this extra property to be
optional, and identify it in a sublogic of \ERL\  which we call
\ERLast.  

\begin{definition} \label{def_erl*}
The logic \ERLast\ is defined as \ERL\ with the addition of the
following property to the partial resource monoid (Definition
\ref{def_prm}): \\
For any agent $a$ and any resources $r,r' \in R$, if $r \bullet s 
\downarrow$ and  $r \sim_a r'$, then $r' \bullet s \downarrow$ and $r
\bullet s \sim_a r' \bullet s$. 
It is called the compatibility of $\sim_a$ with $\bullet$.
\end{definition}

Note that we use the logic \ERLast\ in the security modelling examples 
that we develop in the next section. 

\begin{lemma} \label{lem:eq} 
Let $a \in A$ be an agent, $s,t \in Res$ be resources and $\phi$ be a
formula of \ERLast.  
We have the following properties: \\  
\parbox{4.0cm}{\begin{itemize}
\item[\rm 1.] $\BIC{a}{s}(\BIC{a}{t}\phi)\equiv \BIC{a}{st}\phi$
\item[\rm 2.] $\BID{a}{s}(\BID{a}{t}\phi)\BIaimp \BID{a}{t}\phi$
\item[\rm 3.] $\BIC{a}{s}\phi  \BIaimp \BIDt{a}{t}(\BIC{a}{s}\phi)$
\end{itemize}} 
\parbox{4.0cm}{\begin{itemize}
\item[\rm 4.] $\BID{a}{t}(\BICt{a}{s}\phi) \BIaimp \BICt{a}{s}\phi$.
\item[\rm 5.] $\BICt{a}{t}(\BICt{a}{s}\phi)\equiv \BICt{a}{ts}\phi$
\end{itemize}} \quad 
\parbox{4cm}{\begin{itemize}
\item[\rm 6.] $\BIDt{a}{s}\phi \BIaimp \BIDt{a}{t}(\BIDt{a}{s}\phi)$
\item[\rm 7.] $\BIC{a}{e}\phi\equiv\BIDt{a}{e}\phi$
\end{itemize}}
\end{lemma}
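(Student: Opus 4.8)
The plan is to fix an arbitrary model $\mathcal{M}=(\mathcal{R},\{\sim_a\}_{a\in A},V)$ and an arbitrary resource $r\in R$, and for each of the seven items to unfold both sides using the satisfaction clauses of Definition~\ref{def:sat-valid} together with Proposition~\ref{prop:sat-secondary-mods} for the $\BICt{}{}$ and $\BIDt{}{}$ cases, then chase the quantifiers. The only ingredients available are the monoid laws of Definition~\ref{def_prm} (associativity, used to rewrite $r\bullet(s\bullet t)$ as $(r\bullet s)\bullet t$, and the neutral-element law), the equivalence properties of each $\sim_a$ (reflexivity, symmetry, transitivity), and the compatibility property of Definition~\ref{def_erl*}. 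I would first observe that items 2, 6 and 7 need only the structural properties, whereas items 1, 3, 4 and 5 are where compatibility does the work.

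For the structural items: item 7 is immediate, since $r\bullet e=r$ and $r\bullet e\downarrow$ always, so both $r\satisfies\BIC{a}{e}\phi$ and $r\satisfies\BIDt{a}{e}\phi$ collapse to the plain epistemic condition ``$r'\satisfies\phi$ for every $r'$ with $r\sim_a r'$''. Items 2 and 6 follow purely by transitivity of $\sim_a$: for item 2 the inner existential witness $r''$ (with $r''\bullet t\downarrow$, $r'\bullet s\sim_a r''\bullet t$, $r''\bullet t\satisfies\phi$) already serves as the witness for $\BID{a}{t}\phi$ once the chain $r\sim_a r'\bullet s\sim_a r''\bullet t$ is contracted; item 6 is the dual, instantiating the outer universal and using the same chain of equivalences. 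These three hold even without the compatibility assumption.

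For the composition laws 1 and 5, I would split each into its two implications. One direction of each (forward for item 1, backward for item 5) uses only reflexivity: one instantiates the inner judgement at the composite resource itself, for example taking $r\bullet s\sim_a r\bullet s$ in item 1 so that $r\bullet s\satisfies\BIC{a}{t}\phi$ and then reading off $\phi$ at the target via associativity. The converse direction is the crux: from the equivalence relating $r\bullet s$ (for item 1; correspondingly $r\bullet t$ for item 5) to the witness $r'$, one must transport the remaining composition $\bullet t$ (resp.\ $\bullet s$) across the equivalence, and this is exactly the compatibility property of \ERLast. Items 3 and 4 are single implications of the same flavour: each requires transferring a $\BIC{a}{s}$ (resp.\ $\BICt{a}{s}$) judgement across an agent-equivalence of the form $r\sim_a r'\bullet t$, again licensed only by compatibility.

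The main obstacle I anticipate is applying compatibility in the correct direction. As stated, it requires definedness of one composite (say $r\bullet s\downarrow$) in order to conclude definedness of the other ($r'\bullet s\downarrow$) together with $r\bullet s\sim_a r'\bullet s$; so in each proof one must arrange to have the right composite already known to be defined before invoking it, sometimes using symmetry of $\sim_a$ to orient the equivalence so that the known-defined side matches the hypothesis of the property (this orientation is needed in item 3 but not in item 4). Keeping the partiality bookkeeping straight --- noting, for instance, that $r\bullet(s\bullet t)\downarrow$ forces $r\bullet s\downarrow$ by associativity --- and then combining the resulting $r\bullet s\sim_a r'\bullet s$ with the given equivalences by transitivity is the only genuinely delicate part; the remainder is routine quantifier chasing.
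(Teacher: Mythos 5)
Your proposal is correct and follows essentially the same route as the paper: unfold the satisfaction clauses, use reflexivity to collapse the nested modality in one direction of the composition laws, transitivity alone for items 2 and 6, the unit law for item 7, and the \ERLast\ compatibility property (suitably oriented via symmetry, with the definedness bookkeeping you describe) to transport the outstanding composition across the agent equivalence in the remaining directions. The paper works out items 1 and 6 in detail and declares the rest similar; your finer classification of exactly which items require compatibility and where the orientation matters is accurate and consistent with its argument.
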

\begin{proof}
First consider 1. Let $\mathcal{W}$ be a model and $r$ be a
resource. Suppose that $r\satisfies\BIC{a}{s}(\BIC{a}{t}\phi)$. Then
we have $r\bullet s\downarrow$ and, for any $r'\in R$ such that $r
\bullet s \sim_a r'$, we have $r'\satisfies\BIC{a}{t}\phi$. Thus $r
\bullet s\downarrow$ and, for any $r'\in R$ such that $r \bullet
s\sim_a r'$, $r' \bullet t\downarrow$, and for any $r''\in R$ such
that $r'\bullet t \sim_a r''$, we have $r'' \satisfies\phi$. Consider
$r''' \in R$ such that $r\bullet s\bullet t \sim_a r'''$. By
reflexivity, we obtain $r \bullet s\sim_a r \bullet s$. Then with $r'
= r \bullet s$ and $r'' = r'''$, we have $r \bullet s \bullet t
\downarrow$ and $r''' \satisfies \phi$. Thus $r \satisfies
\BIC{a}{st}\phi$, and we can deduce that $\BIC{a}{s}(\BIC{a}{t}\phi)
\BIaimp \BIC{a}{st}\phi$.  

Now suppose that $r\satisfies\BIC{a}{st}\phi$. Then $r\bullet
s\bullet t \downarrow$ and, for any $r'''$ such that $r\bullet s\bullet
t \sim_a r'''$, we have $r''' \satisfies \phi$. As $r \bullet s \bullet t
\downarrow$, we have $r\bullet s \downarrow$. Let $r'\in R$ be such
that $r\bullet s \sim_a r'$. Then, by compatibility, $r'\bullet
t\downarrow$ and $r\bullet s \bullet t\sim_a r'\bullet t$. Let $r''$
be such that $r'\bullet t\sim_a r''$. Then, by transitivity, we have
$r\bullet s \bullet t\sim_a r''$. Then, with $r'''=r''$, we have
$r''\satisfies\phi$. We obtain $r\bullet s\downarrow$ and, for any
$r'\in R$ such that $r\bullet s\sim_a r'$,  $r'\bullet t\downarrow$
and for any $r''\in R$ such that $r'\bullet t\sim_a r''$, we have 
$r'' \satisfies\phi$. Then we have
$r \satisfies\BIC{a}{s}(\BIC{a}{t}\phi)$, and then we can deduce 
$\BIC{a}{st}\phi \BIaimp \BIC{a}{s}(\BIC{a}{t}\phi)$. 
Finally, we have $\BIC{a}{s}(\BIC{a}{t}\phi)\equiv \BIC{a}{st}\phi$. 

Now consider 6. Let $\mathcal{W}$ be a model and $r$ be a resource. 
Suppose that $r\satisfies\BIDt{a}{s}\phi$. Then, for any $r'$ such
that $r' \bullet s\downarrow$ and $r\sim_a r'\bullet s$, we have
$r' \bullet s\satisfies\phi$. Let $r''$ such that $r'' \bullet t
\downarrow$ and $r \sim_a r''\bullet t$ and $r'''$ such that $r'''
\bullet s \downarrow$ and  $r'' \bullet t\sim_a r''' \bullet s$. By
transitivity we deduce that $r\sim_a r''' \bullet s$  and if we fix
$r'=r'''$ we have $r''' \bullet s\satisfies \phi$. As it is true for
any $r'''$ such that $r''' \bullet s \downarrow$ and $r''\bullet
t\sim_a r''' \bullet s$, we have $r'' \satisfies\BIDt{a}{s}\phi$. As
it is true that, for any $r''$ such that $r'' \bullet t \downarrow$
and $r \sim_a r'' \bullet t$, we have $r
\satisfies\BIDt{a}{t}(\BIDt{a}{s}\phi)$, then for any resource $r$ in
any model $\mathcal{W}$, $\BIDt{a}{s}\phi \BIaimp
\BIDt{a}{t}(\BIDt{a}{s}\phi)$ is valid.  

\medskip

Note that the reverse implication, $\BIDt{a}{t}(\BIDt{a}{s}\phi)
\BIaimp \BIDt{a}{s}\phi$, is not valid. In fact, if $r \satisfies
\BIDt{a}{t}(\BIDt{a}{s}\phi)$, $\phi$ is validated by all $r'' \bullet
s$ such that $r \sim_a r' \bullet t$ and $r' \bullet t\sim _a r''
\bullet s$. But to have $r \satisfies \BIDt{a}{s}\phi$, we must have
$r''' \bullet s \satisfies\phi$ for all $r'''$ such that $r\sim_a r'''
\bullet s$, and not only for those for which the equivalence by
$\sim_a$ is built from $t$. Then there is no equivalence between
$\BIDt{a}{s}\phi$ and $\BIDt{a}{t}(\BIDt{a}{s}\phi)$. 

\medskip 

All of the other cases are proved in similar ways. 
\end{proof}

We can complete our language with another modality  $\BIE{a}{s}\phi$
that could be also helpful for our modelling perspectives. From this
modality, that is a variant of $\BIC{a}{s} \phi$, we can also derive
$\BIEt{a}{s} \phi$ such that $\BIEt{a}{s} \phi \equiv  \BIaneg
\BIE{a}{s} \BIaneg\phi$. 
 
$\BIE{a}{s}\phi$ expresses that the agent, $a$, can establish
 the truth of $\phi$ using any resource combined with its local
 resource, $s$, provided $a$ judges that combination  to be
 equivalent to the combination of the local resource , $s$, with the
 ambient resource, $r$. 
 In other words $\BIE{a}{s}\phi$ is true relative to the ambient
resource $r$ iff for $a$'s views of the combination of the ambient
resource $r$ and its local resource $s$, $\phi$ is true. More formally
we have:   
\[
\begin{array}{rcl}

r \satisfies \BIE{a}{s} \phi & \mbox{\rm iff} & \mbox{\rm if $r \bullet s
\downarrow$ then for all $r' \in R$ s.t. $r' \bullet s \downarrow$ if
   $r \bullet s \sim_a r' \bullet s$, then $r' \bullet s \satisfies
       \phi$.}
\end{array}
\]

We can built $\BIE{a}{s} \phi$ from the previous main modalities as
follows.

\begin{proposition}
We have $\BIE{a}{s} \phi \equiv \BIC{a}{s}(\BIDt{a}{s} \phi)$.
\end{proposition}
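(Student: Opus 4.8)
The plan is to unfold both sides of the equivalence to their underlying semantic conditions and then match the quantifier structure, using only reflexivity and transitivity of the relations $\sim_a$ (which hold since each $\sim_a$ is an equivalence relation; note that no compatibility property is required, so the result already holds in \ERL, not merely in \ERLast).

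First I would unfold the right-hand side. By the satisfaction clause for $\BIC{a}{s}$, we have $r \satisfies \BIC{a}{s}(\BIDt{a}{s}\phi)$ iff, whenever $r \bullet s \downarrow$, every $w \in R$ with $r \bullet s \sim_a w$ satisfies $\BIDt{a}{s}\phi$; and by Proposition~\ref{prop:sat-secondary-mods}(2), $w \satisfies \BIDt{a}{s}\phi$ means that for all $r' \in R$ with $r' \bullet s \downarrow$ and $w \sim_a r' \bullet s$ we have $r' \bullet s \satisfies \phi$. Putting these together, $r \satisfies \BIC{a}{s}(\BIDt{a}{s}\phi)$ holds iff: if $r \bullet s \downarrow$, then for all $w$ and all $r'$ with $r \bullet s \sim_a w$, $r' \bullet s \downarrow$, and $w \sim_a r' \bullet s$, we have $r' \bullet s \satisfies \phi$. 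This is to be compared with the clause for $\BIE{a}{s}$, which requires only: if $r \bullet s \downarrow$, then for all $r'$ with $r' \bullet s \downarrow$ and $r \bullet s \sim_a r' \bullet s$, we have $r' \bullet s \satisfies \phi$.

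For the left-to-right direction, suppose $r \satisfies \BIE{a}{s}\phi$ and assume $r \bullet s \downarrow$. Given $w$ and $r'$ witnessing the right-hand condition, i.e. with $r \bullet s \sim_a w$, $r' \bullet s \downarrow$, and $w \sim_a r' \bullet s$, transitivity of $\sim_a$ yields $r \bullet s \sim_a r' \bullet s$; the hypothesis then gives $r' \bullet s \satisfies \phi$, as required. For the right-to-left direction, suppose $r \satisfies \BIC{a}{s}(\BIDt{a}{s}\phi)$, assume $r \bullet s \downarrow$, and take any $r'$ with $r' \bullet s \downarrow$ and $r \bullet s \sim_a r' \bullet s$. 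Here I would instantiate the universally quantified intermediate world by the diagonal choice $w = r \bullet s$: reflexivity of $\sim_a$ gives $r \bullet s \sim_a r \bullet s$, so $w = r \bullet s$ satisfies $\BIDt{a}{s}\phi$; unfolding this at the witness $r'$ (using $r' \bullet s \downarrow$ and $r \bullet s \sim_a r' \bullet s$) delivers $r' \bullet s \satisfies \phi$.

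The argument is entirely a matter of bookkeeping the alternation of quantifiers, so the only place requiring care is matching the intermediate world $w$ introduced by the outer $\BIC{a}{s}$ with the distinguished world $r \bullet s$ appearing in the $\BIE{a}{s}$ clause. The left-to-right direction collapses $w$ away by transitivity, and the right-to-left direction recovers it by the reflexive choice $w = r \bullet s$; I expect this diagonal instantiation to be the only non-mechanical step.
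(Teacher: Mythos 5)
Your proof is correct and follows essentially the same route as the paper's: unfold both modalities to their semantic clauses and match the quantifier structure, collapsing the intermediate world by transitivity in one direction and recovering it by the reflexive (diagonal) instantiation in the other. If anything, you are more careful than the paper, which tracks neither the definedness side-conditions nor the fact that reflexivity (not only transitivity) is needed for the key step.
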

\begin{proof}
Consider that  $r \satisfies\BIC{a}{s}(\BIDt{a}{s} \phi)$ iff, for
all $r' \in R$, if $r \bullet s \sim_a r'$, then $r' \satisfies
\BIDt{a}{s} \phi$ iff, for all $r' \in R$, if $r \bullet s \sim_a r'$,
then, for all $r'' \in R$, if $r' \sim_a r'' \bullet s$, then $r''
\bullet s \satisfies \phi$ iff, for all $r',r''\in R$, if $r  \bullet
s \sim_a r'$  and $r' \sim_a r'' \bullet s$, then  $r'' \bullet s
\satisfies \phi$  iff (by the transitivity of $\sim_a$), for all $r''
\in R$, if  $r \bullet s \sim_a r'' \bullet s$, then $r'' \bullet s
\satisfies \phi$ iff  $r \satisfies \BIE{a}{s} \phi$. 
\end{proof}

\section{Modelling access control with the logic \ERLast} 
\label{sec:modelling} 

In this section, we illustrate how to use \ERL, and its special
sublogic \ERLast, in modelling access control situations.  

Security policies, such as those for access control, are often 
formulated separately from the architectural context in which 
they are intended to be applied. This can lead to the existence 
of vulnerabilities. Specifically, when a particular security policy 
is applied to a particular system, the security properties of the 
resulting system may not be as intended. 

We aim to illustrate that the new operators $\BIC{a}{s}$ and
$\BID{a}{s}$ are appropriate for modelling situations where the access
to resources (whether they are locations or pieces of data) is
central. Indeed, both operators can be used to specify (in a 
slight different flavour) whether a resource verifies a property in
agent's $a$ perspective, granted that the local resource $s$ is
present. 

Before developing our examples, we recall that there exists a body of 
work based on Linear Logic (LL) and multiset rewriting for modelling
some access control problems in specific situations. For example,
multiset rewriting has been used to characterize security protocols
\cite{Cer01}. Our aim here, however, is to provide a more general
framework that can be a modelling tool in many situations rather than
be an ad hoc creation specific to a context. Even if such a framework
based on Linear Logic and modalities for authorization and knowledge 
exists \cite{Garg06}, we consider the differences between LL and BBI
that make the later a more convenient tool for modelling. Both are
able to model aspects of the properties of resources, but in LL
propositions represent resources while in \BBI\  (and, indeed, in BI)
propositions represent properties of resources that can be expressed
within the Kripke structures supporting  resource semantics. LL
focuses on the production and consumption --- essentially counting ---
of resources while \BBI\ focuses on separation and sharing of
properties on resources. Modal extensions of \BBI\ extend this view
to incorporate the production and consumption of resources via the
effects of actions in action modalities \cite{Cour15a,CGP15}.  

Because --- as explained in the introduction and in a substantial body 
of literature \cite{Pym02} --- the semantics of \BBI\ can  be
interpreted as being  a theory of resources and their properties, we
can directly use resources  as tokens in our modelling of systems
\cite{Pym09a}. Of particular note in this paper is the use of
\emph{local} resources. For example, $s$ in   $r\vDash\BIC{a}{s}$ is
of the same nature, but doesn't have the same role,  as the ambient
resource $r$. This allows a simple integration of new actors of a
system into a modelling using \ERL\ and avoids the creation  of new
formal elements of a more ad hoc nature.

\subsection{Modelling distributed systems} \label{subsec:dist-sys-mod} 

The construction of mathematical models always involves design choices. 
Our approach is guided the approach to modelling distributed systems
articulated in \cite{CMP12,AP16}. This approach builds upon the
observation that, from a slightly abstract yet convenient point of
view, the key structural components of a distributed systems are the
following: 
\begin{itemize}
\item[-] \emph{Locations}. The basic architecture of the system is
  considered to be described by a collection of connected
  places. Mathematically, we need some topological structure, with
  directed graphs be perhaps the most commonly useful set-up.  
\item[-] \emph{Resources}. Resources are situated at the locations
  identified in the system's architecture. They are the components of
  the system that are manipulated --- that is,  consumed, created,
  moved, and so on --- as the system evolves in order to the deliver
  the services that it is intended to provide. Mathematically, we
  take the `resource monoids' adopted in, for example, the 
  semantics of BI, in Separation Logic and, indeed,  in ERL. In the
  intuitionistic versions of these logics, we take a partially ordered
  (or sometimes preordered) partial monoid of resources. As we have
  seen in Section~\ref{sec:introduction}, the monoidal composition
  then captures the combination of resource elements and the ordering
  captures the comparison of resource elements. In the classical 
  versions, we drop the ordering and work just with combination.    
\item[-] \emph{Processes}. The services that a system provides are
  delivered by the execution of processes, during which resources are
  manipulated. Mathematically, in formal generality, we can describe
  processes using an algebraic calculus of processes. In
  \cite{Pym09a}, we have employed a variation  of Milner's basic
  system, SCCS \cite{Mil83}, adapted to capture the interaction with
  resources and locations.   
\end{itemize}
In addition, we require the following concept: 
\begin{itemize}
\item[-] \emph{Environment}. When a system is modelled, it is
  necessary to decide what is its boundary. Things that are outside of
  the boundary are not represented in detail within the
  model. Nevertheless, the model must interact with its
  environment. Mathematically, this can be represented stochastically, 
  using specified probability distributions to capture events at the
  boundary.    
\end{itemize}

The structural components collectively represent the state of a system
and can be used to define a process algebra with an operational
semantics that defines their co-evolution as actions occur
\cite{Pym09a,CMP12,AP16}:
\[
	L , R , E \stackrel{a}{\longrightarrow} L' , R' , E' \, . 	
\]

When building models in this style, it is necessary to set up a notion 
of \emph{signature} for a model. For basic actions $a$ and locations $L$, 
we define an evolution 
\[
	\mu(a , L , R) = (L' , R')
\]
that specifies the effect of $a$ on the resource $R$ at this $L$. We 
call $\mu$ a \emph{modification function}. 

In this setting, there is an associated modal logic with a satisfaction 
relation of the form 
\[
	L , R , E \models \phi , 
\]
which includes both additive and multiplicative action modalities
\cite{Pym09a,CMP12,AP16}. Additive action modalities yield 
formulae of the form $[a] \,\phi$, with a truth condition along the 
following lines: 
\[
\begin{array}{rcl} 
L , R , E \models [a] \, \phi & \;\mbox{iff}\; & \mbox{for all $E
                                                 \goes{a} E'$,\; $L' ,
                                                 R' , E' \models \phi$,} 
\end{array}
\]
where we need the condition, part of the signature of the model, to
the effect that the occurrence of the action $a$ causes the evolution
of $L$ to $L'$ and $R$ to  $R'$ \cite{Pym09a,CMP12,AP16}. The
multiplicative modalities allow actions to carry around local
resources that can be combined with the ambient resource  ---  so we
consider  $L , R , E \models [a]_S \, \phi$ and form $R' \circ S'$ in
the definiens of the satisfaction clause --- to enable the evolution
\cite{Pym09a,CMP12,AP16}.  

The logic is used both to constrain the model, through situation-specific 
logical properties, and to express desired or undesired properties of the 
system that are to be checked. 

In the setting of modelling access control using \ERL, locations,
resources, and  processes can all be represented, although we can make
some simplifications.   
\begin{itemize}
\item[-] \emph{Locations}. The examples we consider implicitly employ
 	location architectures, but they are sufficiently simple that they
  	can also be handled implicitly in the formalization, often
        through the treatment of resources. 
\item[-] \emph{Resources}. The resource elements considered carry the 
	structure of resource monoids, and we make essential use of this in
	the models.  
\item[-] \emph{Processes}. Our examples only deal with the actions
  that are required to instantiation the epistemic
  modalities. Nevertheless,	we provide discussions of how our
  examples can be understood in the location--resource--process context. 
\end{itemize}

In this setting, we elide the modelling of environment: since we are
not seeking to build executable models, this simplification is of
little or no consequence for our present purposes. In these senses, we
are making use of a fairly pure version of resource semantics. 

We employ a range of examples of security modelling using this
approach. We begin, in Section~\ref{subsec:schneier}, with `Schneier's
Gate', which illustrates the policy-architecture gap, and then
consider a core systems-security situations of joint access control,
in Section~\ref{subsec:joint}, and semaphores, in
Section~\ref{subsec:semaphores}.

\subsection{The `Schneier's Gate' problem} \label{subsec:schneier}

Consider the example of `Schneier's gate' \cite{Schneier2005}, wherein
a security system is ineffective because of the existence of a
side-channel that allows a control  to be circumvented. Here a
facility that is intended to be secured is protected by a  barrier
that prevents cars from entering  into the facility. The barrier may
be  controlled by a token --- such as a card, a remote, or a code ---
the holding of which  distinguishes authorized personnel from
intruders. If, however, the barrier itself is  surrounded by ground
that can be  traversed by a vehicle, without any kind of fence or
wall, then any car can drive around it (whether it's with a malicious
intent or just by  laziness of getting through  the security
procedure) and the access control policy, as  implemented by the
barrier and the tokens, is undermined. So, the access control  policy
--- that only authorized personnel, in possession  of a token, may
take vehicles into the facility --- is undermined by the architecture
of the system to which it is applied.    

\begin{figure}
\begin{center}
\includegraphics[scale=0.3]{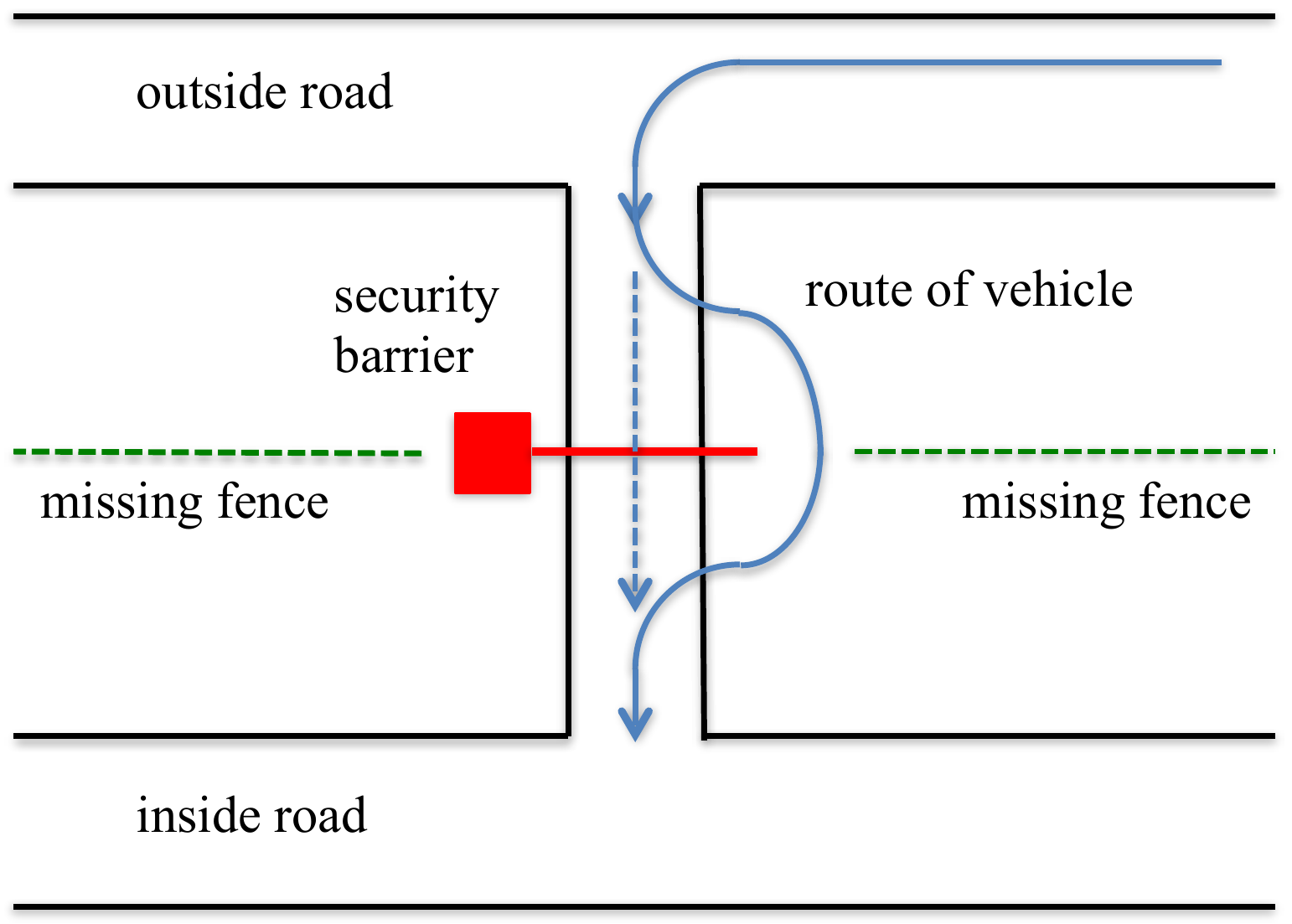}
\end{center}
\caption{A depiction of the `Schneier's gate' problem}
\end{figure} 


We show how \ERLast\ can be used to model, and so reason about, the 
situation described above (following \cite{Schneier2005}), illustrating 
how such situations can be identified by logical analysis. Related
analyses, employing logical models of layered graphs, can be found in 
\cite{CMP14}. 

We follow the approach to distributed systems modelling sketched in 
Section~\ref{subsec:dist-sys-mod} and elaborated in \cite{Pym09a,CMP12,AP16}. 
We start with a simple model, depicted in Figure
\ref{figure_barrier_1}, and gradually refine it. We model just a
facility protected by an access barrier. We will need the following 
key components:
\begin{itemize}
\item \emph{Locations}. We assume, for what is an architecturally 
	simple model, just three locations: \emph{outside} and
        \emph{inside} of the 	area guarded by the barrier, and the
        barrier itself. In this simple setting, there is no need to
        incorporate an explicit representation of locations into our
        model's worlds. 
\item \emph{Resources}. There are just three types of resource:
  vehicles (cars), access tokens, which are required to operate the
  barrier, and a marker for the presence of the barrier. 
\item \emph{Processes}. In this simple setting, we do not need to
  employ the full, quite complex, structure of a process algebra;
  rather, the actions of a logic with action modalities --- in
  particular, the action modalities of \ERLast, with their epistemic
  semantics, will suffice. 
\end{itemize}
In fact, our treatment of resource in this epistemic-logic setting is
a little more subtle. From the modelling perspective, the resources we
have exposed here are diverse in nature:  there is is a material token
(key or card for instance), there are cars, and a just a marker for
the presence and well-functioning of the barrier. This diversity
raises the question of the meaning and value of the unit resource,
$e$.  We finesse this problem by accepting that resources encompass a
variety of different objects, but we can also employ the epistemic
nature of our logic and consider that resources represent not objects
as such but rather the  knowledge that a given object is in our system. 

A vehicle having the appropriate access token should be able to get
inside. We consider the following sets of resources, agents, and
logical properties of resources/system states: 
\[
Res = \{ e , b , t , c \}, \ A = \{ \alpha \}, \ \Prop = \{ O , J
\}. 
\]  
Here we have the following: 
\begin{itemize}
\item the atomic propositions $O$ and $J$, respectively, express the
  state of being \emph{outside} and \emph{inside} the facility --- we
  use $J$ instead of $I$ to avoid confusion with $\BImI$, the unit
  operator;  
\item a resource element $b$ is taken as a marker for the presence and
  well-functioning of the barrier;  
\item a token, required to operate the barrier, is denoted by a
  resource element $t$ and vehicles (cars) are denoted by resource
  elements $c$, $c'$, etc.;
\item for simplicity we are assuming that all resource elements are of
  the same sort; that is, are elements of the same resource monoid;
  this will cause no formal difficulty in this simple setting, though
  richer examples might require more care in this respect;  
\item $u \satisfies O$ means that $u$ is outside the facility, and $v
  \satisfies J$ means that $v$ is inside.   
\item the agent $\alpha$ is a generic one that represents a user of
  the system; that is, say, the vehicle/driver that approaches the
  access control point. The resources $b$ and $t$ represent tokens
  that stand respectively for the barrier and the access token of the
  users. 
\end{itemize}
So, $c$ can be viewed as an abstract token marking the presence of a
car, and $t$ the presence of the required access device in this
car. Thus resources act as an abstraction layer of our system. In this
view, it follows that it is easy to see $e$ as the absence of
information (nothing is known of the system).  


\begin{figure}[t]
\hrule
\vspace{1mm}
  \centering
  \begin{minipage}[t]{0.4\textwidth}
  \begin{tikzpicture}[scale=0.6, >=latex]

  	\tikzstyle{carre}=[draw,rectangle]
  	\tikzstyle{arrete}=[-,thick]
  	\tikzstyle{barrier}=[-,line width=5pt]
  	\tikzstyle{car}=[-,dotted,thick, red]
  
 	\node at (-1,-1)(margin1) {};
  	\node at (6,5)(margin2) {};

	\draw[arrete] (0,0) -- (5,0);
 	\draw[arrete] (0,4) -- (5,4);
  
	\draw[arrete] (0,1) -- (2,1);
	\draw[arrete] (2,1) -- (2,3);
  	\draw[arrete] (2,3) -- (0,3);
  
	\draw[arrete] (3,1) -- (5,1);
 	\draw[arrete] (3,1) -- (3,3);
	\draw[arrete] (3,3) -- (5,3);


	\draw[barrier] (1.5,2) -- (3.5,2);
	\node at (3.7,2.2)(barrier_token) {$b$}; 
  

	\draw[car] (5,3.5) -- (2.5,3.5);
	\draw[car] (2.5,3.5) -- (2.5,1);
	\node at (2.1,3.7)(car_token) {\color{red}$c,t$}; 


	\node at (1,3.5)(outside) {$O$}; 
	\node at (2.5,0.5)(inside) {$J$};

  \end{tikzpicture} \vspace{-10mm}
  \caption{Barrier problem, base case} 
  \label{figure_barrier_1} 
  \end{minipage} \qquad\qquad
  \begin{minipage}[t]{0.4\textwidth}
  \begin{tikzpicture}[scale=0.6, >=latex]

  \tikzstyle{carre}=[draw,rectangle]
  \tikzstyle{arrete}=[-,thick]
  \tikzstyle{barrier}=[-,line width=5pt]
  \tikzstyle{car}=[-,dotted,thick, red]

  \node at (-1,-1)(margin1) {};
  \node at (6,5)(margin2) {};

  \draw[arrete] (0,0) -- (5,0);
  \draw[arrete] (0,4) -- (5,4);
  
  \draw[arrete] (0,1) -- (2,1);
  \draw[arrete] (2,1) -- (2,3);
  \draw[arrete] (2,3) -- (0,3);
  
  \draw[arrete] (3,1) -- (5,1);
  \draw[arrete] (3,1) -- (3,3);
  \draw[arrete] (3,3) -- (5,3);


\draw[barrier] (1.5,2) -- (3.5,2);
 \node at (3.7,2.2)(barrier_token) {$b$}; 
  

\draw[car] (5,3.5) -- (2.5,3.5);
\draw[car] (2.5,3.5) -- (2.5,1);
\node at (2.0,3.7)(car_token) {\color{red}$c,t_{\alpha}$}; 


\node at (1,3.5)(outside) {$O$}; 
\node at (2.5,0.5)(inside) {$J$};

  \end{tikzpicture} \vspace{-10mm}
  \caption{Barrier problem with agents} 
  \label{figure_barrier_2}
  \end{minipage} 
\vspace{1mm}
\hrule 
\end{figure}

We have the following property: $O \BIaimp \BIC{\alpha}{bt} J$.  
According to the semantics, based on a resource monoid $R$, $c
\satisfies O \BIaimp \BIC{\alpha}{bt} J$ just in case if $c\satisfies
O$, then, for every $c' \in R$ such that $c \bullet b \bullet t 
\sim_{\alpha} c'$, $c'\satisfies J$.  
Thus the combination of the two tokens grants access to the
inside. The use of the token $b$ for the presence of the barrier helps
in modelling a situation in which the barrier is completely shut or is
broken (in which case entering wouldn't be possible).  Note that the
formulae $O \BIaimp \BIC{\alpha}{t} J$, $O \BIaimp \BIC{\alpha}{b} J$,
and $O \BIaimp \BIC{\alpha}{e} J$ are not valid because we cannot
enter if the barrier is shut, if we have no access token, or both. 

The use of the operator $\BIC{\alpha}{s}$ in this situation is
illustrative. First, consider what differences the use of other
operators  would make. If we were to state $O \BIaimp
\BIDt{\alpha}{bt} J$, then it  would mean that anyone outside can get
(without condition) inside  and acquire the two access tokens. This is
of course not what we expect.   
On the other hand, using $\BIE{\alpha}{s}$ has an interesting effect. $O
\BIaimp \BIE{\alpha}{bt} J$ requires not only that an entering agent have  
the expected tokens, but also that those tokens remain active once
they are inside. This is slightly different from our first approach: 
we don't know if the tokens are still active once the agent is
inside.  

We can also consider which of the additive implication, $\BIaimp$, and  
the multiplicative, $\BImimp$, would be the better modelling choice in 
this example. For a first approach, $\BIaimp$ seems quite sufficient. 
Indeed, if we assert $O \BIaimp \BIC{\alpha}{bt} J$ as valid, then any
resource satisfies it. So, if we have a car $c$ such that $c\satisfies
O$, we also have $c\satisfies O \BIaimp \BIC{\alpha}{bt} J$, and then
we get the expected $c\satisfies\BIC{\alpha}{bt} J$. 

However, if we consider more complex properties, the situation is
different. Imagine, for example, an environment that is composed not
only of the car $c$,  but also another entity, or piece of
information, $o$. Our epistemic context is thus $o \bullet c$. If we
have $c\satisfies O$ and if $O \BIaimp \BIC{\alpha}{bt} J$  is valid,
then we get $c\satisfies \BIC{\alpha}{bt} J$. As we do not have  $o
\bullet c \satisfies O$, we cannot deduce that  $o \bullet c
\satisfies \BIC{\alpha}{bt} J$.  

If instead we assume that the property  $O \BImimp \BIC{\alpha}{bt} J$
is valid, then we have, in particular, $o\satisfies O \BImimp 
\BIC{\alpha}{bt} J$ and, together with $c\satisfies O$, we can deduce
$o\bullet c\satisfies\BIC{\alpha}{bt} J$, as desired. So,  the use
of $\BImimp$ instead  of $\BIaimp$ is much more useful in more 
complex systems, as it allows us to set aside, as with Separation
Logic's Frame Rule, some of the entities of our system and still
apply the property. 

Now we introduce agents to the model (see Figure
\ref{figure_barrier_2}). The first model may seem crude, because a
single resource is used to model the access of any agent.  So, we seek
to benefit from the logic that allows us to take agents into account.  

We change the model by defining a detailed set of agents, 
$A = \{ \alpha , \beta , \gamma \}$ and now take three agents or
users, $\alpha$, $\beta$, and $\gamma$. Each user should have its own
access token, and the resource set is modified accordingly:  $Res = \{
e , b , t_\alpha , t_\beta , t_\gamma , c \}$.  Now the slightly different
formula $O \BIaimp \BIC{a}{bt_{a}} J$ is  valid for any agent $a \in
A$. So, for example,  $O \BIaimp \BIC{\alpha}{bt_\alpha} J$  is 
valid, which means that $\alpha$ can  get inside with his own token,
but $O \BIaimp \BIC{\alpha}{bt_\beta} J$ is not, which means $\alpha$
cannot use $\beta$'s token.  

\begin{figure}[t]
\hrule
\vspace{1mm}
  \centering
  \begin{minipage}[t]{0.4\textwidth}
  \begin{tikzpicture}[scale=0.6, >=latex]

  \tikzstyle{carre}=[draw,rectangle]
  \tikzstyle{arrete}=[-,thick]
  \tikzstyle{barrier}=[-,line width=5pt]
  \tikzstyle{car}=[-,dotted,thick, red]
  \tikzstyle{car2}=[-,dotted,thick, blue]

  \node at (-1,-1)(margin1) {};
  \node at (6,5)(margin2) {};

  \draw[arrete] (0,0) -- (5,0);
  \draw[arrete] (0,4) -- (5,4);
  
  \draw[arrete] (0,1) -- (2,1);
  \draw[arrete] (2,1) -- (2,3);
  \draw[arrete] (2,3) -- (0,3);
  
  \draw[arrete] (3,1) -- (5,1);
  \draw[arrete] (3,1) -- (3,3);
  \draw[arrete] (3,3) -- (5,3);


\draw[barrier] (1.5,2) -- (3.5,2);
 \node at (3.7,2.2)(barrier_token) {$b$}; 
  

\draw[car] (5,3.5) -- (2.5,3.5);
\draw[car] (2.5,3.5) -- (2.5,1);
\node at (1.9,3.7)(car_token) {\color{red}$c,t_\alpha$}; 


\draw[car2] (5,3.6) -- (2.4,3.6);
\draw[car2] (2.4,3.6) -- (2.4,2.5);
\draw[car2] (2.4,2.5) -- (1,2.5);
\draw[car2] (1,2.5) -- (1,1.5);
\draw[car2] (1,1.5) -- (2.4,1.5);
\draw[car2] (2.4,1.5) -- (2.4,1);
\node at (0.7,2)(car_token) {\color{blue}$\beta$}; 


\node at (1,3.5)(outside) {$O$}; 
\node at (2.5,0.5)(inside) {$J$};
  \end{tikzpicture} \vspace{-10mm}
  \caption{Barrier problem with a shortcut} 
  \label{figure_barrier_3}
  \end{minipage} \qquad\qquad 
  \begin{minipage}[t]{0.4\textwidth}
  \begin{tikzpicture}[scale=0.6, >=latex]

  \tikzstyle{carre}=[draw,rectangle]
  \tikzstyle{arrete}=[-,thick]
  \tikzstyle{barrier}=[-,line width=5pt]
 \tikzstyle{fence}=[-,dotted, line width=5pt]
  \tikzstyle{car}=[-,dotted,thick, red]
  \tikzstyle{car2}=[-,dotted,thick, blue]
  
  \node at (-1,-1)(margin1) {};
  \node at (6,5)(margin2) {};

  \draw[arrete] (0,0) -- (5,0);
  \draw[arrete] (0,4) -- (5,4);
  
  \draw[arrete] (0,1) -- (2,1);
  \draw[arrete] (2,1) -- (2,3);
  \draw[arrete] (2,3) -- (0,3);
  
  \draw[arrete] (3,1) -- (5,1);
  \draw[arrete] (3,1) -- (3,3);
  \draw[arrete] (3,3) -- (5,3);


\draw[barrier] (1.5,2) -- (3.5,2);
 \node at (3.8,2.5)(barrier_token) {$b$}; 
  

\draw[car] (5,3.5) -- (2.5,3.5);
\draw[car] (2.5,3.5) -- (2.5,1);
\node at (1.9,3.7)(car_token) {\color{red}$c,t_\alpha$}; 


\draw[car2] (5,3.6) -- (2.4,3.6);
\draw[car2] (2.4,3.6) -- (2.4,2.5);
\draw[car2] (2.4,2.5) -- (1,2.5);
\draw[car2] (1,2.5) -- (1,2);


\node at (1,3.5)(outside) {$O$}; 
\node at (2.5,0.5)(inside) {$J$};

\draw[fence] (0,2) -- (5,2);
 \node at (0,2.5)(fence) {$F$}; 
  \end{tikzpicture} \vspace{-10mm}
  \caption{Barrier problem with a fence} 
  \label{figure_barrier_4}
  \end{minipage}
 \vspace{1mm}
 \hrule
\end{figure}

Now consider the case in which the access is controlled and the agents
are supposed to cross the barrier only if they have the appropriate
access device. We want to capture the  fact that the system can
actually be flawed (as mentioned in the problem presentation).  It is
actually quite easy to do, because being able to circumvent the
barrier just means  being able to access inside of the complex without
any token. We could be a little  more specific by imagining that some
agents know the shortcut (or dare to use it) and  others don't (See
Figure \ref{figure_barrier_3}). In the previous setting, suppose that
the agent $\beta$ is aware of the shortcut and is disposed to use
it. Our new set of  properties should now be the following:    

\[
\left \{
\begin{array}{l}
    O \BIaimp \BIC{a}{bt_{a}} J \;(\mbox{for every $a \in A$}), \; 
    O \BIaimp \BIC{\beta}{e} J  
\end{array}
\right \} . 
\]

The unit resource $e$ expresses a direct access (with no resource
needed). Note how the use of agents can help us to express different
security policies in the same model. 
 
We can reasonably suppose that such a flawed system would be quickly
dealt with; for example, by installing a fence that would prevent
going around the barrier (See Figure  \ref{figure_barrier_4}). We
could, of course, just model that by removing our last addition  and
get back to the intended policy, but it is more  interesting to encode
it by a formula.  For example, we might then also  describe a fault in
the fence (or its removal). To do so, we  can simply add a
propositional formula $F$ that is valid for any resource provided
there  is a fence  preventing the passage of `rogue' agents. Our
system then becomes  
\[
\left \{
\begin{array}{l}
    O \BIaimp \BIC{a}{bt_{a}} J  \;\mbox{(for every $a \in A$)}, \; 
    O \BIaand \BIaneg \ F \ \BIaimp \BIC{\beta}{e} J 
\end{array}
\right \} .
\]

Having established a system of formulae that describes our modelling
situation quite clearly, we can seek to some properties of the
model. The idea is to establish a property of the system  that goes
beyond its basic definition. For example, we may want to check that
every agent  inside the facility has passed the barrier and has in its
possession its access token. This  means that we must prove that, for
every agent $a \in A$, $J \BIaimp \BID{a}{bt_{a}} J$.  

Indeed, if $c \satisfies J \BIaimp \BID{a}{bt_{a}} J$, this means that 
if $c\satisfies J$, then there exists $c' \in R$ such that $ c \sim_{a} 
c'\bullet b\bullet t_{a}$ and  $c'\bullet b\bullet t_{a} \satisfies J$, 
which expresses that every resource representing  a car that is inside
must in fact be equivalent, for an agent $a \in A$, to a resource that
is inside \emph{and} is composed with both the appropriate token
$t_{a}$  and the barrier token $b$. This is exactly what we wanted to
capture. 
 
Notice that this  particular property is not verified by the system we
described in our set up. Indeed,  noted previously, specifying
entrance with $r\satisfies O\BIaimp \BIC{a}{bt_{a}}J$ makes $J$ be
satisfied by any resource $r'$ such that $r\bullet b\bullet
t_{a}\sim_{a} r'$. We can see that $r'$ does not contain $b$ and
$t_{a}$. The use of $\BIE{a}{bt_{a}}$ instead  solves this problem: we
then have $r\bullet b\bullet t_{a} \sim_{a} r'\bullet b\bullet 
t_{a}$  and $r'\bullet b \bullet t_{a} \satisfies J$, as required.

So far, we have considered only simple situations, mainly one car crossing 
the barrier in various situations. Of course, we may wish to consider more 
complex models and establish similar properties. For example, we may 
want to see what happen if several cars are modelled together in the system. 

We have the sets of properties in the form of implications stated before. 
To state there is a car in the system, we just assert that the formula
$O$ is valid. Then, by looking at the semantics of our formulae, we
create a resource $c$ which satisfies that formula. In order to have
several cars, we might at first be tempted to assert something like 
$O \BIaand O \BIaand O$ (for three cars). However, given our semantics,  we have
trivially that $O \BIaand  O \BIaand O \equiv O$, which is inconvenient for
our modelling purpose. It is better to state $O\BImast  O\BImast O$, using
the multiplicative conjunction, instead.  Then, to satisfy this
formula, we need indeed three resources $c_1,c_2,c_3$  and we have
$c_1 \bullet c_2 \bullet c_3 \satisfies O \BImast O \BImast O$ --- that is, 
for each car to gain access, a a token is required for that car. 
Then, using $\BImimp$ as described  above, we can see the system
evolve as cars are allowed inside. Thus, the use of $\BImast$  is
particularly relevant to model several instances of a same object. 

Of course, we could easily enrich this model to make more distinctions 
between different cars and their different properties, but the essentials 
of the model would remain the same.

\subsection{Joint access} \label{subsec:joint}

One of the most common problems of access control is joint access and 
we propose to model a very simple example with our logic. The background 
for this example can be found in many films about the cold war era: the situation 
is that a critical system --- such as one that controls the release of
nuclear weapons, as in `Crimson Tide' \cite{CrimsonTide} --- is
secured by two different keys, each one  held by a different
operator. For the system to unlock, it is necessary that both
operators activate their keys simultaneously.  We provide a logical 
analysis of this situation. 

From our systems modelling perspective, we can set this up quite simply, 
as depicted in Figure~\ref{fig:joint}. 

\begin{figure}[h]
\hrule 
\vspace{2mm}
\begin{center}
\includegraphics[scale=0.4]{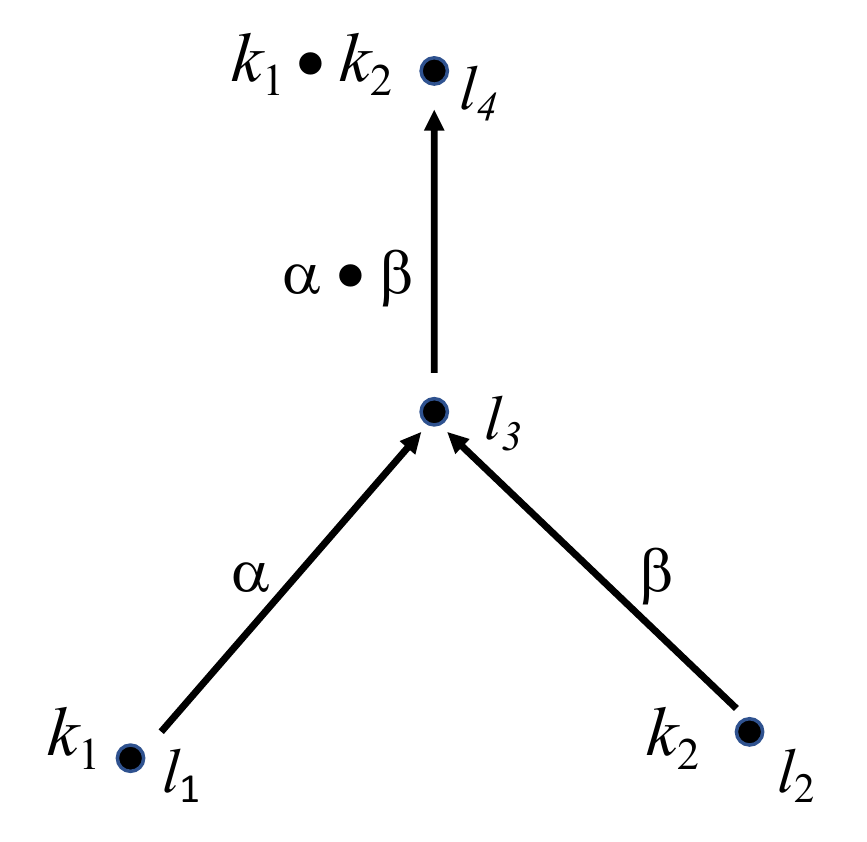}
\end{center}
\vspace{-7mm}
\caption{Joint access} \label{fig:joint}
\vspace{2mm}
\hrule
\end{figure} 

Some of the modelling choices made here are quite obvious: we need two agents,
and two associated resources representing their keys. So, we take
$A = \{\, \alpha , \beta \,\}$ and $Res = \{\, k_1 , k_2 , e
\,\}$. Implicitly, the formulae will express that $\alpha$ is
associated to $k_1$ and $\beta$ to $k_2$. Also  implicitly, we are
employing four locations, $l_1$ -- $l_4$, so that we can sketch a
system model as  
\[
\begin{array}{rcl}
l_1 \, , \, k_1 \, , \,  \alpha : Unlock_1 : 0 & \goes{\alpha} & l_3 \, , \,  k_1 \, , \, Unlock_1:0 \\ 
l_2  \, , \,  k_2  \, , \,  \beta : Unlock_2 : 0 & \goes{\beta} & l_3  \, , \,  k_2  \, , \,  Unlock_2:0     \\
l_3  \, , \,  k_1 \bullet k_2  \, , \,  \underbrace{Unlock_1 : 0 \times Unlock_2 : 0}_{\stackrel{\mbox{def}}{=} \; Unlock} 
	& \goes{\alpha \bullet \beta} & l_4  \, , \,  k_1 \bullet k_2  \, , \,  0 , 
\end{array}
\]
where $l_3 \bullet l_3 \stackrel{\mbox{def}}{=} l_3$, and where the modification function of the 
model, which describes how the keys move from location to location, is given by 
\begin{itemize}
\item[-] $\mu(\alpha, l_3 , k_1 \bullet k_2) = (l_4 , k_1\bullet k_2)$, 
\item[-] $\mu(\alpha , l_1 , k_1) = (l_3 , k_1)$, and 
\item[-] $\mu(\alpha , l_2 , k_2) = (l_3 , k_2)$. 
\end{itemize}

Focussing on our logical modelling, and suppressing for now the location 
architecture, we must express the fact that each agent --- representing here 
a simplified notion of process --- must use its key. Of course, as the whole 
point of the example is to illustrate how two separate accesses unlock the 
system, thus each use of key must be modelled with a different formula. 
We propose the following formulae for this purpose: 
\[
	\BID{\alpha}{k_1}\BIatop \ and \ \BID{\beta}{k_2} \BIatop . 
\]
We use the atomic formula $\BIatop$ since we don't need to access any
property --- rather we need only to update $\alpha$ and $\beta$'s
accessible worlds to express that $k_1$ and $k_2$ are now
activated. If we consider $\BID{\alpha}{k_1}\BIatop$ for instance,
then if $r \satisfies \BID{\alpha}{k_1}\BIatop$, then there exists a
resource  $r'$ such that $r\sim_\alpha r'\bullet k_1$ and $r'\bullet
k_1 \satisfies\BIatop$. Given this last statement, we have  that there
exists $r'$ such that $r\sim_\alpha r' \bullet k_1$. Thus, with this
formula we have stated that $\alpha$ can reach a state in which $k_1$
is activated. The second formula states the same for $b$ and $k_2$.   

We must express that whenever both keys are present, the system can
be unlocked. We could consider using a formula such as
$\BIDt{\alpha}{k_1 k_2}U$,  where $U$ is an atomic formula expressing
that the system is unlocked.  However, we can see at once that this
choice is problematic. Indeed, this formula  is dependent on $\alpha$,
but the point of joint access is that none of the  agents involved is
responsible on its own for the activation of the device.  Moreover,
should we decide to proceed with such a formula, it would fail  to do
the required job --- $k_2$ is brought in the system by $\beta$ and
only $\alpha$ is present in the formula. Obviously, using $\beta$
instead of $\alpha$ raises the  same problems (symmetrically). \\

It seems, therefore, that our model lacks (at least) an agent. We
introduce an omnipotent agent $o$ (and thus $A = \{ \alpha , \beta , o
\}$). The idea is to have an agent that can see and use whatever $\alpha$
and $\beta$ can, without the two sharing knowledge or potential
action. This agent can be interpreted either  as a global authority or
just as a modelling of the device itself (the computer that accepts
the keys and executes the order). Now, with this extra agent,
$\BIDt{o}{k_1 k_2}U$ seems to be an acceptable candidate for modelling  
the unlocking of the system. This states that whichever state reachable for 
$o$ that contains $k_1$ and $k_2$ triggers the unlocking. However, 
we still need to express $o$'s capability. To do that, we introduce the 
following set of formulae: 
\[
\left \{ \BID{a}{s} \phi \rightarrow \BID{o}{s} \phi \mid a \in A,\ s \in
  Res, \ \phi \in \lang \right \}. 
\]
This expresses that any access to a resource by an agent through
the modality $\BID{}{}$ can be transferred to $o$. Of course, in a
more general setting, we could state similar things  for the other
operators, but, in this very particular example, only $\BID{}{}$ will
be useful. \\

Finally, in order to the system to work, we need to activate both keys
simultaneously. A first  approach could be to append the two
key-activation with an $\BIaand$: 
$\BID{\alpha}{k_1}\BIatop \BIaand  \BID{\beta}{k_2}\BIatop$. This
doesn't produce the desired result. Indeed, if
$r\satisfies\BID{\alpha}{k_1}\BIatop \BIaand \BID{\beta}{k_2}\BIatop$,
then we get  $r\sim_\alpha r'\bullet k_1$ and $r\sim_\beta r''\bullet
k_2$ and we intended to have the combination of $k_1$ and $k_2$, which
is here not obvious. Thus, the best way is in fact to use
$\BID{\alpha}{k_1}\BIatop \BImast \BID{\beta}{k_2}\BIatop$.  
More than the simple correctness of our modelling, this use of
$\BImast$ is quite  convincing, as we aimed to model the separated use
of two keys. 

Thus we have modelled our situation as follows: 
\begin{enumerate}
\item $\forall \ ag \in A,\ \forall s \in Res,\ \forall \phi \in \lang,\
  \BID{ag}{s}\phi\BIaimp\BID{o}{s}\phi$;  
\item $\BID{\alpha}{k_1}\BIatop \BImast \BID{\beta}{k_2}\BIatop$; 
\item $\BIDt{o}{k_1 k_2}U$. 
\end{enumerate}

We can check that this has the desired effect; that is, that whenever 
both keys are present, the system can be unlocked. 
Consider a resource $r$ that forces (2) and (3). The forcing of (3),
unpacked, means  
\[
\mbox{\rm for all } r'\text{ such that } r\sim_o r'\bullet k_1\bullet k_2,\
r'\bullet k_1\bullet k_2\satisfies U . 
\]
On the other side, unpacking of (2) gives 
\[
\mbox{\rm there exist } r_1,r_2\text{ such that }r=r_1\bullet
r_2\text{ and }r_1\satisfies\BID{\alpha}{k_1} \BIatop \text{ and
}r_2\satisfies\BID{\beta}{k_2} \BIatop .   
\]
We can then instantiate (1) twice, with $ag=\alpha$, $s=k_1$, and
$\phi=\BIatop$, then with $ag=\beta$, $s=k_2$, and $\phi=\BIatop$ to get 
\[
\mbox{\rm there exist } r_1,r_2\text{ such that }r=r_1\bullet
r_2\text{ and }r_1\satisfies\BID{o}{k_1}\BIatop \text{ and
}r_2\satisfies\BID{o}{k_2} \BIatop . 
\]
Unpacking this, we get
\[
\mbox{\rm there exist } r_1,r_2,r_1', r_2'\text{ such that }r=r_1\bullet r_2\text{
  and }r_1\sim_o r_1'\bullet k_1 \text{ and }r_2\sim_o r_2'\bullet k_2 . 
\]
By the compatibility of $\bullet$ and $\sim$, we obtain that $r\sim_o
r'_1\bullet k_1\bullet r_2$  and then that $r\sim_o r'_1\bullet
k_1\bullet r'_2\bullet k_2$, which by commutativity is $r\sim_o
r'_1\bullet r'_2\bullet k_1 \bullet k_2$. Then we have  $r'_1\bullet
r'_2\bullet k_1 \bullet k_2 \satisfies U$, as required. 

\subsection{Semaphores} \label{subsec:semaphores}

Another important example of modelling in access control is concerned 
with concurrency in parallel programming. We have described in the
introduction how Separation Logic, built on BI, is a powerful and
efficient tool to model memory management. We propose, in this
section, an example of a similar work with ERL* in which we use it to
model programs accessing  memory and the particular example of simple
concurrency with semaphores. 

First, we establish the general basis of our modelling approach. We
consider a multi-processor (or a set of different systems) which is
seeking to run multiple programs or tasks with a limited amount of
memory space.   
\begin{itemize}
\item[-] The set $R$ of resources will represent the memory of the system, 
	$Res$ being a subset of the memory specified for each problem. $e$ always 
	denotes an empty set of information in the memory. Thus, in this example, 
	we again suppress location, conflating it with resource.  
\item[-] The set of agents $A$ represents all the different threads or processes 
	which are running the tasks. 
\item[-] Two parts, $m$ and $m'$, of the memory are linked by the relationship 
$\sim_\alpha$ if the access to $m$ is equivalent to the access to $m'$ for the process
	$\alpha$. 
\item[-] Finally, we  use propositions of ERL* to model programs run by
the thread. Thus, when we write $m \satisfies P$, we mean that the
memory stored in $m$ is used to run the program $P$.  
\end{itemize}
Just as in the example of joint access, we can set up our modelling of 
semaphores in the context of our general approach to systems modelling.
We suppress the details here, preferring to use the simplified approach 
afforded by the logical tools introduced in this paper, but see \cite{CGP15} for 
examples of similar models that more closely following the system
modelling approach.  

So, consider how to model semaphores in this context. Recall that 
semaphores are simple bits of program which use flags or tokens to
ensure that a specific portion of program, called \emph{critical section}, 
is always accessed by at most one process. We use an
arbitrary set of agents $A$, and the set of resources $Res = \{ e , t \}$, 
where $t$ is a token marking the entrance into the critical section. We
also have two propositions $C$ and $NC$, the former being the critical
section of code, the latter being all the non-critical part of the
code. Note that, here, the agents correspond to processes.   

We consider the following formulae, which constrain the model, for any 
arbitrary process $\alpha \in A$:
\begin{enumerate}
\item \mbox{\rm $Guard$: for any $\alpha', \alpha'' \in A$ s.t. $\alpha' \neq \alpha''$, 
	$\BICt{\alpha'}{t}\BIatop \BIaimp \BIaneg \BICt{\alpha''}{t}\BIatop$};  
\item $In:  NC \BIaimp\BIC{\alpha}{t}C$; 
\item $Out : C\BIaimp((\BIaneg\BID{\alpha}{t}\BIatop) \BIaand \BID{\alpha}{e}NC)$.  
\end{enumerate}

The $Guard$ formulae, true for any two different processes $\alpha'$ and
$\alpha''$, ensure  that two processes cannot enter a critical section
together. Indeed, if, for any Guard formula, we have  that $m
\satisfies Guard$, then, if there is $m'$ such that $m\bullet
t\sim_{\alpha'} m'$,  there is no $m''$ such that $m\bullet t\sim_{\alpha''}
m''$. That is, for any process $p'$  which has the token $t$ in
memory, no other process $p''$ can get the token.  

The $In$ formula specifes that the process $\alpha$ enters the critical
section. If we  have that $m\satisfies In$, then, if $m\satisfies NC$,
then, for any $m'$ such that  $m\bullet t \sim_\alpha m'$, we have that
$m' \satisfies C$. That is, if a process is running the  non-critical
section, the addition of the token $t$ gives it access to a memory
state sufficient to run the critical section.  

Symmetrically, the $Out$ formula expresses the exit of $p$ from a
critical section. If $m\satisfies Out$, then, if $m\satisfies C$, 
then, for all $m'$ such that  $m \sim_\alpha m'\bullet t$, $m' \bullet \not
\satisfies \BIatop$. That is, there is no  $m'$ such that $m\sim_\alpha
m' \bullet t$. This allows us to delete $t$ from the memory accessible
by $\alpha$. The second part of the formula, $\BID{\alpha}{e}NC$, states that
there is a state $m''$ such that $m \sim_\alpha m''$ and $m'' \satisfies
NC$; that is, $\alpha$ gets back into non-critical section. 

No memory state that satisfies $NC$ after $C$ has been executed,  can
have $t$ in it.  So, once this formula is taken into account, either
$p$ can continue to execute $C$  or go into $NC$ and release the token
$t$. We can now see whether the guard we  proposed is sufficient to
ensure us that no two processes can get the critical section
together. We do that in a simple way, by introducing the (new) formula
$NC \BImast NC$.  If we have $m\satisfies NC\BImast NC$, then we have
$m = m_1 \bullet m_2$, with  $m_1 \satisfies NC$ and $m_2 \satisfies
NC$. This is a fair representation of two processes running the
non-critical section in parallel, each one using a different part of
the memory (cf. the treatment of concurrent composition in
\cite{CMP12,AP16} and  in Concurrent Separation Logic
\cite{OHearn2007}).    

Now consider a process $\alpha_1$ and suppose it has access to the token;
that is, there exists $m_1'$ such that $m_1 \bullet t\sim_{\alpha_1}
m_1'$. If $In$ is valid, then  we have in particular that
$m_1\satisfies In$ and thus we have $m_1' \satisfies C$.  Now, $\alpha_1$
is executing the critical section with $m_1'$. Could another process 
$\alpha_2$ access the critical section with $m_2$? The guard should avoid
it. Indeed, if $Guard$  is valid, then we have $m\satisfies
Guard$. Yet, we have established that $m_1 \bullet t\sim_{\alpha_1}
m_1'$. We also have that $m=m_1 \bullet m_2$ and, by right
composition, we have $m_1\bullet m_2 \bullet t\sim_{\alpha_1} m_1' \bullet
m_2$, thus $m \bullet t\sim_{\alpha_1} m_1' \bullet m_2$. By applying
$m\satisfies Guard$ with $\alpha' = \alpha_1$ and $\alpha'' = \alpha_2$, we have that 
there is no $m'$ such that $m \bullet t\sim_{\alpha_2} m'$. Now, if $\alpha_2$
were to access the critical  section with $m_2$, then we should have
$m_2'$ such that $m_2\bullet t\sim_{\alpha_2} m'_2$.  Then we should have
that $m \bullet t \sim_{\alpha_2} m_2' \bullet m_1$ which would  contradict
what we stated before. Thus $\alpha_2$ cannot enter the critical section.  

However, once in this situation, as we have $m_1' \satisfies C$, we can
use $Out$ to let $\alpha_1$ out of the critical section. As $m_1'\satisfies
Out$, we generate $m_1'\satisfies\BIaneg\BID{\alpha_1}{t}\BIatop$ and
$m_1'\satisfies\BID{\alpha_1}{e}NC$. The first tells us that
there is no $m'$ such that $m'_1\sim_{\alpha_1} m'\bullet t$. But, in our
premiss, we have that $m'_1\sim_{\alpha_1} m_1\bullet t$. Those two facts
are contradictory. Thus, if we want to use this formula, we have to
delete the relation $m'_1\sim_{\alpha_1} m_1\bullet t$. This guarantees  
that $t$ is no longer in $\alpha_1$'s grasp. The second part, 
$m_1'\satisfies\BID{\alpha_1}{e}NC$, gives us a new memory state $m_1''$
such that $m_1'\sim_{\alpha_1}m_1''$ and $m_1''\satisfies NC$. Thus $\alpha_1$
is back in non-critical state. Note that once $m'_1\sim_{\alpha_1}
m_1\bullet t$ is deleted, the guard ceases to be applicable, and
nothing prevents $\alpha_2$ from entering the critical section this time.

\subsection{Evolution in LL, BI, and ERL} \label{subsec:LL-evolution} 

It is perhaps worthwhile pausing at this point to compare the
representation of system evolution that is available here with that
which is available in Linear Logic (LL).  First, we should note that
the nature of the system model employed here is  quite different from
that which would derive from a representation based on LL. Second, in
our setting, as we have explained, we employ a truth-functional
instantiation of the general distributed systems modelling  approach
based on concepts of location,  resource, and process. In the
examples of this paper, the account of process is very limited, being
restricted  to the actions of epistemic agents (with no rich
process-theoretic structure).  Third, as a result of these design
choices, the readily available account of  evolution requires
unpacking the truth-functional semantics, which can be  see in terms
of tableaux proofs (as presented in Section~\ref{sec:tableaux}).
Experience from, for example, Separation Logic \cite{Rey02a} suggests
that  the presence (as in Boolean BI and ERL and ERL$^*$) of a
negation with the  standard classical semantics is a very useful
modelling tool.     

In contrast, representations using LL's  sequent calculus, such as the
logic programming approach described in \cite{Andreoli92,HM94}, employ a less
rich modelling  perspective --- restricted to proofs of sequences of
resource manipulations --- but then give a very direct operational
reading of evolution in this restricted setting. A proof-theoretic
treatment of some underlying ideas in LL may be found in \cite{CY2019}. 
Note, however, that BI includes MILL as a fragment (as we have seen)
and that the basic  propositional systems for BI can be presented as
sequent calculi with well-understood  relationships with LL. Within
the multiplicative fragment of BI, the  same readings of resource
evolution can, of course, be obtained --- we do not consider it
worthwhile to rehearse these readings in the context of our examples,
which are  intended to illustrate resource semantics. We conjecture,
therefore, that it is possible  to give (perhaps labelled) sequent
calculi for ERL and ERL$^*$ that would provide a similar operational
reading of evolution (see the remarks at the 
beginning of Section~\ref{sec:tableaux}) to that which is available in
LL or the  multiplicative fragment of BI.    

To set up a precise correspondence between these evolutions and the
semantic  representation of resource is an interesting issue.  

A brief comparison with `epistemic linear logic' \cite{Garg06} ---
which is about  modelling access control in LL --- is perhaps also
worthwhile. Again,  this work benefits from the syntactic structures
of LL as basis for representing evolution in the setting of the
restricted model of systems that is naturally treated syntactically by
LL. Again, in contrast, we begin from a more comprehensive systems
semantics  --- which accommodates a very general notion of resource,
including ambient system resources and resources  that are local to
agents --- and  treat similar examples in this restricted
instance. Again, we might expect sequent  calculi for ERL and ERL$^*$
to capture a similar treatment of evolution to that provided by LL.

\section{A tableaux calculus for \ERL} \label{sec:tableaux} 

In this section, we provide a labelled calculus for \ERL\ in the spirit
of the calculi previously developed for \BI\ \cite{Gal05a} and \BBI\
\cite{Lar14a} that are based on labels and label constraints allowing
the capture of the semantics of these logics inside the corresponding
calculus. In the case of \BBI, a specific completeness proof, based on
an oracle, has been developed in \cite{Lar14a}. 

Similar labelled calculi have been proposed also for some modal and
epistemic extensions of \BI\ and \BBI\ \cite{Cour15a,CGP15,Cour15b}.
In these cases, the calculus design, used for \BBI, is applied with
specific labels and constraints issued from a semantic analysis of the
considered logic. In the case of the labelled calculus for \ESL\
\cite{Cour15b}, which is an epistemic extension of \BBI, we deal 
with constraints that are parametrized by agents, but do not handle
the presence of resources in the scope of the modal operators (the
local resources). 

While herein provide a tableaux calculus in the continuation of previous works
on modal bunched logics, we note also that we could design a labelled sequent
calculus for ERL and ERL$^*$ that would also be used to provide an
operational reading of evolution through proof construction as in 
some LL fragments. However, our aim in this section is only to
provide, by applying an approach and some proof methods already
developed for other modal bunched logics, a labelled tableaux calculus
for our logic --- both in order to establish its metatheory and as a
general reasoning tool.    

For the present work, we must introduce labels that correspond
to the local resources embedded in operators. As we shall see, we do that 
through a subset $\Lambda_r$ of labels that is in bijection with the set of
local resources $Res$. Similar techniques have been used with the
logic LSM  \cite{CGP15}, which extends \BBI\ with resource-parametrized S4
modalities.  Likewise, the proofs of soundness and completeness of the
calculus with respect to the semantics introduced in Section~\ref{sec:erl} 
are similar to the ones for \ESL, mainly addressing the need to take
the set $\Lambda_r$ into account. Revisiting the remarks in
Section~\ref{sec:erl} about the possibility of working with a hybrid
semantics and then relating \ERL\ to a hybrid version of \ESL, we remark that the 
design of a hybrid tableau calculus would require some specific work
about  using nominals and formulas to replace labels and constraints --- 
and this replacement introduces more complexity and undermines the strong
links with the resource semantics that is central in our approach. 

First, we introduce labels and constraints that correspond,
respectively, to  resources and to the equality and equivalence
relations on resources and  agents. Next, we develop labelled tableaux
for \ERL. Then, we establish  soundness with respect to the resource
semantics, giving the details of the  proof in the appendix. Finally,
we consider countermodel extraction and  completeness, again giving
the details of the proof in the appendix.

\subsection{Labels and constraints} \label{subsec:constraints} 

We consider a finite set of constants  $\Lambda_r$ such that
$|\Lambda_r|=|Res|-1$. On it we build an infinite  countable set of
(resource) constants $\gamma_r$ such that $\Lambda_r\subset\gamma_r$,
and then $\gamma_r = \Lambda_r \cup \{c_1 ,c_2, \ldots \}$. Concatenation 
of lists is denoted  by $\concatList$; $\lgg \ldd$ denotes the empty list. A 
\emph{resource label} is a word built on $\gamma_r$, where the order of letters 
is not taken into account; that is, a finite multiset  $\gamma_r$ and by $\epsilon$ 
the empty word. For example, $xy$ is the composition of  the resource labels $x$
and $y$. We say that $x$ is a \emph{resource sublabel} of $y$ if  and
only if there exists $z$ such that $x z = y$. The set of resource
sublabels of $x$ is denoted $\subLabelR{x}$. 

We define a function $\lambda : Res \to \Lambda_r$ such that: 
\begin{enumerate}
\item $\lambda(e) = \epsilon$; 
\item for all $r \in Res\backslash\{e\}$, $\lambda(r)\in\Lambda_r$; and 
\item $\lambda$ is injective. 
$r=r'$). 
\end{enumerate}
Note that $\lambda$ is trivially a bijection between  $Res$ and
$\Lambda_r\cup\{\epsilon\}$. 

\begin{definition}[Constraints]
A \emph{resource constraint} is an expression of the form $x \relCr
y$, where $x$ and $y$ are resource labels. An \emph{agent constraint}
is an expression of the form $x \relCag{u} y$, where $x$ and $y$ are
resource labels and $u$ belongs to the set of agents $\setAg$.  
\label{def_constraints}
\end{definition}

A \emph{set of constraints} is any set $\mathcal{C}$ that contains
resource constraints and agent constraints. Let $\mathcal{C}$ be a set
of constraints. The (resource) \emph{domain}  of $\mathcal{C}$ is the
set of all resource sublabels that appear in $\mathcal{C}$; that is,  
\[
\domainR{\mathcal{C}} = 
\bigcup_{x \relCr y \in \mathcal{C}} (\subLabelR{x} \cup
\subLabelR{y}) 
\ \cup \ 
\bigcup_{x \relCag{u} y \in \mathcal{C}} (\subLabelR{x} \cup
\subLabelR{y}) .  
\]

Let $\mathcal{C}$ be a set of constraints. The (resource) \emph{alphabet}
 $\alphabetR{\mathcal{C}}$ of $\mathcal{C}$ is the set of resource
 constants that appear in $\mathcal{C}$. In particular,
 $\alphabetR{\mathcal{C}} = \gamma_r \cap  \domainR{\mathcal{C}}$.
 Now we introduce, in Figure \ref{fig_contraints_closure}, the rules
 for constraint closure that allow us to capture the properties of the
 models into the calculus.   

\begin{figure}[t]
\hrule
\vspace{1mm}
Rules for resource constraints:
{\footnotesize
\begin{center}
    \AxiomC{}
    \RLabel{$\langle \epsilon \rangle$}
    \UnaryInfC{$\epsilon \relCr \epsilon$} 
    \DisplayProof 
  \ \ \ \ \ \ \ \ 
    \AxiomC{$x \relCr y$}
    \RLabel{$\langle s_r \rangle$}
    \UnaryInfC{$y \relCr x$} 
    \DisplayProof 
  \ \ \ \ \ \ \ \ 
    \AxiomC{$xy \relCr xy$}
    \RLabel{$\langle d_r \rangle$}
    \UnaryInfC{$x \relCr x$} 
    \DisplayProof 
  \ \ \ \ \ \ \ \ 
    \AxiomC{$x \relCr y$}
    \AxiomC{$y \relCr z$}
    \RLabel{$\langle t_r \rangle$}
    \BinaryInfC{$x \relCr z$} 
    \DisplayProof 
\end{center}
\begin{center}
    \AxiomC{$x \relCr y$}
    \AxiomC{$yk \relCr yk$}
    \RLabel{$\langle c_r \rangle$}
    \BinaryInfC{$xk \relCr yk$} 
    \DisplayProof 
  \ \ \ \ \ \ \ \ 
    \AxiomC{$x \relCag{u} y$}
    \RLabel{$\langle k_r \rangle$}
    \UnaryInfC{$x \relCr x$} 
    \DisplayProof 
\end{center}
Rules for agent constraints:
\begin{center}
    \AxiomC{$x \relCr x$}
    \RLabel{$\langle  r_a \rangle$}
    \UnaryInfC{$x \relCag{v} x$} 
    \DisplayProof 
  \ \ \ \ \ \ \ \ 
    \AxiomC{$x \relCag{u} y$}
    \RLabel{$\langle  s_a \rangle$}
    \UnaryInfC{$y \relCag{u} x$} 
    \DisplayProof 
  \ \ \ \ \ \ \ \
    \AxiomC{$x \relCag{u}y$}
    \AxiomC{$y \relCag{u} z$}
    \RLabel{$\langle  t_a \rangle$}
    \BinaryInfC{$x \relCag{u} z$} 
    \DisplayProof 
  \ \ \ \ \ \ \ \ 
    \AxiomC{$x \relCag{u} y$}
    \AxiomC{$x \relCr k$}
    \RLabel{$\langle  k_a \rangle$}
    \BinaryInfC{$k \relCag{u} y$} 
    \DisplayProof 
\end{center}}
\caption{Rules for constraint closure (for any $u \in \setAg$)}
\label{fig_contraints_closure}
\vspace{1mm}
\hrule
\end{figure}

\begin{definition}[Closure of constraints]
Let $\mathcal{C}$ be a set of constraints. The closure of
$\mathcal{C}$, denoted $\closure{\mathcal{C}}$, is the least relation
closed under the rules of Figure  \ref{fig_contraints_closure} such
that $\mathcal{C} \subseteq \closure{\mathcal{C}}$. 
\label{def_contraints_closure}
\end{definition}

There are six rules ($\langle \epsilon \rangle$, $\langle s_r \rangle$,
$\langle d_r \rangle$, $\langle t_r \rangle$, $\langle c_r \rangle$,
and $\langle k_r \rangle$) that produce resource constraints and four
rules ($\langle r_a \rangle$, $\langle s_a \rangle$, $\langle  t_a
\rangle$, and $\langle  k_a \rangle$) that produce agent constraints.
We note that $v$, introduced in the rule $\langle r_a \rangle$, must
belong to the set of agents $\setAg$. 

\begin{proposition}
The following rules can be derived from the rules of constraint closure:
\begin{center}
    \AxiomC{$xk \relCr y$}
    \RLabel{$\langle p_l \rangle$}
    \UnaryInfC{$x \relCr x$} 
    \DisplayProof 
  \ \ \ \ \ \ \ \ 
    \AxiomC{$x \relCr yk$}
    \RLabel{$\langle p_r \rangle$}
    \UnaryInfC{$y \relCr y$} 
    \DisplayProof 
  \ \ \ \ \ \ \ \ 
    \AxiomC{$xk \relCag{u} y$}
    \RLabel{$\langle q_l \rangle$}
    \UnaryInfC{$x \relCr x$} 
    \DisplayProof 
  \ \ \ \ \ \ \ \ 
    \AxiomC{$x \relCag{u} yk$}
    \RLabel{$\langle  q_r \rangle$}
    \UnaryInfC{$y \relCr y$}    
    \DisplayProof 
    \AxiomC{$x \relCag{u} y$}
    \AxiomC{$x \relCr x'$}
    \AxiomC{$y \relCr y'$}
    \RLabel{$\langle w_a \rangle$}
    \TrinaryInfC{$x' \relCag{u} y'$} 
    \DisplayProof 
\end{center}
\label{prop_new_rules}
\end{proposition}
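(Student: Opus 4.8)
The plan is to show, for each of the five rules, that whenever its premises lie in the closure $\closure{\mathcal{C}}$ of a constraint set, so does its conclusion; since $\closure{\mathcal{C}}$ is by Definition~\ref{def_contraints_closure} closed under the ten rules of Figure~\ref{fig_contraints_closure}, it suffices to exhibit in each case a short derivation of the conclusion from the premises using only those rules. All five conclusions are either a reflexive resource constraint $z \relCr z$ or an agent constraint, so the recurring idea is first to manufacture a reflexive resource constraint on the \emph{whole} label and then to peel off the unwanted sublabel with $\langle d_r \rangle$; recall that resource labels are multisets, so $\langle d_r \rangle$ can strip any chosen sublabel.

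For $\langle p_l \rangle$, from the premise $xk \relCr y$ I would apply $\langle s_r \rangle$ to get $y \relCr xk$, then $\langle t_r \rangle$ with the premise to obtain $xk \relCr xk$, and finally $\langle d_r \rangle$ to conclude $x \relCr x$. The rule $\langle p_r \rangle$ follows in the same way after using $\langle s_r \rangle$ on its premise $x \relCr yk$ (equivalently, it is an instance of the now-derived $\langle p_l \rangle$).

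For $\langle q_l \rangle$ and $\langle q_r \rangle$, the observation is that $\langle k_r \rangle$ converts any agent constraint into the resource reflexivity of its left label. Thus from $xk \relCag{u} y$ one gets $xk \relCr xk$ by $\langle k_r \rangle$ and then $x \relCr x$ by $\langle d_r \rangle$; for $\langle q_r \rangle$ I would first move $yk$ to the left with $\langle s_a \rangle$ and then repeat this pattern.

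The only genuinely two-sided case, and the step I expect to require the most care, is $\langle w_a \rangle$, since it substitutes on \emph{both} arguments of $x \relCag{u} y$. Here $\langle k_a \rangle$ rewrites only the left label, so from $x \relCag{u} y$ and $x \relCr x'$ it yields $x' \relCag{u} y$; to substitute $y$ by $y'$ I would route the right-hand label through symmetry, so that $\langle s_a \rangle$ gives $y \relCag{u} x'$, $\langle k_a \rangle$ with $y \relCr y'$ gives $y' \relCag{u} x'$, and a final $\langle s_a \rangle$ restores $x' \relCag{u} y'$. Chaining these elementary steps in the right order, and in particular not forgetting the two applications of $\langle s_a \rangle$ that bracket the second use of $\langle k_a \rangle$, is the main thing to get right; the individual rule applications themselves are routine.
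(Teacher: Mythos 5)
Your derivations are correct: each of the five rules follows from the closure rules of Figure~\ref{fig_contraints_closure} exactly as you describe, and in particular the $\langle t_r \rangle$/$\langle d_r \rangle$ combination for $\langle p_l \rangle$, the $\langle k_r \rangle$/$\langle d_r \rangle$ combination for $\langle q_l \rangle$ and $\langle q_r \rangle$, and the $\langle k_a \rangle$--$\langle s_a \rangle$--$\langle k_a \rangle$--$\langle s_a \rangle$ chain for $\langle w_a \rangle$ all check out against the rule schemas. The paper states this proposition without an explicit proof, and your argument is precisely the routine derivation the authors leave to the reader.
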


\begin{corollary}
\label{corollary1}
Let $\mathcal{C}$ be a set of constraints and $u \in \setAg$ be an
agent. 
\begin{enumerate}
\item $x \in \domainR{\closure{\mathcal{C}}}$ iff $x\relCr x \in
  \closure{\mathcal{C}}$ iff $x  \relCag{u}x \in \closure{\mathcal{C}}$. 
\item If $xy \in \domainR{\closure{\mathcal{C}}}$, $x' \relCr x \in
\closure{\mathcal{C}}$, and $y' \relCr y \in \closure{\mathcal{C}}$,
then $xy \relCr x'y' \in \closure{\mathcal{C}}$.  
\label{cor_x_in_domain_EQUIV_x_relCr_x}
\end{enumerate}
\end{corollary}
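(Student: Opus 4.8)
The plan is to treat the two parts in turn, letting part 2 rest on part 1. For the three-way equivalence of part 1, I would first settle the equivalence between the reflexive resource constraint $x \relCr x$ and the reflexive agent constraint $x \relCag{u} x$: the forward implication is a single application of $\langle r_a \rangle$ (which produces $x \relCag{v} x$ for any agent $v \in \setAg$, in particular $v = u$), and the converse is an application of $\langle k_r \rangle$ taken with $y = x$. The implication $x \relCr x \in \closure{\mathcal{C}} \Rightarrow x \in \domainR{\closure{\mathcal{C}}}$ is then immediate, because $x$ is a resource sublabel of itself (take the witness $z = \epsilon$) and hence lies in the domain of any closure containing the constraint $x \relCr x$.

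The substance of part 1 is the remaining implication $x \in \domainR{\closure{\mathcal{C}}} \Rightarrow x \relCr x \in \closure{\mathcal{C}}$. Here I would unfold the definition of the domain: membership $x \in \domainR{\closure{\mathcal{C}}}$ says that $x \in \subLabelR{w}$ for some label $w$ occurring in a constraint of $\closure{\mathcal{C}}$, i.e. $w = xk$ for some resource label $k$. A case analysis on the shape and position of this occurrence then extracts $x \relCr x$ directly from the derived rules of Proposition \ref{prop_new_rules}: if the constraint is $xk \relCr z$ apply $\langle p_l \rangle$; if $z \relCr xk$ apply $\langle p_r \rangle$; if $xk \relCag{u} z$ apply $\langle q_l \rangle$; and if $z \relCag{u} xk$ apply $\langle q_r \rangle$. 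These four positions are exhaustive, and each yields $x \relCr x \in \closure{\mathcal{C}}$.

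For part 2 I would invoke part 1 twice. Since $xy \in \domainR{\closure{\mathcal{C}}}$, part 1 gives $xy \relCr xy \in \closure{\mathcal{C}}$; combining this with the hypothesis $x' \relCr x$ through $\langle c_r \rangle$ (with $y$ as the adjoined label) produces $x'y \relCr xy$, whence $xy \relCr x'y$ by $\langle s_r \rangle$. The label $x'y$ now lies in $\domainR{\closure{\mathcal{C}}}$, so part 1 again supplies $x'y \relCr x'y \in \closure{\mathcal{C}}$; feeding this together with the hypothesis $y' \relCr y$ into $\langle c_r \rangle$ (with $x'$ adjoined, and using that resource labels are finite multisets, so that $x'y = yx'$) yields $x'y' \relCr x'y$, hence $x'y \relCr x'y'$ by $\langle s_r \rangle$. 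A final application of transitivity $\langle t_r \rangle$ chains $xy \relCr x'y$ with $x'y \relCr x'y'$ to give the desired $xy \relCr x'y' \in \closure{\mathcal{C}}$.

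The difficulty here is bookkeeping rather than conceptual depth. The delicate point is that each application of the composition rule $\langle c_r \rangle$ requires its reflexive premise ($xy \relCr xy$, then $x'y \relCr x'y$) to be present in the closure beforehand; this is precisely what part 1 provides, so the dependence of part 2 on part 1 is genuine and not merely convenient. A secondary care is the silent use of the commutativity of concatenation to realign $x'y$ and $yx'$ when supplying side labels to $\langle c_r \rangle$, which is legitimate because labels are multisets. Finally, one should confirm that the derived rules survive the degenerate case in which the adjoined label $k$ is empty, so that the case analysis of part 1 also covers labels $w$ equal to $x$ rather than strictly larger; in that case $x \relCr x$ is obtained directly from $\langle s_r \rangle$ and $\langle t_r \rangle$ (or from $\langle k_r \rangle$ for an agent constraint), so no gap remains.
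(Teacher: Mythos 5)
Your proof is correct and follows what is evidently the intended route: the paper leaves the corollary unproved as a direct consequence of Proposition~\ref{prop_new_rules}, and your argument derives part~1 exactly from the rules $\langle p_l \rangle$, $\langle p_r \rangle$, $\langle q_l \rangle$, $\langle q_r \rangle$ (together with $\langle r_a \rangle$ and $\langle k_r \rangle$ for the agent-constraint leg), then obtains part~2 by two applications of $\langle c_r \rangle$ bridged by $\langle s_r \rangle$ and $\langle t_r \rangle$, with part~1 supplying the reflexive premises. The attention to the degenerate sublabel case $k=\epsilon$ and to the multiset commutativity needed when feeding $x'y = yx'$ into $\langle c_r \rangle$ closes the only points where a gap could have arisen.
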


\begin{proposition}
Let $\mathcal{C}$ be a set of constraints. We have
$\alphabetR{\mathcal{C}} = \alphabetR{\closure{\mathcal{C}}}$. 
\label{prop_alphabet_closure_constraints_equals_alphabet_constraints} 
\end{proposition}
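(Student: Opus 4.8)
The plan is to prove the two inclusions separately, the left-to-right one being trivial and the right-to-left one by induction over the closure rules. First I would record the convenient reformulation that, since a single constant $k \in \gamma_r$ is a resource sublabel of a label $x$ exactly when $k$ occurs as a letter of $x$, the alphabet of a constraint set is simply the set of all constants occurring in its labels: $\alphabetR{\mathcal{C}} = \bigcup \{\, \gamma_r \cap \subLabelR{\ell} \mid \ell \text{ is a label occurring in } \mathcal{C} \,\}$. In particular $\alphabetR{\cdot}$ (like $\domainR{\cdot}$) is monotone, so from $\mathcal{C} \subseteq \closure{\mathcal{C}}$ we obtain $\alphabetR{\mathcal{C}} \subseteq \alphabetR{\closure{\mathcal{C}}}$ at once.

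For the reverse inclusion $\alphabetR{\closure{\mathcal{C}}} \subseteq \alphabetR{\mathcal{C}}$, I would argue by induction on the inductive definition of $\closure{\mathcal{C}}$ (Definition~\ref{def_contraints_closure}), establishing that every constraint $c \in \closure{\mathcal{C}}$ has all of its label-constants in $\alphabetR{\mathcal{C}}$; summing over all such $c$ then yields the claim. The base cases are $c \in \mathcal{C}$, which holds by the definition of the alphabet, and the constraint $\epsilon \relCr \epsilon$ produced by $\langle \epsilon \rangle$, which contributes no constant at all. For the inductive step one checks the remaining nine rules, verifying in each case that every constant occurring in the conclusion already occurs in some premise.

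Most rules are immediate, because their conclusion only permutes or restricts labels that already appear in the premises: $\langle s_r\rangle$, $\langle t_r\rangle$, $\langle s_a\rangle$, and $\langle t_a\rangle$ reuse exactly the same labels; $\langle d_r\rangle$ and $\langle k_r\rangle$ pass to a sublabel or to one side of a premise; and $\langle r_a\rangle$ introduces only an agent $v \in \setAg$, which is not a resource constant and so does not affect the resource alphabet. The two rules that actually deserve attention are $\langle c_r\rangle$ and $\langle k_a\rangle$, since these are the only ones whose conclusion may display a constant absent from the principal premise: the conclusion $xk \relCr yk$ of $\langle c_r\rangle$ mentions $k$, and the conclusion $k \relCag{u} y$ of $\langle k_a\rangle$ mentions $k$. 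In each case, however, $k$ already occurs in a side premise --- in $yk \relCr yk$ for $\langle c_r\rangle$ and in $x \relCr k$ for $\langle k_a\rangle$ --- so by the induction hypothesis $\gamma_r \cap \subLabelR{k} \subseteq \alphabetR{\mathcal{C}}$, and the conclusion introduces no constant outside $\alphabetR{\mathcal{C}}$.

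Combining the two inclusions gives $\alphabetR{\mathcal{C}} = \alphabetR{\closure{\mathcal{C}}}$. I expect the only genuine subtlety to be the bookkeeping for $\langle c_r\rangle$ and $\langle k_a\rangle$ --- precisely the rules engineered so that the fresh-looking letter $k$ is forced to occur already in a premise --- while all the other cases are routine verifications.
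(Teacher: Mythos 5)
Your proof is correct. The paper states Proposition~\ref{prop_alphabet_closure_constraints_equals_alphabet_constraints} without giving any proof, and your rule-by-rule induction over Definition~\ref{def_contraints_closure} --- with the key observation that $\langle c_r\rangle$ and $\langle k_a\rangle$ are the only rules whose conclusions display a label component not in the principal premise, and that in both cases this component already occurs in a side premise --- is exactly the routine argument the authors leave implicit.
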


\begin{lemma}[Compactness]
Let $\mathcal{C}$
be a (possibly infinite) set of constraints. 
\begin{enumerate}
\item If $x \relCr y \in \closure{\mathcal{C}}$, 
      then there is a finite set $\mathcal{C}_f$
      such that $\mathcal{C}_f \subseteq \mathcal{C}$ and
      $x \relCr y \in \closure{\mathcal{C}_f}$. 
\item If $x \relCag{u}y \in \closure{\mathcal{C}}$, 
      then there is a finite set $\mathcal{C}_f$
      such that $\mathcal{C}_f \subseteq \mathcal{C}$ and
      $x \relCag{u} y \in \closure{\mathcal{C}_f}$.
      \label{lem_compactness}
\end{enumerate}
\end{lemma}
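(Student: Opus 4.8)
The plan is to exploit the inductive character of the closure operator $\closure{\cdot}$: since $\closure{\mathcal{C}}$ is, by Definition~\ref{def_contraints_closure}, the least relation containing $\mathcal{C}$ and closed under the rules of Figure~\ref{fig_contraints_closure}, every constraint it contains possesses a finite derivation, and such a derivation can mention only finitely many members of $\mathcal{C}$. First I would record the monotonicity of the closure operator: if $\mathcal{C}\subseteq\mathcal{C}'$ then $\closure{\mathcal{C}}\subseteq\closure{\mathcal{C}'}$. This is immediate from the ``least relation'' formulation, since $\closure{\mathcal{C}'}$ contains $\mathcal{C}'\supseteq\mathcal{C}$ and is closed under the rules, and hence contains the least such relation over $\mathcal{C}$.

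Next I would make the stage structure of the closure explicit. Setting $\mathcal{C}_0=\mathcal{C}$ and letting $\mathcal{C}_{n+1}$ be $\mathcal{C}_n$ together with every constraint obtainable by a single application of a rule of Figure~\ref{fig_contraints_closure} to premisses lying in $\mathcal{C}_n$, one checks that $\closure{\mathcal{C}}=\bigcup_{n\ge 0}\mathcal{C}_n$. The nontrivial direction --- that the union is already closed under the rules --- uses the fact that each rule has only finitely many premisses (at most two), so that if all premisses of an instance lie in the union they all lie in a common stage $\mathcal{C}_n$, whence the conclusion appears in $\mathcal{C}_{n+1}$.

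The two statements, 1 and 2, would then be proved together by a single induction on the least stage $n$ at which the target constraint appears, since the rules mix the two kinds of constraint (for instance $\langle r_a\rangle$ produces an agent constraint from a resource constraint, while $\langle k_r\rangle$ and $\langle k_a\rangle$ run the other way), and so the compactness of an agent constraint may depend on the compactness of a resource premiss and conversely. In the base case $n=0$, a constraint already in $\mathcal{C}$ is witnessed by the singleton $\mathcal{C}_f=\{x\relCr y\}$ (respectively $\{x\relCag{u}y\}$), and a constraint produced by the premiss-free axiom $\langle\epsilon\rangle$ is witnessed by $\mathcal{C}_f=\emptyset$, since $\epsilon\relCr\epsilon\in\closure{\emptyset}$ via that axiom. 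For the inductive step, a constraint in $\mathcal{C}_{n+1}\setminus\mathcal{C}_n$ arises by one rule application from premisses $P_1,\dots,P_k$ with $k\le 2$ lying in $\mathcal{C}_n$; the induction hypothesis supplies finite sets $\mathcal{C}_{f,i}\subseteq\mathcal{C}$ with $P_i\in\closure{\mathcal{C}_{f,i}}$, and I would take $\mathcal{C}_f=\bigcup_{i}\mathcal{C}_{f,i}$, which is finite as a finite union of finite sets. By monotonicity each $P_i\in\closure{\mathcal{C}_f}$, and since $\closure{\mathcal{C}_f}$ is closed under the rules, the same rule instance places the target constraint in $\closure{\mathcal{C}_f}$, as required.

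The argument involves no genuine obstacle; it is a routine compactness-of-an-inductive-definition proof. The only two points that require care are the simultaneous handling of resource and agent constraints --- forced by the rules that convert one kind into the other, which is why a single mutual induction is used --- and the systematic appeal to monotonicity in order to transport each premiss into the closure of the common finite set $\mathcal{C}_f$.
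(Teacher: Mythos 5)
Your proof is correct. The paper in fact states this compactness lemma without any proof (neither in the main text nor in the appendices), so there is no official argument to compare against; your proposal supplies the standard and expected one. The two load-bearing observations are exactly right: every rule in Figure~\ref{fig_contraints_closure} has at most two premisses, so the stagewise construction $\mathcal{C}_0\subseteq\mathcal{C}_1\subseteq\cdots$ exhausts $\closure{\mathcal{C}}$ and each derived constraint has a finite derivation tree; and monotonicity of $\closure{\cdot}$ lets you merge the finite witness sets of the premisses into a single finite $\mathcal{C}_f\subseteq\mathcal{C}$ whose closure contains the conclusion. Handling the resource and agent constraints by one simultaneous induction on the stage is the right move, since rules such as $\langle k_r\rangle$, $\langle r_a\rangle$, and $\langle k_a\rangle$ convert between the two kinds; the only cosmetic remark is that the premiss-free axiom $\langle\epsilon\rangle$ is most naturally absorbed into the inductive step as the case $k=0$ (empty union of witness sets) rather than listed under the base case, but this does not affect correctness.
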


\subsection{Labelled tableaux for \ERL}

We now define a labelled tableaux calculus for \ERL\ in the
spirit of previous works \cite{Gal05a, Lar14a, Cour15b, DP16} by using
similar definitions and results but based on the specific label and 
contraints definitions. 

\begin{definition} A \emph{labelled formula} is a 3-tuple of the form
$\labelledFormula{S}{\phi}{x}$ such that $S \in \{\mathbb{T},
\mathbb{F}\}$, $\phi \in \lang$ is a formula and $x \in \Lambda_r$ is a 
resource label. A \emph{constrained set of statements} (CSS) is a pair
$\CSS{\mathcal{F}}{\mathcal{C}}$, where $\mathcal{F}$ is a set   of
labelled formulae and $\mathcal{C}$ is a set of constraints,
satisfying the following property, denoted $P_{css}$, 
\begin{center}
if $\labelledFormula{S}{\phi}{x} \in \mathcal{F}$, then
$x \relCr x \in \closure{\mathcal{C}} \ $  ($P_{css}$). 
\end{center}
A CSS $\CSS{\mathcal{F}}{\mathcal{C}}$  is \emph{finite} if $\mathcal{F}$ 
and $\mathcal{C}$ are finite. The relation $\CSSinclusion$ is defined by
$\CSS{\mathcal{F}}{\mathcal{C}} \CSSinclusion
\CSS{\mathcal{F}'}{\mathcal{C}'}$ iff   $\mathcal{F} \subseteq
\mathcal{F}'$ and $\mathcal{C} \subseteq \mathcal{C}'$. We write 
$\CSS{\mathcal{F}_f}{\mathcal{C}_f} \CSSfiniteInclusion
\CSS{\mathcal{F}}{\mathcal{C}}$ when
$\CSS{\mathcal{F}_f}{\mathcal{C}_f} \CSSinclusion 
\CSS{\mathcal{F}}{\mathcal{C}}$ holds and
$\CSS{\mathcal{F}_f}{\mathcal{C}_f}$ is finite,  
meaning that $\mathcal{F}_f$ and $\mathcal{C}_f$ are both finite. 
\label{def_labelled_formula_AND_CSS} 
\end{definition}

\begin{proposition}
For any CSS $\CSS{\mathcal{F}_f}{\mathcal{C}}$, where $\mathcal{F}_f$
is finite, there exists $\mathcal{C}_f \subseteq \mathcal{C}$ such
that $\mathcal{C}_f$ is finite and
$\CSS{\mathcal{F}_f}{\mathcal{C}_f}$ is a CSS. 
\label{prop_css_finite_css}
\end{proposition}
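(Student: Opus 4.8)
The plan is to combine the finiteness of $\mathcal{F}_f$ with the Compactness Lemma (Lemma~\ref{lem_compactness}). First I would note that the only condition $\CSS{\mathcal{F}_f}{\mathcal{C}_f}$ must satisfy in order to be a CSS is the property $P_{css}$: for every labelled formula $\labelledFormula{S}{\phi}{x} \in \mathcal{F}_f$ we need $x \relCr x \in \closure{\mathcal{C}_f}$. Since $\mathcal{F}_f$ is finite, only finitely many distinct resource labels, say $x_1, \dots, x_n$, occur as labels of formulae in $\mathcal{F}_f$. Because $\CSS{\mathcal{F}_f}{\mathcal{C}}$ is by hypothesis a CSS, $P_{css}$ already gives $x_i \relCr x_i \in \closure{\mathcal{C}}$ for each $i \in \{1, \dots, n\}$.

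Next I would apply the first part of the Compactness Lemma to each of these constraints: for every $i$ there is a finite subset $\mathcal{C}_i \subseteq \mathcal{C}$ with $x_i \relCr x_i \in \closure{\mathcal{C}_i}$. Setting $\mathcal{C}_f = \bigcup_{i=1}^{n} \mathcal{C}_i$, we obtain a finite union of finite sets, hence a finite set, which moreover satisfies $\mathcal{C}_f \subseteq \mathcal{C}$.

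It then remains to check that $\CSS{\mathcal{F}_f}{\mathcal{C}_f}$ is a CSS. Here the one ingredient that is not purely bookkeeping is the monotonicity of the closure operator: since $\mathcal{C}_i \subseteq \mathcal{C}_f$, every constraint derivable from $\mathcal{C}_i$ by the rules of Figure~\ref{fig_contraints_closure} is a fortiori derivable from $\mathcal{C}_f$, so $\closure{\mathcal{C}_i} \subseteq \closure{\mathcal{C}_f}$. This follows directly from Definition~\ref{def_contraints_closure}, since $\closure{\mathcal{C}_f}$ is a relation closed under the rules and containing $\mathcal{C}_f \supseteq \mathcal{C}_i$, while $\closure{\mathcal{C}_i}$ is by definition the \emph{least} such relation for $\mathcal{C}_i$. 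Consequently $x_i \relCr x_i \in \closure{\mathcal{C}_i} \subseteq \closure{\mathcal{C}_f}$ for each $i$, so $P_{css}$ holds for $\CSS{\mathcal{F}_f}{\mathcal{C}_f}$, which completes the argument. The proof carries no genuine obstacle: the substantive content is delegated entirely to compactness, and what remains is the routine assembly of finitely many witnesses together with monotonicity of closure.
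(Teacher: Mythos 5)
Your proof is correct and follows essentially the same route as the paper's: the paper's (very terse) proof proceeds by induction on the number of labelled formulae in $\mathcal{F}_f$ together with the Compactness Lemma, which amounts to exactly your finite union of compactness witnesses plus monotonicity of the closure operator. The only cosmetic difference is that you assemble the finitely many witnesses in one step rather than by induction.
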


\begin{proof}
By induction on the number of labelled formulae of $\mathcal{F}_f$ and
by Lemma \ref{lem_compactness}. 
\end{proof}

\begin{figure}[htbp]
\hspace{-.2cm}
\center
{\footnotesize 
\begin{tabular}{c}
\hline 
      \\      
      \AxiomC{$\labelledFormula{T}{\BImI} {x} \in \mathcal{F}$}
      \RLabel{$\langle \mathbb{T} \BImI \rangle$} 
      \UnaryInfC{$\CSS{\emptyset}{\{ x \relCr \epsilon \}}$} 
      \DisplayProof 
      
      \\ \\

      \AxiomC{$\labelledFormula{T}{\BIaneg \phi} {x} \in \mathcal{F}$}
      \RLabel{$\langle \mathbb{T} \BIaneg \rangle$} 
      \UnaryInfC{$\CSS{\{ \labelledFormula{F}{\phi} {x} \}}{\emptyset}$}  
      \DisplayProof 
      \ \ \ \ 
      \AxiomC{$\labelledFormula{F}{\BIaneg \phi} {x} \in \mathcal{F}$}
      \RLabel{$\langle \mathbb{F} \BIaneg \rangle$} 
      \UnaryInfC{$\CSS{\{ \labelledFormula{T}{\phi} {x} \}}{\emptyset}$}  
      \DisplayProof 
      
      \\ \\

      \AxiomC{$\labelledFormula{T}{\phi \BIaand \psi} {x} \in \mathcal{F}$} 
      \RLabel{$\langle \mathbb{T} \BIaand \rangle$} 
      \UnaryInfC{$\CSS{\{ \labelledFormula{T}{\phi} {x},
                 \labelledFormula{T}{\psi} {x} \}}{\emptyset}$} 
      \DisplayProof 
      \ \ \ \ 
      \AxiomC{$\labelledFormula{F}{\phi \BIaand \psi} {x} \in \mathcal{F}$} 
      \RLabel{$\langle \mathbb{F} \BIaand \rangle$} 
      \UnaryInfC{$\CSS{\{ \labelledFormula{F}{\phi} {x} \}}{\emptyset}
                 \ \mid \  
                 \CSS{\{ \labelledFormula{F}{\psi} {x} \}}{\emptyset}$}  
      \DisplayProof 
      
      \\ \\

      \AxiomC{$\labelledFormula{T}{\phi \BIaor \psi} {x} \in \mathcal{F}$} 
      \RLabel{$\langle \mathbb{T} \BIaor \rangle$} 
      \UnaryInfC{$\CSS{\{ \labelledFormula{T}{\phi} {x} \}}{\emptyset}
                 \ \mid \  
                 \CSS{\{ \labelledFormula{T}{\psi} {x} \}}{\emptyset}$}  
      \DisplayProof 
      \ \ \ \ 
      \AxiomC{$\labelledFormula{F}{\phi \BIaor \psi} {x} \in \mathcal{F}$} 
      \RLabel{$\langle \mathbb{F} \BIaor \rangle$} 
      \UnaryInfC{$\CSS{\{ \labelledFormula{F}{\phi} {x}, 
                          \labelledFormula{F}{\psi} {x} \}}{\emptyset}$} 
      \DisplayProof 
      
      \\ \\

      \AxiomC{$\labelledFormula{T}{\phi \BIaimp \psi} {x} \in \mathcal{F}$} 
      \RLabel{$\langle \mathbb{T} \BIaimp \rangle$} 
      \UnaryInfC{$\CSS{\{ \labelledFormula{F}{\phi} {x} \}}{\emptyset}
                  \ \mid \  
                  \CSS{\{ \labelledFormula{T}{\psi} {x} \}}{\emptyset}$}  
      \DisplayProof 
      \ \ \ \ 
      \AxiomC{$\labelledFormula{F}{\phi \BIaimp \psi} {x} \in \mathcal{F}$} 
      \RLabel{$\langle \mathbb{F} \BIaimp \rangle$} 
      \UnaryInfC{$\CSS{\{ \labelledFormula{T}{\phi} {x},
                  \labelledFormula{F}{\psi} {x} \}}{\emptyset}$} 
      \DisplayProof 
      
           \\ \\

      \AxiomC{$\labelledFormula{T}{\phi \BImast \psi} {x} \in \mathcal{F}$} 
      \RLabel{$\langle \mathbb{T} \BImast \rangle$} 
      \UnaryInfC{$\CSS{\{ \labelledFormula{T}{\phi}{c_i},
                  \labelledFormula{T}{\psi}{c_j} \}}{\{ x \relCr
                  c_ic_j \}}$}   
      \DisplayProof 
      \ \ \ \ 
      \AxiomC{$\labelledFormula{F}{\phi \BImast \psi} {x} \in \mathcal{F}
               \text{ and } x \relCr yz \in \closure{\mathcal{C}}$}
      \RLabel{$\langle \mathbb{F} \BImast \rangle$} 
      \UnaryInfC{$\CSS{\{ \labelledFormula{F}{\phi}{y} \}}{\emptyset}
                  \ \mid \ \CSS{\{ \labelledFormula{F}{\psi}{z}
                    \}}{\emptyset}$}   
      \DisplayProof 
      
      \\ \\

      \AxiomC{$\labelledFormula{T}{\phi \BImimp \psi} {x} \in \mathcal{F}
               \text{ and } xy \relCr xy \in \closure{\mathcal{C}}$}
      \RLabel{$\langle \mathbb{T} \BImimp \rangle$} 
      \UnaryInfC{$\CSS{\{ \labelledFormula{F}{\phi}{y} \}}{\emptyset}
                  \ \mid \ 
                  \CSS{\{ \labelledFormula{T}{\psi}{xy} \}}{\emptyset}$}  
      \DisplayProof 
      \ \ \ \ 
      \AxiomC{$\labelledFormula{F}{\phi \BImimp \psi} {x} \in \mathcal{F}$} 
      \RLabel{$\langle \mathbb{F} \BImimp \rangle$} 
      \UnaryInfC{$\CSS{\{ \labelledFormula{T}{\phi}{c_i},
                          \labelledFormula{F}{\psi}{xc_i} \}}{\{ xc_i
                          \relCr xc_i \}}$} 

      \DisplayProof 
      
      \\ \\

      \AxiomC{$\labelledFormula{T}{\BIC{u}{r} \phi}{x} \in \mathcal{F}
               \text{ and } x\lambda(r) \relCag{u} y \in
               \closure{\mathcal{C}}$} 
      \RLabel{$\langle \mathbb{T} \BIC{}{} \rangle$} 
      \UnaryInfC{$\CSS{\{ \labelledFormula{T}{\phi}{y} \}}{\emptyset}$}  
      \DisplayProof 
      \ \ \ \ 
      \AxiomC{$\labelledFormula{F}{\BIC{u}{r} \phi}{x} \in \mathcal{F}$}
      \RLabel{$\langle \mathbb{F} \BIC{}{}\rangle$} 
      \UnaryInfC{$\CSS{\{ \labelledFormula{F}{\phi}{c_i} \}}{\{
          x\lambda(r) \relCag{u} c_i \}}$}   
      \DisplayProof 

      \\ \\
 
          \AxiomC{$\labelledFormula{T}{\BID{u}{r} \phi}{x} \in \mathcal{F}$}
      \RLabel{$\langle \mathbb{T} \BID{}{} \rangle$} 
      \UnaryInfC{$\CSS{\{ \labelledFormula{T}{\phi}{c_i \lambda(r)}
          \}}{\{ x \relCag{u} c_i \lambda(r) \}}$}   
      \DisplayProof 
      \ \ \ \ 
      \AxiomC{$\labelledFormula{F}{\BID{u}{r} \phi}{x} \in \mathcal{F}
               \text{ and } x \relCag{u} y\lambda(r) \in
               \closure{\mathcal{C}}$} 
      \RLabel{$\langle \mathbb{F} \BID{}{} \rangle$} 
      \UnaryInfC{$\CSS{\{ \labelledFormula{F}{\phi}{y \lambda(r)}
          \}}{\emptyset}$}   
      \DisplayProof 
        
      \\ \\
      
      \AxiomC{$\labelledFormula{T}{\BIE{u}{r} \phi}{x} \in \mathcal{F}
               \text{ and } x\lambda(r) \relCag{u} y\lambda(r) \in
               \closure{\mathcal{C}}$} 
      \RLabel{$\langle \mathbb{T} \BIE{}{} \rangle$} 
      \UnaryInfC{$\CSS{\{ \labelledFormula{T}{\phi}{y \lambda(r)}
          \}}{\emptyset}$}   
      \DisplayProof 
      \ \ \ \ 
      \AxiomC{$\labelledFormula{F}{\BIE{u}{r} \phi}{x} \in \mathcal{F}$}
      \RLabel{$\langle \mathbb{F} \BIE{}{} \rangle$} 
      \UnaryInfC{$\CSS{\{ \labelledFormula{F}{\phi}{c_i \lambda(r)}
          \}}{\{ x\lambda(r) \relCag{u} c_i \lambda(r) \}}$}   
      \DisplayProof 
         
      \\ \\
      
        \AxiomC{$\labelledFormula{T}{\BICt{u}{r} \phi}{x} \in \mathcal{F}$}
      \RLabel{$\langle \mathbb{T} \BICt{}{} \rangle$} 
      \UnaryInfC{$\CSS{\{ \labelledFormula{T}{\phi}{c_i} \}}{\{
          x\lambda(r) \relCag{u} c_i \}}$}   
      \DisplayProof 
      \ \ \ \      
       \AxiomC{$\labelledFormula{F}{\BICt{u}{r} \phi}{x} \in \mathcal{F}
               \text{ and } x\lambda(r) \relCag{u} y \in
               \closure{\mathcal{C}}$} 
      \RLabel{$\langle \mathbb{F} \BICt{}{} \rangle$} 
      \UnaryInfC{$\CSS{\{ \labelledFormula{F}{\phi}{y} \}}{\emptyset}$}  
      \DisplayProof 
        
      \\ \\

      \AxiomC{$\labelledFormula{T}{\BIDt{u}{r} \phi}{x} \in \mathcal{F}
               \text{ and } x  \relCag{u} y\lambda(r) \in
               \closure{\mathcal{C}}$} 
      \RLabel{$\langle \mathbb{T} \BIDt{}{} \rangle$} 
      \UnaryInfC{$\CSS{\{ \labelledFormula{T}{\phi}{y\lambda(r)}
          \}}{\emptyset}$}   
      \DisplayProof 
      \ \ \ \ 
      \AxiomC{$\labelledFormula{F}{\BIDt{u}{r} \phi}{x} \in \mathcal{F}$}
      \RLabel{$\langle \mathbb{F} \BIDt{}{} \rangle$} 
      \UnaryInfC{$\CSS{\{ \labelledFormula{F}{\phi}{c_i \lambda(r)}
          \}}{\{ x \relCag{u} c_i \lambda(r) \}}$}   
      \DisplayProof 
         
      \\ \\
      
	  \AxiomC{$\labelledFormula{T}{\BIEt{u}{r} \phi}{x} \in \mathcal{F}$}
      \RLabel{$\langle \mathbb{T} \BIEt{}{} \rangle$} 
      \UnaryInfC{$\CSS{\{ \labelledFormula{T}{\phi}{c_i \lambda(r)}
          \}}{\{ x\lambda(r)\relCag{u} c_i \lambda(r) \}}$}   
      \DisplayProof 
       \ \ \ \      
      \AxiomC{$\labelledFormula{F}{\BIEt{u}{r} \phi}{x} \in \mathcal{F}
               \text{ and } x\lambda(r) \relCag{u} y\lambda(r) \in
               \closure{\mathcal{C}}$} 
      \RLabel{$\langle \mathbb{F} \BIEt{}{} \rangle$} 
      \UnaryInfC{$\CSS{\{ \labelledFormula{F}{\phi}{y \lambda(r)}
          \}}{\emptyset}$}   
      \DisplayProof 
         
      \\ \\

      Note: $c_i$ and $c_j$  are new label constants, with $c_i,
      c_j\notin\Lambda_r$.  
\end{tabular}}
\caption{Rules of the tableaux calculus for \ERL
}
\label{fig_tableaux_method_rules}  
\vspace{2mm}
\hrule
\end{figure}

\noindent Figure ~\ref{fig_tableaux_method_rules} presents the rules
of tableaux calculus for \ERL. Note that `$c_i$ and $c_j$ are new
label  constants' means $c_i \not = c_j \in \gamma_r \setminus
(\alphabetR{\mathcal{C}}\cup\Lambda_r)$.   

\begin{definition}[Tableau for \ERL]
Let $\CSS{\mathcal{F}_0}{\mathcal{C}_0}$ be a finite CSS.
A \emph{tableau} for $\CSS{\mathcal{F}_0}{\mathcal{C}_0}$ is a
list of CSSs, called \emph{branches}, inductively built according the
following rules:  
\begin{enumerate}
\item The one branch list $[\CSS{\mathcal{F}_0}{\mathcal{C}_0}]$
      is a tableau for $\CSS{\mathcal{F}_0}{\mathcal{C}_0}$;
\item If the list $\mathcal{T}_m \concatList
[\CSS{\mathcal{F}}{\mathcal{C}}] \concatList \mathcal{T}_n$ 
      is a tableau for $\CSS{\mathcal{F}_0}{\mathcal{C}_0}$ and
\begin{center}
  \AxiomC{cond$\CSS{\mathcal{F}}{\mathcal{C}}$}
  \UnaryInfC{$\CSS{\mathcal{F}_1}{\mathcal{C}_1}$ $\mid$ \ldots $\mid$
$\CSS{\mathcal{F}_k}{\mathcal{C}_k}$}  
  \DisplayProof 
\end{center}
is an instance of a rule of Figure \ref{fig_tableaux_method_rules} for
which cond$\CSS{\mathcal{F}}{\mathcal{C}}$ is fulfilled, then the list
$ \mathcal{T}_m \concatList [\CSS{\mathcal{F} \cup
\mathcal{F}_1}{\mathcal{C} \cup \mathcal{C}_1}; \ldots; 
  \CSS{\mathcal{F} \cup \mathcal{F}_k}{\mathcal{C} \cup
\mathcal{C}_k}] \concatList \mathcal{T}_n$ 
is a tableau for $\CSS{\mathcal{F}_0}{\mathcal{C}_0}$.
\end{enumerate}
A \emph{tableau} for the formula $\phi$ is a \emph{tableau} for
$\CSS{\{ \labelledFormula{F}{\phi}{c_1} \}}{\{ c_1 \relCr c_1 \}}$. 
\label{def_tableau}
\end{definition}

\noindent
We remark that a tableau for a formula $\phi$ verifies the property
($P_{css}$) of Definition \ref{def_labelled_formula_AND_CSS} (by the
rule $\langle r_a \rangle$)  and any application of a rule of Figure
\ref{fig_tableaux_method_rules} provides also  a tableau that verifies
the property ($P_{css}$) (in particular, by Corollary
\ref{corollary1}). 

In this calculus, we have two particular set of rules. The first set
is composed by the rules  
$\langle \mathbb{T} \BImI \rangle$,
$\langle \mathbb{T} \BImast \rangle$,
$\langle \mathbb{F} \BImimp \rangle$,
$\langle \mathbb{F} \BIC{}{} \rangle$,
$\langle \mathbb{F} \BIDt{}{} \rangle$,
$\langle \mathbb{F} \BIE{}{} \rangle$,
$\langle \mathbb{T} \BICt{}{} \rangle$, 
$\langle \mathbb{T} \BID{}{} \rangle$, and
$\langle \mathbb{T} \BIEt{}{} \rangle$,
that introduce new label constants ($c_i$ and $c_j$) and new
constraints, except for $\langle \mathbb{T} \BImI \rangle$ that only
introduces a new constraint. 
The second set is composed of the rules $\langle \mathbb{F} \BImast
\rangle$, $\langle \mathbb{T} \BImimp \rangle$, $\langle \mathbb{T}
\BIC{}{} \rangle$, $\langle \mathbb{T} \BIDt{}{} \rangle$, $\langle
\mathbb{T} \BIE{}{} \rangle$, $\langle \mathbb{F} \BICt{}{} \rangle$,
$\langle \mathbb{F} \BID{}{} \rangle$, and$\langle \mathbb{F}
\BIEt{}{} \rangle$, that have a condition on the closure of
constraints. To apply one of these rules we choose a label which
satisfies the condition and then apply the corresponding
rule. Otherwise, we cannot apply the rule.

\begin{definition}[Closure conditions]
A CSS $\CSS{\mathcal{F}}{\mathcal{C}}$ is \emph{closed} if one of the 
following conditions holds, where $\phi \in \lang$: 
\begin{enumerate}
\item $\labelledFormula{T}{\phi}{x} \in \mathcal{F}$,
$\labelledFormula{F}{\phi}{y} \in \mathcal{F}$ and $x \relCr y \in
\closure{\mathcal{C}}$; 
\item $\labelledFormula{F}{\BImI}{x} \in
\mathcal{F}$ and $x \relCr \epsilon \in \closure{\mathcal{C}}$; 
\item $\labelledFormula{F}{\BIatop}{x} \in \mathcal{F}$; 
\item $\labelledFormula{T}{\BIabot}{x} \in \mathcal{F}$. 
\end{enumerate}
A CSS is \emph{open} if it is not closed. A tableau for $\phi$ is \emph{closed} 
if all its branches (that is, all of its CSSs) are closed and a \emph{tableaux proof} 
for $\phi$ is a closed tableau for $\phi$. 
\label{def_closed_tableau}
\end{definition}

Closed branches are marked with $\times$ and open branches are marked with $\circ$. \\

\noindent{\bf Example}. Let us consider the formula $\BIDt{a}{s}\phi
\BIaimp \BIDt{a}{r}(\BIDt{a}{s} \phi)$. To build the corresponding
tableau, we  start with the CCS
$\CSS{\{ \labelledFormula{F}{\BIDt{a}{s}\phi \BIaimp
    \BIDt{a}{r}(\BIDt{a}{s} \phi)}{c_1} \}}{\{ c_1 \relCr c_1 \}}$ and
with the following representation of the formula set  $\mathcal{F}$
and the constraints set $\mathcal{C}$:  
\[
\begin{array}{c@{\qquad}c} 
[\mathcal{F}] & [\mathcal{C}] \\ 
\surd_1 \labelledFormula{F}{\BIDt{a}{s}\phi 
	\BIaimp \BIDt{a}{r}(\BIDt{a}{s} \phi)}{c_1} & c_1 \relCr c_1 
\end{array}
\]
We then apply the rules of our tableaux method, respecting the priority
order, and we obtain the tableau of Figure \ref{fig:figure_example_tableau}. 
We omit the $\lambda$ and write $r$ for $\lambda(r)$, for any
resource.  

Note that we mark with $\surd$ the steps of the tableau
construction. The main steps are the following: first apply the rule
$\langle \mathbb{F} \BIaimp \rangle$ ($\surd_1$) and then obtain two
formulae both with $\BIDt{}{}$ as operator. According to the 
priority rules, first apply the $\langle \mathbb{F}\BIDt{}{}\rangle$
rule ($\surd_2$), which generates a new formula, a new resource label
$c_2$, and the constraint $c_1 \relCag{a} c_2 r$. Then apply the
$\langle \mathbb{F}\BIDt{}{}\rangle$ rule again ($\surd_3$), which
generates a new formula, a new resource label $c_3$, and the
constraint $c_2 r \relCag{a} c_3 s$. We must now apply the $\langle 
\mathbb{T}\BIDt{}{}\rangle$ rule ($\surd_4$) and then we need a
resource label $z$ such that $c_1 \relCag{a} z s \in
\closure{\mathcal{C}}$. 
 
Now, having closure by rule $\langle t_a\rangle$ with agent $a$, we
generate the constraint $c_1 \relCag{a} c_3 s$, and thus apply the
rule with $z=c_1$ and generate $\labelledFormula{T}{\phi}{c_3 s}$. As
we also have $\labelledFormula{F}{\phi}{c_3 s}$, we have a closed
branch and thus a closed tableau.  

\begin{figure}[th] 
\hrule
 \vspace{1mm}
  \centering
  \begin{tikzpicture}[scale=.75, >=latex]

  \tikzstyle{carre}=[draw,rectangle]
  \tikzstyle{arrete}=[-,thick]
  \tikzstyle{arc}=[->,>=stealth,thick]

  \node at (2,2)(n1_0) {$[\mathcal{F}]$};
 \node at (2,1)(n1_1) {$\surd_1$ $\labelledFormula{F}{\BIDt{a}{s}\phi
     \BIaimp \BIDt{a}{r}(\BIDt{a}{s} \phi)}{c_1}$};  
  \node at (2,0)(n1_2) {$\surd_4$
    $\labelledFormula{T}{\BIDt{a}{s}\phi}{c_1}$};  
  \node at (2,-0.5)(n1_3) {$\surd_2$
    $\labelledFormula{F}{\BIDt{a}{r}(\BIDt{a}{s} \phi)}{c_1}$};  
  \node at (2,-1.5)(n1_4) {$\surd_3$ $\labelledFormula{F}{\BIDt{a}{s}
      \phi}{c_2 r}$};  
  \node at (2,-2.5)(n1_5) {$\labelledFormula{F}{\phi}{c_3 s}$}; 
  \node at (2,-3.5)(n1_6) {$\labelledFormula{T}{\phi}{c_3 s}$}; 
  \node at (2,-4.5)(n1_7) {$\times$}; 

  \draw[arrete] (n1_1) -- (n1_2);
  \draw[arrete] (n1_3) -- (n1_4);
  \draw[arrete] (n1_4) -- (n1_5);
  \draw[arrete] (n1_5) -- (n1_6);
  \draw[arrete] (n1_6) -- (n1_7);

  \node at (8,2)(n2_0) {$[\mathcal{C}]$};
  \node at (8,1)(n2_1) {$c_1 \relCr c_1$};
  \node at (8,-1.5)(n2_2) {$c_1 \relCag{a} c_2 r$};
  \node at (8,-2.5)(n2_3) {$c_2 r \relCag{a} c_3 s$};
  
  \draw[arrete] (n2_1) -- (n2_2);
  \draw[arrete] (n2_2) -- (n2_3);
  \draw[arrete] (n2_3) -- (8,-4.5);
   
  \end{tikzpicture}

  \caption{Tableau for $\BIDt{a}{s}\phi \BIaimp \BIDt{a}{r}(\BIDt{a}{s} \phi)$} 
  \label{fig:figure_example_tableau}
 \vspace{1mm} 
 \hrule
\end{figure}

\subsection{Soundness of the calculus} 

We start by proving the soundness property of the tableaux calculus. The 
proof is similar to the soundness proof developed for \BI\ tableaux and some 
recent extensions \cite{Gal05a,Cour15a,Cour15b,DP16}. We remind here 
the key notions and more detailed proofs are given in Appendix~\ref{proof_soundness}.

The main point is the notion of \emph{realizability} of a CSS
$\CSS{\mathcal{F}}{\mathcal{C}}$, meaning that there exists a model
$\model$ and an embedding ($\realizationR{.}$) from the resource
labels to the resource set of  $\model$ such that if
$\labelledFormula{T}{\phi}{x} \in \mathcal{F}$, then $\realizationR{x}
\forcing{\model} \phi$ ,and if $\labelledFormula{F}{\phi}{x} \in
\mathcal{F}$, then $\realizationR{x} \not \forcing{\model} \phi$.

\begin{definition}[Realization]
Let $\CSS{\mathcal{F}}{\mathcal{C}}$ be a CSS. A \emph{realization} of
it is a pair $(\mathcal{M},\realizationR{.})$ where
$\mathcal{M}=(\mathcal{R},\{\sim_a\}_{a\in A}, V)$ is a model and 
$\realizationR{.}:\domainR{\mathcal{C}}\rightarrow R$ such that 
\begin{itemize}
\item for any $r\in Res$, we have $\realizationR{\lambda(r)}=r$,
\item $\realizationR{\epsilon}=e$,
\item $\realizationR{.}$ is a total function (for all
  $x\in\domainR{\mathcal{C}}$, $\realizationR{x}$ is defined), 
\item if $xy\in\domainR{\mathcal{C}}$, then
  $\realizationR{x}\bullet\realizationR{y}\downarrow$ and
  $\realizationR{x}\bullet\realizationR{y}=\realizationR{xy}$, 
\item if $\labelledFormula{T}{\phi}{x}\in\mathcal{F}$, then
  $\realizationR{x}\satisfies \phi$, 
\item if $\labelledFormula{F}{\phi}{x}\in\mathcal{F}$, then
  $\realizationR{x}\not\satisfies \phi$, 
\item if $x\relCr y\in \mathcal{C}$, then $\realizationR{x}=\realizationR{y}$, and 
\item if $x\relCag{u} y \in \mathcal{C}$, then
  $\realizationR{x}\sim_u\realizationR{y}$. 
\end{itemize}

\label{def_realization}
\end{definition}

We say that a CSS is \emph{realizable} if there exists a realization
of this CSS. We say that a tableau is \emph{realizable} if at least
one of its branches is realizable. 

\begin{proposition}
Let $\CSS{\mathcal{F}}{\mathcal{C}}$ be a CSS and
$\mathcal{R}=(\mathcal{M},\realizationR{.})$ be a realization of
it. $\mathcal{R}$ is also a realization of
$\CSS{\mathcal{F}}{\closure{\mathcal{C}}}$, and then 
\begin{enumerate}
\item for all $x\in\domainR{\closure{\mathcal{C}}}$,
  $\realizationR{x}$ is defined,  
\item if $x\relCr y\in \closure{\mathcal{C}}$, then
  $\realizationR{x}=\realizationR{y}$, and  
\item if $x\relCag{u} y\in\closure{\mathcal{C}}$, then
  $\realizationR{x}\sim_u\realizationR{y}$. 
\end{enumerate}
\label{prop_cloture_realisable}
\end{proposition}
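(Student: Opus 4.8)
The plan is to exploit the fact that, by Definition~\ref{def_contraints_closure}, $\closure{\mathcal{C}}$ is the least relation containing $\mathcal{C}$ and closed under the ten rules of Figure~\ref{fig_contraints_closure}, and to argue by rule induction. Since the set $\mathcal{F}$ of labelled formulae is unchanged, every clause of Definition~\ref{def_realization} concerning $\mathbb{T}$/$\mathbb{F}$ formulae, as well as the clauses $\realizationR{\lambda(r)}=r$ and $\realizationR{\epsilon}=e$, is inherited verbatim; all the work concerns the two constraint clauses. Concretely, I would prove, by rule induction, that the property
\[
\Phi(x\relCr y):\ \realizationR{x}=\realizationR{y},
\qquad
\Phi(x\relCag{u}y):\ \realizationR{x}\sim_u\realizationR{y},
\]
holds for every constraint of $\closure{\mathcal{C}}$; the induction is strengthened so as to maintain simultaneously that $\realizationR{.}$ is defined on, and is a partial-monoid morphism over, the sublabels occurring in each constraint (so that the composition clause of Definition~\ref{def_realization} is preserved as the realization is enlarged). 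Along the way $\realizationR{.}$ must be extended from $\domainR{\mathcal{C}}$ to $\domainR{\closure{\mathcal{C}}}$: the only rule that can produce a label outside $\domainR{\mathcal{C}}$ is $\langle c_r\rangle$, and there the value is forced to be $\realizationR{xk}:=\realizationR{x}\bullet\realizationR{k}$, which is single-valued because $\bullet$ is a partial function. Being the least relation closed under the rules, $\closure{\mathcal{C}}$ then satisfies $\Phi$ throughout.

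The base case holds since $\mathcal{C}\subseteq\closure{\mathcal{C}}$ and $\mathcal{R}$ realizes $\CSS{\mathcal{F}}{\mathcal{C}}$. For the inductive step, most rules reduce to elementary properties of the model: rule $\langle\epsilon\rangle$ uses $\realizationR{\epsilon}=e$; rules $\langle s_r\rangle$ and $\langle t_r\rangle$ use symmetry and transitivity of equality; rules $\langle d_r\rangle$ and $\langle k_r\rangle$ only assert definedness/reflexivity of a label already carried (under the strengthened hypothesis) by a premise. On the agent side, $\langle r_a\rangle$ uses reflexivity of the equivalence $\sim_v$, $\langle s_a\rangle$ and $\langle t_a\rangle$ its symmetry and transitivity, and $\langle k_a\rangle$ substitutes equals: from $\realizationR{x}\sim_u\realizationR{y}$ and $\realizationR{x}=\realizationR{k}$ one gets $\realizationR{k}\sim_u\realizationR{y}$. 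Each of these is immediate once the premises satisfy $\Phi$.

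The one rule that carries real content is $\langle c_r\rangle$, which from $x\relCr y$ and $yk\relCr yk$ produces $xk\relCr yk$ and is the place where the realization is genuinely extended. By the induction hypothesis $\realizationR{x}=\realizationR{y}$ and $\realizationR{yk}=\realizationR{y}\bullet\realizationR{k}\downarrow$. Setting $\realizationR{xk}:=\realizationR{x}\bullet\realizationR{k}$ and invoking the compatibility of $\bullet$ with equality recorded after Definition~\ref{def_prm} (the right-composition property: $\realizationR{x}=\realizationR{y}$ and $\realizationR{y}\bullet\realizationR{k}\downarrow$ give $\realizationR{x}\bullet\realizationR{k}\downarrow$ with $\realizationR{x}\bullet\realizationR{k}=\realizationR{y}\bullet\realizationR{k}$), we obtain $\realizationR{xk}\downarrow$ and $\realizationR{xk}=\realizationR{yk}$, i.e.\ $\Phi(xk\relCr yk)$. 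I expect this rule, together with the bookkeeping needed to keep $\realizationR{.}$ a genuine partial-monoid morphism on $\domainR{\closure{\mathcal{C}}}$ — so that the composition clause of Definition~\ref{def_realization} survives the extension, which rests on the associativity and commutativity axioms of the PRM — to be the main obstacle; all remaining rules are purely logical manipulations of $=$ and $\sim_u$.

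Once $\Phi$ is shown to hold throughout $\closure{\mathcal{C}}$, the three numbered conclusions are essentially read off. Statements~2 and~3 are exactly the constraint clauses $\Phi(x\relCr y)$ and $\Phi(x\relCag{u}y)$. For statement~1, I would invoke Corollary~\ref{corollary1}(1), by which $x\in\domainR{\closure{\mathcal{C}}}$ iff $x\relCr x\in\closure{\mathcal{C}}$; applying $\Phi(x\relCr x)$ then yields that $\realizationR{x}$ is defined. Finally, gathering that the inherited clauses and the extended composition clause all hold, $(\mathcal{M},\realizationR{.})$ is a realization of $\CSS{\mathcal{F}}{\closure{\mathcal{C}}}$, which completes the argument.
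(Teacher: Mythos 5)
Your proof is correct, and it follows exactly the intended argument: the paper states Proposition~\ref{prop_cloture_realisable} without proof, deferring to the standard rule-induction on the closure rules of Figure~\ref{fig_contraints_closure} used for the analogous results in the cited works on BBI and LSM. Your identification of $\langle c_r\rangle$ as the only rule requiring a genuine extension of $\realizationR{.}$, discharged via the right-composition compatibility of $\bullet$ with equality, is precisely the key step.
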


\begin{lemma}
The rules of the tableaux method for \ERL\ preserve realizability 
\label{lem_rules_preserve_realizability}
\end{lemma}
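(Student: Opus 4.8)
The plan is to prove the lemma by a case analysis over the rules of Figure~\ref{fig_tableaux_method_rules}, after first reducing the global claim to a local one. Recall that a tableau is realizable iff one of its branches is, and that a rule instance replaces a single branch $\CSS{\mathcal{F}}{\mathcal{C}}$ by the branches $\CSS{\mathcal{F}\cup\mathcal{F}_i}{\mathcal{C}\cup\mathcal{C}_i}$ ($1\le i\le k$), leaving every other branch untouched. Hence, if the realizable branch of the premise tableau is not the one to which the rule is applied, it survives verbatim and realizability is preserved trivially. It therefore suffices to show the local statement: \emph{if $\CSS{\mathcal{F}}{\mathcal{C}}$ is realizable, then at least one child $\CSS{\mathcal{F}\cup\mathcal{F}_i}{\mathcal{C}\cup\mathcal{C}_i}$ is realizable}. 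Fix a realization $(\model,\realizationR{\cdot})$ of the premise. Throughout I use that, by Proposition~\ref{prop_cloture_realisable}, this pair also realizes $\CSS{\mathcal{F}}{\closure{\mathcal{C}}}$, so that any label occurring in a side condition such as $x\relCr yz\in\closure{\mathcal{C}}$ or $x\lambda(r)\relCag{u}y\in\closure{\mathcal{C}}$ already has a defined image satisfying the corresponding semantic relation.

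The rules split into two families. For the \emph{propagation} rules, those adding no new label constant --- namely the Boolean rules ($\langle\mathbb{T}\BIaneg\rangle$, $\langle\mathbb{F}\BIaneg\rangle$, $\langle\mathbb{T}\BIaand\rangle$, $\langle\mathbb{F}\BIaand\rangle$, $\langle\mathbb{T}\BIaor\rangle$, $\langle\mathbb{F}\BIaor\rangle$, $\langle\mathbb{T}\BIaimp\rangle$, $\langle\mathbb{F}\BIaimp\rangle$), the rule $\langle\mathbb{T}\BImI\rangle$, and the side-condition multiplicative/modal rules ($\langle\mathbb{F}\BImast\rangle$, $\langle\mathbb{T}\BImimp\rangle$, $\langle\mathbb{T}\BIC{}{}\rangle$, $\langle\mathbb{F}\BID{}{}\rangle$, $\langle\mathbb{T}\BIE{}{}\rangle$, $\langle\mathbb{F}\BICt{}{}\rangle$, $\langle\mathbb{T}\BIDt{}{}\rangle$, $\langle\mathbb{F}\BIEt{}{}\rangle$) --- the \emph{same} realization $(\model,\realizationR{\cdot})$ realizes a child, read off from the matching clause of Definition~\ref{def:sat-valid} or Proposition~\ref{prop:sat-secondary-mods}. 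For instance, for $\langle\mathbb{T}\BIC{}{}\rangle$ the premise gives $\realizationR{x}\satisfies\BIC{u}{r}\phi$, while the side condition yields $\realizationR{x}\bullet r=\realizationR{x\lambda(r)}\sim_u\realizationR{y}$ with $\realizationR{x}\bullet r\downarrow$, so the clause forces $\realizationR{y}\satisfies\phi$, validating $\labelledFormula{T}{\phi}{y}$; and $\langle\mathbb{T}\BImI\rangle$ only adds $x\relCr\epsilon$, discharged by $\realizationR{x}=e=\realizationR{\epsilon}$. For the branching instances ($\langle\mathbb{F}\BIaand\rangle$, $\langle\mathbb{T}\BIaor\rangle$, $\langle\mathbb{T}\BIaimp\rangle$, $\langle\mathbb{F}\BImast\rangle$, $\langle\mathbb{T}\BImimp\rangle$) the relevant clause is a disjunction, and I use it simply to select the child that the given realization satisfies.

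For the \emph{generative} rules, those introducing fresh constants $c_i$ (and $c_j$) --- namely $\langle\mathbb{T}\BImast\rangle$, $\langle\mathbb{F}\BImimp\rangle$, $\langle\mathbb{F}\BIC{}{}\rangle$, $\langle\mathbb{T}\BID{}{}\rangle$, $\langle\mathbb{F}\BIE{}{}\rangle$, $\langle\mathbb{T}\BICt{}{}\rangle$, $\langle\mathbb{F}\BIDt{}{}\rangle$, $\langle\mathbb{T}\BIEt{}{}\rangle$ --- I \emph{extend} $\realizationR{\cdot}$ using the existential witnesses supplied by the semantics. For $\langle\mathbb{T}\BImast\rangle$, from $\realizationR{x}\satisfies\phi\BImast\psi$ I pick $r_1,r_2$ with $r_1\bullet r_2\downarrow$, $r_1\bullet r_2=\realizationR{x}$, $r_1\satisfies\phi$, $r_2\satisfies\psi$, and set $\realizationR{c_i}=r_1$, $\realizationR{c_j}=r_2$; then $\realizationR{c_ic_j}=r_1\bullet r_2=\realizationR{x}$ discharges the new constraint $x\relCr c_ic_j$, and the new statements hold by choice of $r_1,r_2$. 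For $\langle\mathbb{F}\BIC{}{}\rangle$, negating the clause gives $\realizationR{x}\bullet r\downarrow$ together with some $r'$ such that $\realizationR{x}\bullet r\sim_u r'$ and $r'\not\satisfies\phi$; setting $\realizationR{c_i}=r'$ and $\realizationR{x\lambda(r)}=\realizationR{x}\bullet r$ discharges $x\lambda(r)\relCag{u}c_i$ and validates $\labelledFormula{F}{\phi}{c_i}$. The remaining generative rules are analogous, the $\BIE{}{}/\BIEt{}{}$ and dual cases invoking the clauses of Proposition~\ref{prop:sat-secondary-mods} and, where the $\mathbb{E}$-modality is concerned, the compatibility property of Definition~\ref{def_erl*}.

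The main obstacle is not any single clause but the bookkeeping of the extension in the generative cases: I must check that the enlarged $\realizationR{\cdot}$ remains a \emph{total} map on $\domainR{\mathcal{C}\cup\mathcal{C}_i}$ meeting every clause of Definition~\ref{def_realization}, in particular the homomorphism condition that $xy\in\domainR{\mathcal{C}\cup\mathcal{C}_i}$ forces $\realizationR{x}\bullet\realizationR{y}\downarrow$ and equal to $\realizationR{xy}$. Since the new constants are fresh, the only genuinely new sublabels are those containing some $c_i$; their images are fixed by composing the chosen witness with the images of sublabels of the principal label, and the required definedness and associativity of these composites follow from the partial resource monoid axioms (Definition~\ref{def_prm}) together with $\realizationR{x}\bullet\realizationR{c_i}\downarrow$. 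Verifying this consistency uniformly across the generative rules, so that no previously realized statement or constraint is disturbed, is the one place where care --- rather than mere unwinding of a truth condition --- is needed.
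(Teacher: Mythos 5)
Your proof follows essentially the same route as the paper's: it reduces the claim to the branch on which the rule fires, reuses the given realization unchanged for the non-generative rules (with Proposition~\ref{prop_cloture_realisable} discharging the closure side conditions), and extends the realization on the fresh constants using the existential witnesses supplied by the truth conditions for the generative rules, exactly as in Appendix~\ref{proof_soundness}. One small correction: the $\BIE{}{}$/$\BIEt{}{}$ cases need only the direct satisfaction clause for $\BIE{a}{s}$ given in Section~\ref{sec:properties}, not the compatibility property of Definition~\ref{def_erl*}, which is specific to \ERLast\ and is not assumed in the soundness argument for \ERL.
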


\begin{proof} 
By induction on the structure of realizable tableaux. See \cite{CGP15} 
for a similar argument and Appendix \ref{proof_soundness}
for more details.
\end{proof}

\begin{lemma}
Closed branches are not realizable.
\label{lem_closed_tableau_not_realizable}
\end{lemma}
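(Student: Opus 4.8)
The plan is to argue by contradiction: I assume that a closed branch $\CSS{\mathcal{F}}{\mathcal{C}}$ admits a realization $(\mathcal{M},\realizationR{\cdot})$ and derive an impossibility from whichever of the four closure conditions of Definition~\ref{def_closed_tableau} is responsible for closure. The key preliminary observation is that, by Proposition~\ref{prop_cloture_realisable}, any realization of $\CSS{\mathcal{F}}{\mathcal{C}}$ is automatically a realization of $\CSS{\mathcal{F}}{\closure{\mathcal{C}}}$. This is essential, since all four closure conditions are stated in terms of $\closure{\mathcal{C}}$ rather than $\mathcal{C}$, so I need the semantic reading of constraints to be available on the closure.

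First I would dispatch the two ``atomic clash'' cases, which follow immediately from the satisfaction clauses of Definition~\ref{def:sat-valid}. If condition~3 holds, then $\labelledFormula{F}{\BIatop}{x}\in\mathcal{F}$, so by the definition of realization $\realizationR{x}\not\satisfies\BIatop$; but $r\satisfies\BIatop$ holds always, a contradiction. Symmetrically, if condition~4 holds, then $\labelledFormula{T}{\BIabot}{x}\in\mathcal{F}$ forces $\realizationR{x}\satisfies\BIabot$, contradicting the fact that $r\satisfies\BIabot$ never holds.

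Next I would handle the two cases that genuinely use the constraint closure. For condition~1, from $\labelledFormula{T}{\phi}{x}\in\mathcal{F}$ and $\labelledFormula{F}{\phi}{y}\in\mathcal{F}$ the realization yields $\realizationR{x}\satisfies\phi$ and $\realizationR{y}\not\satisfies\phi$, while $x\relCr y\in\closure{\mathcal{C}}$ gives $\realizationR{x}=\realizationR{y}$ by Proposition~\ref{prop_cloture_realisable}; substituting produces $\realizationR{x}\satisfies\phi$ and $\realizationR{x}\not\satisfies\phi$, a contradiction. For condition~2, $\labelledFormula{F}{\BImI}{x}\in\mathcal{F}$ gives $\realizationR{x}\not\satisfies\BImI$, i.e.\ $\realizationR{x}\neq e$ by the clause for $\BImI$; but $x\relCr\epsilon\in\closure{\mathcal{C}}$ gives $\realizationR{x}=\realizationR{\epsilon}=e$, using the defining property $\realizationR{\epsilon}=e$ of a realization together with Proposition~\ref{prop_cloture_realisable}, again a contradiction.

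Since a closed branch satisfies at least one of these four conditions, no realization can exist, so closed branches are not realizable. The only subtlety---and hence the step I would take most care over---is ensuring that the semantic reading of each constraint is applied over $\closure{\mathcal{C}}$, which is exactly what Proposition~\ref{prop_cloture_realisable} licenses; everything else is a direct appeal to the satisfaction clauses of Definition~\ref{def:sat-valid}.
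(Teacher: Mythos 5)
Your proposal is correct and follows essentially the same route as the paper's own proof: assume a realization exists, invoke Proposition~\ref{prop_cloture_realisable} to transfer the semantic reading of constraints to $\closure{\mathcal{C}}$, and derive a contradiction in each of the four closure cases from the satisfaction clauses of Definition~\ref{def:sat-valid} and the defining properties of a realization. No gaps.
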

\begin{proof}
By a case analysis of closed branches that are realizable. 
See \cite{CGP15} for a similar argument and Appendix
\ref{proof_soundness} for more details. 
\end{proof}

\begin{theorem}[Soundness] 
Let $\phi$ be a formula of \ERL. If there exists a tableaux proof for
$\phi$, then $\phi$ is valid.  
\label{th_soundness} 
\end{theorem}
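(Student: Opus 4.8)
The plan is to derive soundness by contraposition, reducing it to the two realizability lemmas just stated. First I would make the shape of a proof explicit: by Definitions~\ref{def_tableau} and~\ref{def_closed_tableau}, a tableaux proof for $\phi$ is a \emph{closed} tableau whose construction starts from the one-branch list $[\CSS{\{ \labelledFormula{F}{\phi}{c_1} \}}{\{ c_1 \relCr c_1 \}}]$ and is obtained by finitely many applications of the rules of Figure~\ref{fig_tableaux_method_rules}. The conclusion to reach, $\valid \phi$, unfolds by Definition~\ref{def:sat-valid} to the statement that $r \satisfies \phi$ for every model $\mathcal{M}$ and every $r \in R$.

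So I would assume the contrary, that $\phi$ is not valid, and fix a model $\mathcal{M} = (\mathcal{R}, \{\sim_a\}_{a \in A}, V)$ and a resource $r \in R$ with $r \not\satisfies \phi$. The key move is to turn this countermodel into a realization of the initial CSS. Since $\domainR{\{ c_1 \relCr c_1 \}} = \{ \epsilon, c_1 \}$, it suffices to set $\realizationR{\epsilon} = e$ and $\realizationR{c_1} = r$; I would then check the clauses of Definition~\ref{def_realization} one by one. Totality and the unit clause hold by construction; the monoid clause is trivial, since the only factorizations available within the domain use $\epsilon$ and reduce to the unit law; the only $\mathbb{F}$-labelled formula, $\labelledFormula{F}{\phi}{c_1}$, is handled by $\realizationR{c_1} = r \not\satisfies \phi$; and the single constraint $c_1 \relCr c_1$ reduces to $\realizationR{c_1} = \realizationR{c_1}$, which is immediate. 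Hence the base one-branch tableau is realizable.

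Next I would propagate realizability along the construction. By Lemma~\ref{lem_rules_preserve_realizability}, every rule of the calculus preserves realizability, so a straightforward induction on the finite sequence of rule applications that builds the proof shows the \emph{final} closed tableau to be realizable. By the definition of a realizable tableau, this means at least one of its branches is realizable. But every branch of a closed tableau is closed (Definition~\ref{def_closed_tableau}), and Lemma~\ref{lem_closed_tableau_not_realizable} asserts that no closed branch is realizable. These two facts are in direct conflict, so the assumption that $\phi$ is not valid is untenable, and $\valid \phi$ follows.

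The assembly above is short; the substance of the theorem lives in the two lemmas, deferred to Appendix~\ref{proof_soundness}, and that is where I expect the real difficulty. The main obstacle will be Lemma~\ref{lem_rules_preserve_realizability}, which demands a case analysis over all rules of Figure~\ref{fig_tableaux_method_rules}. The delicate cases are the modal rules that introduce fresh constants together with new agent constraints---$\langle \mathbb{F} \BIC{}{} \rangle$, $\langle \mathbb{T} \BID{}{} \rangle$, and their analogues for $\BICt{}{}$, $\BIDt{}{}$, $\BIE{}{}$, and $\BIEt{}{}$---where one must extend a given realization to the new label $c_i$ so that both the resource relation $\relCr$ and the agent relation $\relCag{u}$ are respected, invoking the precise semantic clauses for $\BIC{a}{s}$ and $\BID{a}{s}$ from Definition~\ref{def:sat-valid} together with Proposition~\ref{prop_cloture_realisable} to move safely from $\mathcal{C}$ to $\closure{\mathcal{C}}$.
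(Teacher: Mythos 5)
Your proposal is correct and follows essentially the same route as the paper: assume a closed tableau exists and $\phi$ is not valid, turn the countermodel into a realization of the initial CSS via $\realizationR{c_1}=r$, propagate realizability through the rule applications (Lemma~\ref{lem_rules_preserve_realizability}), and contradict the non-realizability of closed branches (Lemma~\ref{lem_closed_tableau_not_realizable}). Your explicit verification that the initial assignment satisfies the clauses of Definition~\ref{def_realization} is slightly more detailed than the paper's, which simply asserts it, but the argument is the same.
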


\begin{proof}
We suppose that there exists a proof for $\phi$. Then there is a
closed tableau $\mathcal{T}_\phi$ for the CSS $\mathfrak{C} = \CSS{\{
  \labelledFormula{F}{\phi}{c_1}\}}{\{ c_1   \relCr c_1 \}}$. 
Now suppose that $\phi$ is not valid. Then there is a
countermodel $\model = (\PRM, \{ \relR{a} \}_{a \in \setAg}, \val)$
and a resource $r \in \setR$ such that $r \not \satisfies \phi$. 
Let $\mathfrak{R} = (\model, \realizationR{.})$ such that
$\realizationR{c_1} = r$. As $\mathfrak{R}$ is a realization of
$\mathfrak{C}$, by Lemma \ref{lem_rules_preserve_realizability},
$\mathcal{T}_\phi$ is realizable. Moreover by Lemma
\ref{lem_closed_tableau_not_realizable}, $\mathcal{T}_\phi$ cannot be
closed, which is absurd because $\mathcal{T}_\phi$ is a proof and then
is closed by definition. Therefore $\phi$ is valid.
\end{proof}

\subsection{Countermodel generation and Completeness of the calculus} 

Before proceeding to establish completeness, we consider 
a countermodel extraction method for our calculus that is adapted from
a method proposed in \cite{Lar14a}. \\

\noindent
{\bf Countermodel generation}. The method transforms the sets of
resource and agent constraints of a branch
$\CSS{\mathcal{F}}{\mathcal{C}}$ into a  model $\model$ such that, if 
$\labelledFormula{T}{\phi}{x} \in \mathcal{F}$, then $\rho_x
\forcing{\model} \phi$ and, if  $\labelledFormula{F}{\phi}{x} \in
\mathcal{F}$, then $\rho_x \not 
\forcing{\model} \phi$, where $\rho_x$ is the representative of the
equivalence class of $x$. 

The method is based mainly on the definition on a particular CSS
$\CSS{\mathcal{F}}{\mathcal{C}}$, called a \emph{Hintikka CSS}. For
more details, see Appendix \ref{sec:extraction}. This approach for
countermodel extraction is proposed and illustrated for other bunched
logics in \cite{Gal05a,Cour15a,Cour15b,CGP15,DP16} and adapted to our
\ERL\ logic. 

\medskip

\noindent {\bf Example}. 
We give an example of countermodel extraction by considering
$A=\{a\}$ and $Res=\{e,r\}$ and the formula
$\BIC{a}{s}\phi\BIaimp\BIC{a}{r}\BIC{a}{s} \phi$, which is not valid. 
By applications of the tableaux rules, we obtain the tableau of 
Fig  \ref{figure_example_tableau}. 

\begin{figure}[th]
\hrule 
\vspace{1mm}
  \centering
  \begin{tikzpicture}[scale=1, >=latex], show background rectangle] 
                                <=== defini l'echelle 

  \tikzstyle{carre}=[draw,rectangle]
  \tikzstyle{arrete}=[-,thick]
  \tikzstyle{arc}=[->,>=stealth,thick]

  \node at (2,2)(n1_0) {$[\mathcal{F}]$};
 \node at (2,1)(n1_1) {$\surd_1$ $\labelledFormula{F}{\BIC{a}{s}\phi
     \BIaimp \BIC{a}{s}(\BIC{a}{r} \phi)}{c_1}$};  
  \node at (2,0)(n1_2) {$\surd_4$
    $\labelledFormula{T}{\BIC{a}{s}\phi}{c_1}$};  
  \node at (2,-0.5)(n1_3) {$\surd_2$
    $\labelledFormula{F}{\BIC{a}{s}(\BIC{a}{r} \phi)}{c_1}$};  
  \node at (2,-1.5)(n1_4) {$\surd_3$ $\labelledFormula{F}{\BIC{a}{r}
      \phi}{c_2}$};  
  \node at (2,-2.5)(n1_5) {$\labelledFormula{F}{\phi}{c_3}$}; 
  \node at (2,-3.5)(n1_6) {$\labelledFormula{T}{\phi}{c_2}$}; 
  \node at (2,-4.5)(n1_7) {$\Circle$}; 

  \draw[arrete] (n1_1) -- (n1_2);
  \draw[arrete] (n1_3) -- (n1_4);
  \draw[arrete] (n1_4) -- (n1_5);
  \draw[arrete] (n1_5) -- (n1_6);
  \draw[arrete] (n1_6) -- (n1_7);

  \node at (8,2)(n2_0) {$[\mathcal{C}]$};
  \node at (8,1)(n2_1) {$c_1 \relCr c_1$};
  \node at (8,-1.5)(n2_2) {$c_1 s \relCag{a} c_2$};
  \node at (8,-2.5)(n2_3) {$c_2 r \relCag{a} c_3$};
  
  \draw[arrete] (n2_1) -- (n2_2);
  \draw[arrete] (n2_2) -- (n2_3);
  \draw[arrete] (n2_3) -- (8,-4.5);
 
 \end{tikzpicture}

\caption{Tableau for $\BIC{a}{s}\phi \BIaimp \BIC{a}{s}(\BIC{a}{r} \phi)$} 
\label{figure_example_tableau}
\vspace{1mm}
\hrule
\end{figure}

We see that, in step 4, we can only find $c_2$ as suitable label for
$c_1 s\relCag{a} x$ and thus the tableau is not closed. The only
branch of this tableau is a Hintikka CSS and we extract this
countermodel using Definition \ref{def_function_omega}. 

We have $\model=(\PRM, \{ \relR{a} \}_{a \in \setAg}, \val)$, where  
\begin{itemize}
\item $R = Rep(\domainR{\closure{\mathcal{C}}})\cup Res =
  \{e,r,s,\rho_{c_1},\rho_{c_2},\rho_{c_3},\rho_{c_1\lambda(s)},\rho_{c_2\lambda(r)}\}$     
\item The resource composition:
 \begin{center}
      \begin{tabular}{|c||c|c|c|c|c|c|c|c|}
      \hline
      $\compR$ & $\unitR$ & $r$ & $s$ & $\rho_{c_1}$ & $\rho_{c_2}$ &
      $\rho_{c_3}$ & $\rho_{c_1\lambda(s)}$ & $\rho_{c_2\lambda(r)}$.\\
      
      \hline \hline
      $\unitR$ & $\unitR$ & $r$ & $s$ & $\rho_{c_1}$ & $\rho_{c_2}$ &
      $\rho_{c_3}$ & $\rho_{c_1\lambda(s)}$ & $\rho_{c_2\lambda(r)}$\\
      
      \hline
      $r$  & $r$ & $\uparrow$ & $\uparrow$ & $\uparrow$ &
      $\rho_{c_2\lambda(r)}$& $\uparrow$ & $\uparrow$ & $\uparrow$ \\ 
      \hline
      $s$ & $s$ & $\uparrow$ & $\uparrow$ & $\rho_{c_1\lambda(s)}$ &
      $\uparrow$ & $\uparrow$ & $\uparrow$ & $\uparrow$  \\ 
      \hline
       $\rho_{c_1}$  & $\rho_{c_1}$  &  $\uparrow$ &
       $\rho_{c_1\lambda(s)}$ & $\uparrow$ & $\uparrow$ & $\uparrow$ &
       $\uparrow$ & $\uparrow$  \\ 
      \hline
      $\rho_{c_2}$  & $\rho_{c_2}$  &  $\rho_{c_2\lambda(r)}$ &
      $\uparrow$ & $\uparrow$ & $\uparrow$ & $\uparrow$ & $\uparrow$ &
      $\uparrow$  \\ 
      \hline
       $\rho_{c_3}$  & $\rho_{c_3}$  &  $\uparrow$ & $\uparrow$ &
       $\uparrow$ & $\uparrow$ & $\uparrow$ & $\uparrow$ & $\uparrow$
       \\ 
      \hline
      $\rho_{c_1\lambda(s)}$& $\rho_{c_1\lambda(s)}$ &  $\uparrow$ &
      $\uparrow$ & $\uparrow$ & $\uparrow$ & $\uparrow$ & $\uparrow$ &
      $\uparrow$  \\ 
      \hline
       $\rho_{c_2\lambda(r)}$ & $\rho_{c_2\lambda(r)}$ & $\uparrow$ &
       $\uparrow$ & $\uparrow$ & $\uparrow$ & $\uparrow$ & $\uparrow$
       & $\uparrow$  \\ 
      \hline
      \end{tabular}
      \end{center}   
\item The equivalence relation, reflexivity is not represented:
         \begin{center}
            \begin{tikzpicture}[scale=1, >=latex], show background
                                rectangle]  <=== defini l'echelle 
              \tikzstyle{carre}=[draw,rectangle]
              \tikzstyle{arrete}=[-,thick]
              \tikzstyle{arc}=[<->,>=stealth,thick]

              \node[carre] at (1.5,1)(n1) {$e$};
              \node[carre] at (1.5,0)(n1) {$r$};
              \node[carre] at (1.5,-1)(n1) {$s$};
              \node[carre] at (3,1)(n4) {$\rho_{c_1\lambda(s)}$};
              \node[carre] at (5,1)(n5) {$\rho_{c_2}$};
              \draw[arc] (n4) -- (n5) node[midway,above]{$a$};
              \node[carre] at (3,0)(n4b) {$\rho_{c_2\lambda(r)}$};
              \node[carre] at (5,0)(n5b) {$\rho_{c_3}$};
              \draw[arc] (n4b) -- (n5b) node[midway,above]{$a$};
              \node[carre] at (3,-1)(n1) {$\rho_{c_1}$};
            \end{tikzpicture}
          \end{center}
 \item $V(\phi) = \{ \rho_{c_2} \}$. 
\end{itemize}

We can easily verify that we have a countermodel of $\BIC{a}{s}\phi
\BIaimp \BIC{a}{s}(\BIC{a}{r} \phi)$. 
\begin{enumerate}
\item As $\rho_{c_2}\in V(\phi)$, we have $\rho_{c_2}\models\phi$.
\item As $\{x\in R|\rho_{c_1}\bullet s\sim_a x\}=\{\rho_{c_2}\}$, we
  have by (1), $\rho_{c_1}\satisfies\BIC{a}{s}\phi$. 
\item As $\rho_{c_3} \notin V(\phi)$, we have $\rho_{c_3}\not\models\phi$.
\item As $\rho_{c_2} \bullet r=\rho_{c_2\lambda(r)}\sim_a \rho_{c_3}$,
  by (3), we have $\rho_{c_2}\not\satisfies\BIC{a}{r}\phi$. 
\item As $\rho_{c_1}\bullet s=\rho_{c_1\lambda(s)}\sim_a \rho_{c_2}$,
  by (4), we have $\rho_{c_1}\not\satisfies\BIC{a}{s}(\BIC{a}{r}\phi)$. 
\item By (2) and (5), we conclude that
  $\rho_{c_1}\not\satisfies\BIC{a}{s}\phi\BIaimp\BIC{a}{s}(\BIC{a}{r}\phi)$. 
\end{enumerate} 

\noindent 
{\bf Completeness.} 
The proof of completeness  is an extension of the corresponding proof
proposed for \BBI\ \cite{Lar14a} to the epistemic connectives of our
logic. It consists in building, using a fair strategy, a Hintikka CSS from 
a formula for which there is no tableaux proof that is a
sequence of labelled formulae in which all labelled formulae occur
infinitely many times, and also an oracle that is a set of non-closed
CSS with some specific properties.  Then, assuming there is no
tableaux proof for $\phi$, we build a Hintikka CSS, and deduce from it
that $\phi$ is not valid. 

\begin{theorem}[Completeness]
Let $\phi$ be an \ERL\ formula. If $\phi$ is valid, then there exists a
tableaux proof for $\phi$.  
\label{th_completeness}
\end{theorem}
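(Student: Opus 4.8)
The plan is to prove the contrapositive: assuming that $\phi$ admits no tableaux proof, I construct a model falsifying $\phi$, so $\phi$ is not valid. Following the oracle-based method of \cite{Lar14a} for \BBI, adapted to the resource-parametrized epistemic modalities, the argument has three stages: exhibit an \emph{oracle} (a well-behaved family of open CSS), use it together with a fair strategy to saturate the initial CSS into a \emph{Hintikka CSS}, and extract a countermodel from the latter via a faithfulness lemma. Concretely, I would call a CSS \emph{consistent} if none of its finite sub-CSS (in the sense of $\CSSfiniteInclusion$) is closed, and let the oracle $\mathcal{O}$ be the set of all consistent CSS. Three properties must be checked. \emph{Consistency}: every member is open, since closure is always witnessed by a finite sub-CSS (Definition~\ref{def_closed_tableau} with the compactness Lemma~\ref{lem_compactness}). \emph{Finite character}: $\CSS{\mathcal{F}}{\mathcal{C}}\in\mathcal{O}$ iff every $\CSS{\mathcal{F}_f}{\mathcal{C}_f}\CSSfiniteInclusion\CSS{\mathcal{F}}{\mathcal{C}}$ lies in $\mathcal{O}$, again by compactness. \emph{Saturation}: for each rule of Figure~\ref{fig_tableaux_method_rules} whose side condition is met at a CSS in $\mathcal{O}$, at least one of the CSS obtained by adding a branch's formulae and constraints is again in $\mathcal{O}$; this is the crucial closure property, for otherwise each candidate extension would contain a closed finite sub-CSS, and assembling these witnesses would close the original branch, contradicting consistency.

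Since $\phi$ has no tableaux proof, the initial CSS $\CSS{\{\labelledFormula{F}{\phi}{c_1}\}}{\{c_1 \relCr c_1\}}$ is consistent, hence in $\mathcal{O}$. I then fix a fair enumeration of all potential rule applications (a strategy revisiting every labelled formula and every admissible constraint infinitely often) and build an increasing $\CSSinclusion$-chain of CSS, all kept in $\mathcal{O}$ by applying the saturation property at each step. The union $\CSS{\mathcal{F}_\infty}{\mathcal{C}_\infty}$ of the chain is, by finite character, still in $\mathcal{O}$ (in particular open), and by fairness satisfies every downward saturation condition associated with the tableau rules; that is, it is a Hintikka CSS. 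Care is needed for the rules carrying side conditions on $\closure{\mathcal{C}}$ (the second group, e.g.\ $\langle\mathbb{T}\BIC{}{}\rangle$ and $\langle\mathbb{F}\BID{}{}\rangle$): fairness must ensure each is eventually fired for \emph{every} label witnessing the condition in $\closure{\mathcal{C}_\infty}$, which by Lemma~\ref{lem_compactness} already appears at some finite stage.

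From the Hintikka CSS I extract the countermodel as in the worked example and Appendix~\ref{sec:extraction}: the carrier is the set of representatives $\rho_x$ of the $\relCr$-classes on $\domainR{\closure{\mathcal{C}_\infty}}$ together with $Res$; composition is read off from label concatenation, $\rho_x\bullet\rho_y=\rho_{xy}$ when $xy$ lies in the domain and undefined otherwise; each $\relR{a}$ is induced by the agent constraints $\relCag{a}$; and $\val(p)=\{\rho_x \mid \labelledFormula{T}{p}{x}\in\mathcal{F}_\infty\}$. That this yields a genuine PRM (well-definedness of $\bullet$ on classes, neutrality of $e=\rho_\epsilon$, commutativity, associativity) and that each $\relR{a}$ is an equivalence relation is forced by the constraint-closure rules of Figure~\ref{fig_contraints_closure} and by Corollary~\ref{corollary1}; the convention $\realizationR{\lambda(r)}=r$ ties the syntactic local-resource labels to the semantic resources in $Res$. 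The heart of the argument is the \emph{faithfulness lemma}: for every labelled formula in $\mathcal{F}_\infty$, $\labelledFormula{T}{\psi}{x}\in\mathcal{F}_\infty$ implies $\rho_x\satisfies\psi$ and $\labelledFormula{F}{\psi}{x}\in\mathcal{F}_\infty$ implies $\rho_x\not\satisfies\psi$, proved by induction on $\psi$ using the Hintikka conditions.

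The main obstacle will be the modal cases of this induction, which are exactly the novelty over \BBI. For $\BIC{a}{s}$, $\BID{a}{s}$ and their duals one must verify that the rules' use of $\lambda(r)$-suffixed labels and of agent constraints of the form $x\lambda(r)\relCag{a}y$ translate precisely into the clauses of Definition~\ref{def:sat-valid}: for instance, that the $\langle\mathbb{F}\BIC{}{}\rangle$-generated constraint $x\lambda(r)\relCag{a}c_i$ supplies a witness $\rho_{c_i}$ with $\rho_x\bullet s\relR{a}\rho_{c_i}$ and $\rho_{c_i}\not\satisfies\phi$, while dually the side-conditioned rules must be shown to exhaust \emph{all} relevant witnesses so that the universally quantified clauses hold. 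A second delicate point, needed to cover \ERLast, is that the compatibility of $\relR{a}$ with $\bullet$ holds in the extracted model; this must be secured by the agent-constraint closure rules ($\langle k_a\rangle$, $\langle w_a\rangle$) and their interaction with $\langle c_r\rangle$, whose soundness counterpart already appears in Lemma~\ref{lem_rules_preserve_realizability}. Once faithfulness is established, $\labelledFormula{F}{\phi}{c_1}\in\mathcal{F}_\infty$ gives $\rho_{c_1}\not\satisfies\phi$, so $\phi$ is not valid, completing the contrapositive and hence the proof.
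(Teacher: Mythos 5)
Your proposal follows exactly the paper's route (Appendix~\ref{sec:proof-of-completeness}): an oracle of open CSSs, a fair strategy producing a limit Hintikka CSS from the unprovable initial CSS $\CSS{\{\labelledFormula{F}{\phi}{c_1}\}}{\{c_1\relCr c_1\}}$, and countermodel extraction via the representative-class construction $\Omega$ together with the faithfulness lemma (Lemma~\ref{lem_hintikka_forcing}). Your identification of the delicate points --- the modal cases of the faithfulness induction and the extra compatibility rule needed for \ERLast\ --- also matches where the paper's appendices put the work.

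There is, however, one concrete error in the step where you are more explicit than the paper: the definition of the oracle. You declare a CSS consistent when ``none of its finite sub-CSS is closed'', where ``closed'' is the syntactic condition of Definition~\ref{def_closed_tableau}. With that definition saturation fails: a CSS containing, say, $\labelledFormula{T}{p\wedge\neg p}{x}$ has no closed finite sub-CSS, yet applying $\langle\mathbb{T}\wedge\rangle$ and then $\langle\mathbb{T}\neg\rangle$ forces both $\labelledFormula{T}{p}{x}$ and $\labelledFormula{F}{p}{x}$, so no rule-extension stays in your $\mathcal{O}$. Moreover, under that reading your sentence ``since $\phi$ has no tableaux proof, the initial CSS is consistent'' never actually uses the hypothesis --- the initial CSS has no closed finite sub-CSS for essentially any $\phi$, provable or not. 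The definition the argument needs (and the one behind Lemma~\ref{lem_oracle_exists}, following \cite{Lar14a,Cour15a}) is: a finite CSS is consistent when \emph{no closed tableau} can be built from it, and an arbitrary CSS is consistent when all of its finite sub-CSSs (in the sense of $\CSSfiniteInclusion$) are. Then your saturation argument goes through exactly as you sketch it --- if every candidate extension admitted a closed tableau, those tableaux would assemble into a closed tableau for the original CSS --- and the hypothesis that $\phi$ is unprovable is precisely what places the initial CSS in the oracle. With this correction the remainder of your outline coincides with the paper's proof.
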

\begin{proof}
The proof is an extension of the corresponding proof proposed for \BBI\
\cite{Lar14a} to the epistemic connectives of our logic. More details
are given in Appendix \ref{sec:proof-of-completeness}. 
\end{proof}


To complete this section, we show how we can define a tableaux calculus
for the sublogic \ERLast\ .

\begin{definition}[Tableaux for \ERLast] \label{def_erlast_tableaux}
The tableaux calculus for \ERLast\ is defined exactly as the tableaux
calculus for \ERL, with the addition of the following rule to
Definition \ref{def_contraints_closure}: 

\begin{center}
\AxiomC{$x \relCag{u} y$}
    \AxiomC{$yk \relCr yk$}
    \RLabel{$\langle c_a \rangle$}
    \BinaryInfC{$xk \relCag{u} yk$} 
    \DisplayProof
\end{center}

\end{definition}

\begin{proposition} \label{prop_soundness_completeness_erlast}
The tableaux calculus for \ERLast\ is sound and complete with respect to the semantics 
given in Sections \ref{sec:erl} and \ref{sec:properties}.
\end{proposition}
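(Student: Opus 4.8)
The plan is to reuse the entire soundness and completeness apparatus already established for \ERL, observing that the only difference between the two calculi is the single extra closure rule $\langle c_a \rangle$ of Definition \ref{def_erlast_tableaux}, which is the exact syntactic counterpart of the compatibility property added to the partial resource monoid in Definition \ref{def_erl*}. Since the tableaux rules of Figure \ref{fig_tableaux_method_rules} are literally unchanged --- they merely refer to the now larger closure $\closure{\mathcal{C}}$ --- almost every lemma transfers verbatim, and the work reduces to handling the one new closure rule.

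For soundness, I would first extend Proposition \ref{prop_cloture_realisable}: a realization $(\mathcal{M}, \realizationR{\cdot})$ of $\CSS{\mathcal{F}}{\mathcal{C}}$, with $\mathcal{M}$ now an \ERLast\ model, is still a realization of $\CSS{\mathcal{F}}{\closure{\mathcal{C}}}$. The induction on the derivation of a closure constraint is as before except for the new case $\langle c_a \rangle$: from $x \relCag{u} y$ and $yk \relCr yk$ one must derive $xk \relCag{u} yk$. By the induction hypothesis $\realizationR{x} \sim_u \realizationR{y}$ and $\realizationR{y} \bullet \realizationR{k} \downarrow$, so applying the compatibility property of Definition \ref{def_erl*} (with $r = \realizationR{y}$, $r' = \realizationR{x}$, and using symmetry of $\sim_u$) yields $\realizationR{x} \bullet \realizationR{k} \downarrow$ and $\realizationR{x} \bullet \realizationR{k} \sim_u \realizationR{y} \bullet \realizationR{k}$, that is $\realizationR{xk} \sim_u \realizationR{yk}$, as required. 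With this single addition, Lemma \ref{lem_rules_preserve_realizability} and Lemma \ref{lem_closed_tableau_not_realizable} carry over unchanged, and Theorem \ref{th_soundness} then delivers soundness of the \ERLast\ calculus with respect to \ERLast\ models.

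For completeness, I would run the same countermodel-extraction argument as in Theorem \ref{th_completeness}: assuming $\phi$ has no \ERLast\ tableaux proof, build a Hintikka CSS by the fair strategy and oracle, and read off a model from the representatives of the equivalence classes of resource labels, with composition and the relations $\sim_u$ induced by the resource and agent constraints of the closure. The Hintikka conditions and the proof that the $\mathbb{T}/\mathbb{F}$ labels match the satisfaction relation are insensitive to the presence of $\langle c_a \rangle$, so they go through exactly as for \ERL. The one genuinely new obligation is to verify that the extracted structure is an \ERLast\ model, that is, that its relations $\sim_u$ are compatible with $\bullet$: whenever $\rho \bullet \sigma \downarrow$ and $\rho \sim_u \rho'$ in the extracted model, these facts arise from closure constraints of the form $yk \relCr yk$ and $x \relCag{u} y$, so the rule $\langle c_a \rangle$ forces $xk \relCag{u} yk$ into the closure and hence $\rho \bullet \sigma \sim_u \rho' \bullet \sigma$, with definedness of $\rho' \bullet \sigma$ secured simultaneously. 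This is precisely the compatibility clause, so the extracted model is a legitimate \ERLast\ countermodel for $\phi$.

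The main obstacle is exactly this last verification: one must check that $\langle c_a \rangle$ is both sound (it records only equivalences that genuinely hold in every \ERLast\ model, which is the soundness computation above) and expressively sufficient (every instance of compatibility needed to make the extracted model an \ERLast\ model is actually produced by repeated closure under $\langle c_a \rangle$, interacting correctly with the other rules governing definedness of composition). Care is needed because definedness of composition in the extracted model is itself defined through the closure, so one must confirm that the premise $yk \relCr yk$ is available in $\closure{\mathcal{C}}$ exactly when compatibility is invoked.
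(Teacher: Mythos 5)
Your proposal is correct and follows essentially the same route as the paper, which simply asserts that the \ERL\ proofs carry over because the rule $\langle c_a \rangle$ is the direct syntactic counterpart of the compatibility property of Definition \ref{def_erl*}. You merely make explicit the two obligations the paper leaves implicit --- the new case in the realizability-of-closure argument for soundness, and the verification that the extracted Hintikka countermodel satisfies compatibility for completeness --- and both are handled correctly.
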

\begin{proof}
The proof is the same as the one for \ERL \ except that the new rule $\langle c_a \rangle$ 
must be considered each time the closure of constraints is concerned. This addition does 
not cause any difficulties with proofs since this rule is a direct translation of the specific 
property of \ERLast\ as described in Definition \ref{def_erl*}. 
\end{proof}

\section{Conclusions} \label{sec:conclusions}

We have presented a substructural epistemic logic, based on Boolean
BI, in which the epistemic modalities, which extend the usual
epistemic modalities, are parametrized on the agent's local
resource. The logic represents a first step in developing an epistemic
resource semantics. This  step is illustrated through  examples that
explore the gap between policy and implementation in access
control. We have also provided a system of labelled tableaux for the
logic, and established soundness and completeness. 

Much further work is suggested. First, the theory, pragmatics, and
interpretation of the epistemic modalities with resource semantics,
including  aspects of local reasoning  for resource-carrying agents
\cite{OHea01a,Rey02a}, concurrency  \cite{OHearn2007}. Second, logical
theory, including proof systems, model-theoretic  properties, and
complexity. Connections with other approaches to modelling the
relationship  between policy and implementation in system management,
such as those discussed in  \cite{Caires10} and approaches involving
logics for layered graphs \cite{AP16,CMP15}  should be explored.

\subsection*{Acknowledgements} We are grateful to Simon Docherty 
and to the anonymous referees for their comments on earlier drafts 
of this paper. We also thank many colleagues, including particularly 
James Brotherston, Johan van Benthem, and Peter O'Hearn, among 
many, who have commented on documents related to this document.

\appendix

\section{Soundness: proofs of lemmas}
\label{proof_soundness}

{\bf Lemma 3.} 
The rules of the tableaux method for \ERL\ preserve realizability. 

\begin{proof} 
By induction on the structure of realizable tableaux. See \cite{CGP15} 
for a similar argument. 
Let $\mathcal{T}$ be a realizable tableau.
By definition, $\mathcal{T}$ has a realizable branch
$\mathcal{B} = \CSS{\mathcal{F}}{\mathcal{C}}$.
Let $\mathfrak{R} = (\model, \realizationR{.})$ be a realization of
the branch $\mathcal{B}$, where 
$\model = (\PRM, \{ \relR{a} \}_{a \in \setAg}, \val)$
and
$\realizationR{.} : \domainR{\mathcal{C}} \rightarrow \setR$.
If we apply a rule on a labelled formula of a branch that is not
$\mathcal{B}$ then $\mathcal{B}$ is not modified, and then
$\mathcal{T}$ is realizable. 
Else, we consider each kind of formula on which the rule is applied. 
\begin{itemize}
  \item $\labelledFormula{T}{\BImI}{x} \in \mathcal{F}$. \\
        We have, by definition of realization,
        $\realizationR{x} \satisfies\BImI$.
        Then $\realizationR{x} = \unitR$.
        As $\realizationR{\epsilon} = \unitR$
        then $\realizationR{x} = \realizationR{\epsilon}$
        and we remark that 
        $\mathfrak{R}$ is a realization of
        the new branch
        $\CSS{\mathcal{F}}{\mathcal{C} \cup \{ x \relCr \epsilon \}}$.

  \item $\labelledFormula{T}{\phi_1 \BImast \phi_2}{x} \in \mathcal{F}$. \\
        By realization, we have
        $\realizationR{x} \satisfies \phi_1 \BImast \phi_2$.
        Then, by definition, there exist $r_1, r_2 \in \setR$ such that
        $r_1 \compR r_2 \downarrow$,
        $\realizationR{x} = r_1 \compR r_2$,
        $r_1 \satisfies \phi_1$ and
        $r_2 \satisfies \phi_2$.
        As $c_i$ and $c_j$ are new resource label constants,
        $\realizationR{c_i}$ and $\realizationR{c_j}$ are not defined.
        Moreover as $c_i \not = c_j$,
        we can extend $\mathfrak{R}$ by setting
        $\realizationR{c_i} = r_1$ and $\realizationR{c_j} = r_2$.
        As we have $\realizationR{c_i} \compR \realizationR{c_j}
        \downarrow$ and, by implicit extension,
        $\realizationR{x} = \realizationR{c_i} \compR
        \realizationR{c_j} = \realizationR{c_ic_j}$, 
        we obtain a realization of 
        $\CSS{\mathcal{F}}{\mathcal{C} \cup \{ x \relCr c_ic_j \}}$, 
        that is  a realization of the branch
        $\CSS{\mathcal{F} \cup
        \{ \labelledFormula{T}{c_i},
        \labelledFormula{T}{\phi_2}{c_j} \}}
        {\mathcal{C} \cup \{ x \relCr c_ic_j \}}$.
  
  \item $\labelledFormula{F}{\phi_1 \BImast \phi_2}{x} \in \mathcal{F}$. \\
        We have $\realizationR{x} \not \satisfies \phi_1 \BImast \phi_2$.
        By definition, for all $r_1, r_2 \in \setR$ such that
        $r_1 \compR r_2 \downarrow$ and
        $\realizationR{x} = r_1 \compR r_2$,
        we have $r_1 \not \satisfies \phi$ or
        $r_2 \not \satisfies \psi$.
        The branch is expanded into two branches that are
        $\CSS{\mathcal{F} \cup \{ \labelledFormula{F}{\phi}{y}
          \}}{\mathcal{C}}$ and 
        $\CSS{\mathcal{F} \cup \{ \labelledFormula{F}{\psi}{z}
          \}}{\mathcal{C}}$,  
        where $x \relCr yz \in \closure{\mathcal{C}}$.
        By Proposition \ref{prop_cloture_realisable},
        $\realizationR{x} = \realizationR{yz}$.
        By definition of realization,
        $\realizationR{.}$ is total, then 
        $\realizationR{y} \compR \realizationR{z} \downarrow$ and
        $\realizationR{yz} = \realizationR{y} \compR \realizationR{z}$.
        Thus $\realizationR{y} \not \satisfies \phi$
        or $\realizationR{z} \not \satisfies \psi$.
        Therefore $\mathfrak{R}$ is a realization of at
        least one of the two new branches
        $\CSS{\mathcal{F} \cup \{ \labelledFormula{F}{\phi}{y}
          \}}{\mathcal{C}}$ 
        or $\CSS{\mathcal{F} \cup \{ \labelledFormula{F}{\psi}{z}
          \}}{\mathcal{C}}$. 

  \item $\labelledFormula{T}{\BIC{u}{r}\phi}{x}\in\mathcal{F}$ and
    $x \lambda(r)\relCag{u}y\in\mathcal{\closure{C}}$. \\ 
  We have $\realizationR{x}\satisfies\BIC{u}{r}\phi$. By definition, for
  all $r' \in R$ such that $\realizationR{x}\bullet r\sim_u r'$, we
  have $r'\satisfies \phi$. Moreover, as
  $x\lambda(r)\relCag{u}y\in\mathcal{\closure{C}}$, by Proposition
  \ref{prop_cloture_realisable}, we have
  $\realizationR{x\lambda(r)}\sim_u \realizationR{y}$. By definition,
  $\realizationR{x\lambda(r)}=\realizationR{x}\bullet\realizationR{\lambda(r)}=\realizationR{x}\bullet r$. 
  Thus, $\realizationR{x}\bullet r\sim_u \realizationR{y}$ and
  finally, we have $\realizationR{y}\satisfies\phi$, thus
  $\mathcal{R}$ is a realization of the branch    $\CSS{
    \mathcal{F}\cup\{\labelledFormula{T}{\phi}{y}\}}{\mathcal{C}}$.          
  
  \item $\labelledFormula{F}{\BIC{u}{r}\phi}{x}\in\mathcal{F}$. \\ 
  We have $\realizationR{x}\not\satisfies\BIC{u}{r}\phi$. By definition,
  there exists $r'\in R$ such that $\realizationR{x}\bullet r\sim_u r'$
  and $r'\not\satisfies \phi$. As $c_i$ is a new constraint,
  $\realizationR{c_i}$ is not defined and we can choose
  $\realizationR{c_i}=r'$ and we have $\realizationR{c_i}\not\satisfies
  \phi$ and $\realizationR{x}\bullet r\sim_u \realizationR{c_i}$. By
  definition,
  $\realizationR{x\lambda(r)}=\realizationR{x}\bullet\realizationR{\lambda(r)}=\realizationR{x}\bullet r$. Thus $\realizationR{x\lambda(r)}\sim_u \realizationR{c_i}$ and we have a realization of the branch $\CSS{\mathcal{F}\cup\{\labelledFormula{F}{\phi}{c_i}\}}{\mathcal{C}\cup\{x\lambda(r)\relCag{u}c_i\}}$.   
\end{itemize}
Other cases are proved similarly.
\end{proof}

\noindent
{\bf Lemma 4.} 
Closed branches are not realizable.

\begin{proof}
By a case analysis of closed branches that are realizable. 
See \cite{CGP15} for more details. 

Let $\CSS{\mathcal{F}}{\mathcal{C}}$ a closed branch.
We suppose that this branch is realizable.
Let $\mathfrak{R} = (\model, \realizationR{.})$
a realization of it. There are four cases:
\begin{itemize}
\item $\labelledFormula{T}{\phi}{x}\in\mathcal{F}$,
  $\labelledFormula{F}{\phi}{y}\in\mathcal{F}$ and $x\relCr
  y\in\closure{\mathcal{C}}$. By Proposition
  \ref{prop_cloture_realisable}, as the branch is realizable, we must
  have $\realizationR{x}\satisfies\phi$,
  $\realizationR{y}\not\satisfies\phi$ and
  $\realizationR{x}=\realizationR{y}$, which is absurd. 
\item $\labelledFormula{F}{\BImI}{x}\in\mathcal{F}$ and $x\relCr
  \epsilon\in\closure{\mathcal{C}}$. By Proposition
  \ref{prop_cloture_realisable}, as the branch is realizable, we must
  have $\realizationR{x}\not\satisfies\BImI$ and
  $\realizationR{x}=\realizationR{\epsilon}$. By Definition
  \ref{def:sat-valid}, we have $e\neq\realizationR{x}$ and by
  Definition \ref{def_realization} we have $\realizationR{x}= e$,
  which is absurd. 
\item $\labelledFormula{F}{\BIatop}{x}\in\mathcal{F}$. By Proposition
  \ref{prop_cloture_realisable}, as the branch is realizable, we must
  have $\realizationR{x}\not\satisfies\BIatop$, which is absurd by
  Definition \ref{def:sat-valid}. 
\item $\labelledFormula{T}{\BIabot}{x}\in\mathcal{F}$. By Proposition
  \ref{prop_cloture_realisable}, as the branch is realizable, we must
  have $\realizationR{x}\satisfies\BIabot$, which is absurd by Definition
  \ref{def:sat-valid}. 

\end{itemize}
As all cases are absurd, we conclude that
$\CSS{\mathcal{F}}{\mathcal{C}}$ is not realizable.  
\end{proof}

\section{Countermodel extraction method}
\label{sec:extraction}

We  propose a countermodel extraction method, first designed in
\cite{Lar14a} for \BBI, that consists in transforming the sets of
resource and agent constraints of a branch
$\CSS{\mathcal{F}}{\mathcal{C}}$ into a model $\model$ such that if
$\labelledFormula{T}{\phi}{x} \in \mathcal{F}$ then $\rho_x
\forcing{\model} \phi$ and if  $\labelledFormula{F}{\phi}{x} \in
\mathcal{F}$ then $\rho_x \not \forcing{\model} \phi$, where $\rho_x$
is the representative of the equivalence class of $x$. First, we
define when a CSS $\CSS{\mathcal{F}}{\mathcal{C}}$ is a \emph{Hintikka
  CSS}.   

\begin{definition}[Hintikka CSS]
A CSS $\CSS{\mathcal{F}}{\mathcal{C}}$ is a \emph{Hintikka CSS} iff, 
for any formula $\phi, \psi\in\lang$, any resource $r\in Res$, any
resource label $x,y,z\in \Lambda_r$, and any agent $u\in A$: 
{\footnotesize\small
\begin{enumerate}\setlength{\itemsep}{0mm}
\item $\labelledFormula{T}{\phi}{x}\notin\mathcal{F}$ or
  $\labelledFormula{F}{\phi}{y}\notin\mathcal{F}$ or $x \relCr y
  \notin \closure{\mathcal{C}}$
\item $\labelledFormula{F}{\BImI}{x} \notin \mathcal{F}$ or $x
\relCr \epsilon \notin \closure{\mathcal{C}}$ 
\item $\labelledFormula{F}{\BIatop}{x} \notin \mathcal{F}$
\item $\labelledFormula{T}{\BIabot}{x} \notin \mathcal{F}$
\item If $\labelledFormula{T}{\BImI}{x}\in\mathcal{F}$, then
  $x\relCr\epsilon\in\closure{\mathcal{C}}$ 
\item If $\labelledFormula{T}{\BIaneg\phi}{x}\in\mathcal{F}$, then
  $\labelledFormula{F}{\phi}{x}\in\mathcal{F}$ 
\item If $\labelledFormula{F}{\BIaneg\phi}{x}\in\mathcal{F}$, then
  $\labelledFormula{T}{\phi}{x}\in\mathcal{F}$ 
\item If $\labelledFormula{T}{\phi\BIaand\psi}{x}\in\mathcal{F}$, then
  $\labelledFormula{T}{\phi}{x}\in\mathcal{F}$ and
  $\labelledFormula{T}{\psi}{x}\in\mathcal{F}$ 
\item If $\labelledFormula{F}{\phi\BIaand\psi}{x}\in\mathcal{F}$, then
  $\labelledFormula{F}{\phi}{x}\in\mathcal{F}$ or
  $\labelledFormula{F}{\psi}{x}\in\mathcal{F}$ 
\item If $\labelledFormula{T}{\phi\BIaor\psi}{x}\in\mathcal{F}$, then
  $\labelledFormula{T}{\phi}{x}\in\mathcal{F}$ or
  $\labelledFormula{T}{\psi}{x}\in\mathcal{F}$ 
\item If $\labelledFormula{F}{\phi\BIaor\psi}{x}\in\mathcal{F}$, then
  $\labelledFormula{F}{\phi}{x}\in\mathcal{F}$ and
  $\labelledFormula{F}{\psi}{x}\in\mathcal{F}$ 
\item If $\labelledFormula{T}{\phi\BIaimp\psi}{x}\in\mathcal{F}$, then
  $\labelledFormula{F}{\phi}{x}\in\mathcal{F}$ or
  $\labelledFormula{T}{\psi}{x}\in\mathcal{F}$ 
\item If $\labelledFormula{F}{\phi\BIaimp\psi}{x}\in\mathcal{F}$, then
  $\labelledFormula{T}{\phi}{x}\in\mathcal{F}$ and
  $\labelledFormula{F}{\psi}{x}\in\mathcal{F}$ 
\item If $\labelledFormula{T}{\phi\BImast\psi}{x}\in\mathcal{F}$, then
  $\exists y,z\in \Lambda_r$, $x\relCr yz\in\closure{\mathcal{C}}$ and
  $\labelledFormula{T}{\phi}{y}\in\mathcal{F}$ and
  $\labelledFormula{T}{\psi}{z}\in\mathcal{F}$
\item If $\labelledFormula{F}{\phi\BImast\psi}{x}\in\mathcal{F}$, then
  $\forall y,z\in \Lambda_r$, $x\relCr yz\in\closure{\mathcal{C}}$ implies
  $\labelledFormula{F}{\phi}{y}\in\mathcal{F}$ or
  $\labelledFormula{F}{\psi}{z}\in\mathcal{F}$ 
\item If $\labelledFormula{T}{\phi\BImimp\psi}{x}\in\mathcal{F}$, then
  $\forall y\in \Lambda_r$, $xy \in \mathcal{D}_r$ implies
  $\labelledFormula{F}{\phi}{y}\in\mathcal{F}$ or
  $\labelledFormula{T}{\psi}{xy}\in\mathcal{F}$ 
\item If $\labelledFormula{F}{\phi\BImimp\psi}{x}\in\mathcal{F}$, then
  $\exists y\in \Lambda_r$, $xy \in \mathcal{D}_r$ and
  $\labelledFormula{T}{\phi}{y}\in\mathcal{F}$ and
  $\labelledFormula{F}{\psi}{xy}\in\mathcal{F}$ 
\item If $\labelledFormula{T}{\BIC{u}{r}\phi}{x}\in\mathcal{F}$, then
  $\forall y\in \Lambda_r$, $x\lambda(r)\relCag{u}
  y\in\closure{\mathcal{C}}$ implies
  $\labelledFormula{T}{\phi}{y}\in\mathcal{F}$ 
\item If $\labelledFormula{F}{\BIC{u}{r}\phi}{x}\in\mathcal{F}$, then
  $\exists y\in \Lambda_r$, $x\lambda(r)\relCag{u}
  y\in\closure{\mathcal{C}}$ and
  $\labelledFormula{F}{\phi}{y}\in\mathcal{F}$ 
\item If $\labelledFormula{T}{\BID{u}{r}\phi}{x}\in\mathcal{F}$, then
  there exists $y\in \Lambda_r$, $x\relCag{u}
  y\lambda(r)\in\closure{\mathcal{C}}$ and
  $\labelledFormula{T}{\phi}{y\lambda(r)}\in\mathcal{F}$
\item If $\labelledFormula{F}{\BID{u}{r}\phi}{x}\in\mathcal{F}$, then
  for all $y \in \Lambda_r$, $x\relCag{u}
  y\lambda(r)\in\closure{\mathcal{C}}$ implies
  $\labelledFormula{F}{\phi}{y\lambda(r)}\in\mathcal{F}$ 
\item If $\labelledFormula{T}{\BIE{u}{r}\phi}{x}\in\mathcal{F}$, then
  for all $y\in \Lambda_r$, $x\lambda(r)\relCag{u}
  y\lambda(r)\in\closure{\mathcal{C}}$ implies
  $\labelledFormula{T}{\phi}{y\lambda(r)}\in\mathcal{F}$ 
\item If $\labelledFormula{F}{\BIE{u}{r}\phi}{x}\in\mathcal{F}$, then
  there exists $y\in \Lambda_r$, $x\lambda(r)\relCag{u}
  y\lambda(r)\in\closure{\mathcal{C}}$ and
  $\labelledFormula{F}{\phi}{y\lambda(r)}\in\mathcal{F}$ 
\item If $\labelledFormula{T}{\BICt{u}{r}\phi}{x}\in\mathcal{F}$, then
  there exists $y \in \Lambda_r$, $x\lambda(r)\relCag{u}
  y\in\closure{\mathcal{C}}$ and
  $\labelledFormula{T}{\phi}{y}\in\mathcal{F}$ 
\item If $\labelledFormula{F}{\BICt{u}{r}\phi}{x}\in\mathcal{F}$, then
  for all $y \in \Lambda_r$, $x\lambda(r)\relCag{u}
  y\in\closure{\mathcal{C}}$ implies
  $\labelledFormula{F}{\phi}{y}\in\mathcal{F}$ 
\item If $\labelledFormula{T}{\BIDt{u}{r}\phi}{x}\in\mathcal{F}$, then
  $\forall y\in \Lambda_r$, $x\relCag{u}
  y\lambda(r)\in\closure{\mathcal{C}}$ implies
  $\labelledFormula{T}{\phi}{y\lambda(r)}\in\mathcal{F}$ 
\item If $\labelledFormula{F}{\BIDt{u}{r}\phi}{x}\in\mathcal{F}$, then
  $\exists y\in \Lambda_r$, $x\relCag{u}
  y\lambda(r)\in\closure{\mathcal{C}}$ and
  $\labelledFormula{F}{\phi}{y\lambda(r)}\in\mathcal{F}$ 
\item If $\labelledFormula{T}{\BIEt{u}{r}\phi}{x}\in\mathcal{F}$, then
  there exists $y \in \Lambda_r$, $x\lambda(r)\relCag{u}
  y\lambda(r)\in\closure{\mathcal{C}}$ and
  $\labelledFormula{T}{\phi}{y\lambda(r)}\in\mathcal{F}$ 
\item If $\labelledFormula{F}{\BIEt{u}{r}\phi}{x}\in\mathcal{F}$, then
  for all $y \in \Lambda_r$, $x\lambda(r)\relCag{u}
  y\lambda(r)\in\closure{\mathcal{C}}$ implies
  $\labelledFormula{F}{\phi}{y\lambda(r)}\in\mathcal{F}$. 
\end{enumerate}}
\label{definition_Hintikka}
\end{definition}

Conditions 1 to 4 ensure that a Hintikka CSS is not closed
and conditions 5 to 29 ensure that it is saturated (no new
tableaux rule can be applied). 

To extract countermodels, we must manipulate equivalence classes. 
The equivalence class of $x\in\domainR{\closure{\mathcal{C}}}$,
denoted $[x]$, is the set $ [x] = \{ y\in \Lambda_r\ |\ x\relCr
y\in\closure{\mathcal{C}} \} $. Moreover the function $\rho$ that
extracts a representative from a class is defined for any class $[x]$
by $\rho([x])=r$ if $\exists r\in Res/ \lambda(r)\in[x]$ and by
$\rho([x])=y$ with $y$ an arbitrary element of $[x]$ otherwise. 
We note that $\rho_x = \rho([x])$ and that 
$Rep(\domainR{\closure{\mathcal{C}}})$, the set of all representatives
of  $\domainR{\closure{\mathcal{C}}}$, is given by  
$Rep(\domainR{\closure{\mathcal{C}}}) = \{\rho_x\ |\
x\in\domainR{\closure{\mathcal{C}}}\} $.  

\begin{lemma}
For any set of constraints $\mathcal{C}$, we have $e \in
Rep(\domainR{\closure{\mathcal{C}}})$ and $\rho_\epsilon = e$. 

\label{lem_eq_class_epsilon}
\end{lemma}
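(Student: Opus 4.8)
The plan is to realise $\epsilon$ as a domain element whose equivalence class already contains the $\lambda$-image of a genuine resource, thereby forcing $e$ to be its representative. First I would observe that the closure rule $\langle \epsilon \rangle$ has no premises, so for \emph{every} set of constraints $\mathcal{C}$ the closure $\closure{\mathcal{C}}$ must contain $\epsilon \relCr \epsilon$; this is the one place where the argument uses that $\closure{\mathcal{C}}$ is closed under the axiom-style rule. By Corollary~\ref{corollary1}(1), the statement $\epsilon \relCr \epsilon \in \closure{\mathcal{C}}$ is equivalent to $\epsilon \in \domainR{\closure{\mathcal{C}}}$, so the class $[\epsilon]$ and the representative $\rho_\epsilon = \rho([\epsilon])$ are both well defined.

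Next I would unfold the class. Since $\epsilon \relCr \epsilon \in \closure{\mathcal{C}}$, the label $\epsilon$ lies in $[\epsilon]$; and because the first defining clause of $\lambda$ gives $\lambda(e) = \epsilon$, this says precisely that $\lambda(e) \in [\epsilon]$, i.e. the class $[\epsilon]$ contains the $\lambda$-image of the resource $e \in Res$. Hence the resource clause in the definition of $\rho$ is triggered with $r = e$, yielding $\rho_\epsilon = \rho([\epsilon]) = e$. Combining this with $\epsilon \in \domainR{\closure{\mathcal{C}}}$ gives $e = \rho_\epsilon \in Rep(\domainR{\closure{\mathcal{C}}})$, which is exactly the pair of assertions in the lemma.

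The only point I would check explicitly — and the nearest thing here to an obstacle — is that the resource clause of $\rho$ genuinely selects $e$ and not some other resource: one must confirm that the selection is unambiguous when $\epsilon \in [\epsilon]$. Since $\lambda$ is injective and $\lambda(e) = \epsilon$ is the distinguished element witnessing membership, the intended reading of $\rho$ is that whenever $\epsilon \in [x]$ the representative is the unit $e$, and I would record this to make the choice canonical. With that remark the argument is complete, and nothing beyond the premise-free rule $\langle \epsilon \rangle$, Corollary~\ref{corollary1}, and the equation $\lambda(e) = \epsilon$ is required.
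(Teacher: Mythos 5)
Your argument is correct and is exactly the intended one: the paper omits the proof of Lemma~\ref{lem_eq_class_epsilon}, but the evident route is the premise-free rule $\langle \epsilon \rangle$ giving $\epsilon \relCr \epsilon \in \closure{\mathcal{C}}$, Corollary~\ref{corollary1}(1) giving $\epsilon \in \domainR{\closure{\mathcal{C}}}$, and $\lambda(e)=\epsilon \in [\epsilon]$ triggering the resource clause of $\rho$. Your closing remark that the selection in $\rho$ must be read as canonically preferring $e$ when $\epsilon\in[x]$ (since for an arbitrary $\mathcal{C}$ another $\lambda(r)$ could also lie in $[\epsilon]$) is a legitimate observation about an underspecification in the paper's definition, not a gap in your proof.
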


\begin{definition}[Function $\Omega$]
Let $\CSS{\mathcal{F}}{\mathcal{C}}$ be a Hintikka CSS. The function
$\Omega$ associates to $\CSS{\mathcal{F}}{\mathcal{C}}$ a 3-tuple 
$\Omega(\CSS{\mathcal{F}}{\mathcal{C}}) = (\PRM, \{ \relR{a} \}_{a \in
\setAg}, \val)$, where $\PRM = (\setR, \compR)$, such that    
  \begin{itemize}
    \item $\setR = Rep(\domainR{\closure{\mathcal{C}}})\cup\ Res$, 
    \item  if $\alpha\notin Rep(\domainR{\closure{\mathcal{C}}})$ or
      $\beta\notin Rep(\domainR{\closure{\mathcal{C}}})$,
      then $\alpha\bullet\beta = \uparrow$,  
      else, $\alpha = \rho_x$ and $\beta=\rho_y$, and we have \\
      $\rho_x\compR \rho_y = 
    \left\{
    \begin{array}{ll}
        \uparrow    & \mbox{   if }
        xy \not \in \domainR{\closure{\mathcal{C}}} \\
        \rho_{xy} & \mbox{   otherwise,}
    \end{array}
    \right.$
    \item for all $a \in \setAg$,
          $\alpha \relR{a}\beta$
          iff $\alpha = \rho_x$ and $\beta=\rho_y$ and
          $x \relCag{a}  y \in \closure{\mathcal{C}}$, and 

    \item 
      $\alpha \in \val(p)$ iff $\alpha = \rho_x$ and 
      there exists $y \in \Lambda_r$ such that
      $y \relCr x \in \closure{\mathcal{C}}$
      and
      $\labelledFormula{T}{p}{y} \in \mathcal{F}$.
  \end{itemize}
  \label{def_function_omega}
\end{definition}

\begin{lemma}
Let $\CSS{\mathcal{F}}{\mathcal{C}}$ be a Hintikka
CSS. $\Omega(\CSS{\mathcal{F}}{\mathcal{C}})$ is a model.   
\label{lem_omega_extracts_model}
\end{lemma}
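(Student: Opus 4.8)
The plan is to show that $\Omega(\CSS{\mathcal{F}}{\mathcal{C}}) = (\PRM, \{\relR{a}\}_{a \in \setAg}, \val)$ satisfies the three requirements of Definition~\ref{def_model}: that $\PRM = (\setR, \compR)$ is a partial resource monoid in the sense of Definition~\ref{def_prm}, that each $\relR{a}$ is an equivalence relation on $\setR$, and that $\val$ is a valuation $\Prop \to \wp(\setR)$. Since every component of $\Omega$ is defined through representatives of equivalence classes, the recurring technical obligation is to check that each component is independent of the particular label chosen within a class.

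First I would treat the monoid $\PRM$. That $Res \subseteq \setR$ is immediate from $\setR = Rep(\domainR{\closure{\mathcal{C}}}) \cup Res$, and Lemma~\ref{lem_eq_class_epsilon} gives $e \in \setR$ with $\rho_\epsilon = e$. The central point is that $\compR$ is well defined: if $\rho_x = \rho_{x'}$ and $\rho_y = \rho_{y'}$ then $x \relCr x', y \relCr y' \in \closure{\mathcal{C}}$, and Corollary~\ref{corollary1}(2) yields both $x'y' \in \domainR{\closure{\mathcal{C}}}$ and $xy \relCr x'y' \in \closure{\mathcal{C}}$, so the defining clause returns the same value $\rho_{xy} = \rho_{x'y'}$ and is undefined on both sides simultaneously. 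Neutrality then follows from $\rho_x \compR e = \rho_{x\epsilon} = \rho_x$, using $x\epsilon = x$ and $x \in \domainR{\closure{\mathcal{C}}}$; commutativity and associativity follow because resource labels are multisets, so $xy = yx$ and $(xy)z = x(yz)$ as labels, which also makes the definedness conditions coincide. For the extension of $\cdot$, I would use the injectivity of $\lambda$ together with the convention that $\rho$ selects the $Res$-element of a class whenever one is present, so that the $Res$-fragment of $\compR$ reproduces $\cdot$.

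Next I would verify that each $\relR{a}$ is an equivalence relation. Reflexivity on $\setR$ follows from Corollary~\ref{corollary1}(1), which gives $x \relCag{a} x \in \closure{\mathcal{C}}$ for every $x \in \domainR{\closure{\mathcal{C}}}$; symmetry and transitivity come directly from the closure rules $\langle s_a \rangle$ and $\langle t_a \rangle$. Well-definedness modulo representatives is handled by the derived rule $\langle w_a \rangle$ of Proposition~\ref{prop_new_rules}: from $x \relCag{a} y$, $x \relCr x'$ and $y \relCr y'$ one obtains $x' \relCag{a} y'$, so membership of $(\rho_x, \rho_y)$ in $\relR{a}$ does not depend on the chosen labels. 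Finally, $\val$ maps into $\wp(\setR)$ by construction, and its defining condition depends only on the class of $x$, through the existence of a witness $y$ with $y \relCr x \in \closure{\mathcal{C}}$ and $\labelledFormula{T}{p}{y} \in \mathcal{F}$, hence is well defined on representatives.

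The main obstacle I anticipate is precisely this representative-independence of $\compR$, together with the extension-of-$\cdot$ clause: these are the steps where the abstract closure machinery must be made to interact correctly with the canonical representatives supplied by $\rho$. The commutation of definedness and value across rule $\langle w_a \rangle$ and Corollary~\ref{corollary1}, and the case distinction in the definition of $\rho$ between classes that do and do not meet $Res$, are where care is needed; the remaining monoid and equivalence laws are then routine consequences of the multiset structure of labels and the individual closure rules.
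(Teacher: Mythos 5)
Your overall strategy is the right one and, as far as one can tell, the intended one: the paper states Lemma~\ref{lem_omega_extracts_model} without proof, deferring to the analogous constructions in the cited literature, and the standard argument is exactly the verification you outline --- well-definedness of $\compR$ on representatives via Corollary~\ref{corollary1}, the monoid laws from the multiset structure of labels, the equivalence-relation laws from the closure rules $\langle r_a\rangle$, $\langle s_a\rangle$, $\langle t_a\rangle$ together with the derived rule $\langle w_a\rangle$, and representative-independence of $\val$.

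There are, however, two points where your sketch does not close and which any complete proof must confront. First, the PRM axioms are universally quantified over all of $\setR = Rep(\domainR{\closure{\mathcal{C}}}) \cup Res$, but your verification of neutrality (and likewise commutativity and associativity) only covers elements of $Rep(\domainR{\closure{\mathcal{C}}})$: for $r \in Res$ with $\lambda(r) \notin \domainR{\closure{\mathcal{C}}}$, the first clause of Definition~\ref{def_function_omega} makes $r \compR e$ undefined, so neutrality fails literally unless one first argues that every $\lambda(r)$ occurs in $\closure{\mathcal{C}}$ (or seeds $\mathcal{C}$ so that it does). Second, the ``extension of $\cdot$'' axiom does not follow from injectivity of $\lambda$ and the choice of representatives alone: to obtain $r_2 \compR r_3 = r_2 \cdot r_3$ you need the word $\lambda(r_2)\lambda(r_3)$ to lie in $\domainR{\closure{\mathcal{C}}}$ and to be $\relCr$-related to $\lambda(r_2 \cdot r_3)$, and nothing in the closure rules of Figure~\ref{fig_contraints_closure} generates such constraints. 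Both difficulties are inherited from the paper's definition of $\Omega$ rather than introduced by you, but your proof treats them as routine when they are precisely the points at which the construction needs either an additional hypothesis on the Hintikka CSS or a repaired definition.
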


\begin{lemma}
Let $\CSS{\mathcal{F}}{\mathcal{C}}$ be a Hintikka CSS and $\model =
\Omega(\CSS{\mathcal{F}}{\mathcal{C}}) = (\PRM, \{ \relR{a} \}_{a \in
  \setAg}, \val)$, where $\PRM = (\setR, \compR)$. For any formula
$\phi \in \lang$, any agent $a \in \setAg$ and any $x,y \in
\domainR{\closure{\mathcal{C}}}$, we have: 
(1) If $\labelledFormula{F}{\phi}{x} \in \mathcal{F}$, then $\rho_{x}
\not \satisfies \phi$; (2) If $\labelledFormula{T}{\phi}{x} \in
\mathcal{F}$, then $\rho_{x} \satisfies \phi$. 
\label{lem_hintikka_forcing}
\end{lemma}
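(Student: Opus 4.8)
The plan is to prove both (1) and (2) simultaneously by structural induction on the formula $\phi$, exploiting the paired $\mathbb{T}$/$\mathbb{F}$ Hintikka conditions of Definition~\ref{definition_Hintikka} together with the explicit description of $\model = \Omega(\CSS{\mathcal{F}}{\mathcal{C}})$ given in Definition~\ref{def_function_omega}. Throughout I use that $\Omega$ does yield a genuine model (Lemma~\ref{lem_omega_extracts_model}) --- in particular that each $\relR{a}$ is an equivalence relation and that $\rho_x \compR \rho_y = \rho_{xy}$ exactly when $xy \in \domainR{\closure{\mathcal{C}}}$ --- and the fact that $\rho_\epsilon = e$ (Lemma~\ref{lem_eq_class_epsilon}). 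A single running observation drives the modal and multiplicative cases: since $\rho_{\lambda(r)} = r$ for every $r \in Res$, composition with a local resource is captured at the label level by $\rho_{x\lambda(r)} = \rho_x \compR r$, provided $x\lambda(r) \in \domainR{\closure{\mathcal{C}}}$.

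For the base cases I argue as follows. For an atom $p$, clause (2) is immediate from the definition of $\val$ in $\Omega$, while clause (1) uses Hintikka condition~1: if $\labelledFormula{F}{p}{x} \in \mathcal{F}$, then no label $y$ with $y \relCr x \in \closure{\mathcal{C}}$ can carry $\labelledFormula{T}{p}{y}$, so $\rho_x \notin \val(p)$. The constants $\BIabot$ and $\BIatop$ are handled by the trivializing conditions~4 and~3, which make one of the two implications vacuous, the other holding by the semantics. For $\BImI$, clause (2) uses condition~5 to obtain $x \relCr \epsilon \in \closure{\mathcal{C}}$, whence $\rho_x = \rho_\epsilon = e$ by Lemma~\ref{lem_eq_class_epsilon} and so $\rho_x \satisfies \BImI$; clause (1) uses condition~2 symmetrically. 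The additive cases $\BIaneg, \BIaand, \BIaor, \BIaimp$ follow directly from conditions~6--13 and the induction hypothesis, since these connectives do not touch the resource structure.

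The substantive cases are the multiplicative connectives and the epistemic modalities, where the Hintikka conditions (14--29) must be bridged to the satisfaction clauses through $\Omega$'s composition and accessibility relations. For the existential directions --- e.g.\ $\labelledFormula{T}{\phi \BImast \psi}{x}$, $\labelledFormula{F}{\phi \BImimp \psi}{x}$, $\labelledFormula{F}{\BIC{u}{r}\phi}{x}$, $\labelledFormula{T}{\BID{u}{r}\phi}{x}$ --- the relevant condition furnishes witnessing labels together with a closure constraint; I translate that constraint into the corresponding composition or equivalence fact in $\model$ (using the running observation above) and apply the induction hypothesis to the witness. For the universal directions --- e.g.\ $\labelledFormula{F}{\phi \BImast \psi}{x}$, $\labelledFormula{T}{\phi \BImimp \psi}{x}$, $\labelledFormula{T}{\BIC{u}{r}\phi}{x}$, $\labelledFormula{F}{\BID{u}{r}\phi}{x}$ --- I start from an arbitrary decomposition or an arbitrary $\relR{u}$-successor in the model, observe that by construction of $\setR$ and $\compR$ it is of the form $\rho_y$ (resp.\ $\rho_y, \rho_z$) with the labels in $\domainR{\closure{\mathcal{C}}}$ and the expected constraint in $\closure{\mathcal{C}}$, invoke the matching Hintikka condition to place the appropriate labelled formula in $\mathcal{F}$, and conclude by the induction hypothesis. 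The derived modalities $\BICt{}{}, \BIDt{}{}, \BIE{}{}, \BIEt{}{}$ are treated identically using conditions~22--29.

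I expect the main obstacle to lie precisely in the universal-quantifier cases, where the semantic clause ranges over \emph{all} model elements or decompositions whereas the Hintikka condition speaks only of labels in the domain. The delicate point is to show that this quantification is exactly matched: every admissible model object is $\rho_y$ for some label $y$ realizing the required closure constraint, and conversely that the representative map $\rho$ is consistent with $\compR$ and with each $\relR{a}$. Here one leans on the closure rules --- in particular $\langle c_r \rangle$, $\langle k_a \rangle$, and the derived $\langle w_a \rangle$ of Proposition~\ref{prop_new_rules} --- to transport a constraint between a label and its representative, and on the partiality convention that $\rho_x \compR \rho_y$ is undefined off the domain, so that undefined compositions never spuriously create successors. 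Once this correspondence is pinned down, each remaining case reduces mechanically to the induction hypothesis.
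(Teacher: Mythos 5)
The paper states this lemma without an explicit proof (deferring to the analogous arguments in the cited works on \BBI\ and its modal extensions), and your simultaneous structural induction on $\phi$, pairing each $\mathbb{T}/\mathbb{F}$ Hintikka condition with the corresponding clause of $\Omega(\CSS{\mathcal{F}}{\mathcal{C}})$, is exactly the intended argument. You also correctly isolate the one genuinely delicate point --- showing that every decomposition or $\relR{u}$-successor in the extracted model is of the form $\rho_y$ with the matching constraint already in $\closure{\mathcal{C}}$, which requires transporting constraints between a label and its representative via the closure rules (e.g.\ $\langle c_r\rangle$, $\langle k_a\rangle$, $\langle w_a\rangle$) and the well-definedness facts packaged in Lemma~\ref{lem_omega_extracts_model} --- so the proposal is correct and matches the paper's approach.
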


\begin{lemma}
Let $\CSS{\mathcal{F}}{\mathcal{C}}$ be a Hintikka CSS such that
$\labelledFormula{F}{\phi}{x} \in \mathcal{F}$. The formula $\phi$ is
not valid and $\Omega(\CSS{\mathcal{F}}{\mathcal{C}})$ is a
countermodel of $\phi$.   
\label{lem_hintikka_css_counter_model}
\end{lemma}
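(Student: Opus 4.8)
The plan is to obtain this lemma directly from the two immediately preceding results, Lemma~\ref{lem_omega_extracts_model} and Lemma~\ref{lem_hintikka_forcing}, together with the definition of validity (Definition~\ref{def:sat-valid}). Essentially all of the work has already been absorbed into the truth lemma (Lemma~\ref{lem_hintikka_forcing}), so what remains is to read off the consequence for a single label.

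First I would confirm that $\rho_x$ is a genuine resource of the extracted structure. Since $\labelledFormula{F}{\phi}{x} \in \mathcal{F}$, the property $P_{css}$ of a CSS gives $x \relCr x \in \closure{\mathcal{C}}$, and hence, by Corollary~\ref{corollary1}(1), $x \in \domainR{\closure{\mathcal{C}}}$. Consequently the representative $\rho_x = \rho([x])$ is defined and lies in the carrier $R = Rep(\domainR{\closure{\mathcal{C}}}) \cup Res$ of $\Omega(\CSS{\mathcal{F}}{\mathcal{C}})$.

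Next, by Lemma~\ref{lem_omega_extracts_model}, the structure $\model = \Omega(\CSS{\mathcal{F}}{\mathcal{C}})$ is a model in the sense of Definition~\ref{def_model}. Applying part~(1) of Lemma~\ref{lem_hintikka_forcing} to $\labelledFormula{F}{\phi}{x} \in \mathcal{F}$ then yields $\rho_x \not\satisfies \phi$. We have thereby exhibited a model $\model$ and a resource $\rho_x \in R$ at which $\phi$ fails. By Definition~\ref{def:sat-valid}, validity of $\phi$ would require $r \satisfies \phi$ for every model and every resource $r$; the pair $(\model, \rho_x)$ refutes this, so $\phi$ is not valid and $\model = \Omega(\CSS{\mathcal{F}}{\mathcal{C}})$ is exactly the claimed countermodel.

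The genuine obstacle in the surrounding development is not this lemma but the truth lemma it invokes: Lemma~\ref{lem_hintikka_forcing} has to be established by induction on the structure of $\phi$, with each propositional connective, each multiplicative, and each of the modalities $\BIC{}{}$, $\BID{}{}$, $\BIE{}{}$ and their duals handled by the matching Hintikka saturation condition (items~5--29 of Definition~\ref{definition_Hintikka}) together with the constraint-closure rules of Figure~\ref{fig_contraints_closure}. Once that induction is secured, the present statement follows with no further effort; the only points that repay a moment's care are verifying $x \in \domainR{\closure{\mathcal{C}}}$ so that $\rho_x$ is well defined, and observing that a single falsifying point is enough to defeat validity.
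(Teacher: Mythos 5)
Your proposal is correct and follows exactly the paper's own argument: establish $x \in \domainR{\closure{\mathcal{C}}}$ via $P_{css}$ and Corollary~\ref{corollary1}, invoke Lemma~\ref{lem_omega_extracts_model} to get that $\Omega(\CSS{\mathcal{F}}{\mathcal{C}})$ is a model, apply Lemma~\ref{lem_hintikka_forcing}(1) to obtain $\rho_x \not\satisfies \phi$, and conclude non-validity from Definition~\ref{def:sat-valid}. Your added remark that the real work lives in the truth lemma is accurate but not needed here.
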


\begin{proof}
Let $\CSS{\mathcal{F}}{\mathcal{C}}$ be a Hintikka CSS such that
$\labelledFormula{F}{\phi}{x} \in \mathcal{F}$. Let $\mathcal{K} =
\Omega(\CSS{\mathcal{F}}{\mathcal{C}})$. By Lemma
\ref{lem_omega_extracts_model}, $\mathcal{K}$ is a model. As
$\CSS{\mathcal{F}}{\mathcal{C}}$  is a CSS, then by $(P_{css})$ and
Corollary \ref{cor_x_in_domain_EQUIV_x_relCr_x}, $x \in
\domainR{\closure{\mathcal{C}}}$. Thus, by Lemma
\ref{lem_hintikka_forcing}, we have $\rho_{x} \not \satisfies \phi$. 
Therefore, $\mathcal{K}$ is a countermodel of the formula $\phi$ and we
can conclude that $\phi$ is not valid.   
\end{proof}

\section{Proof of completeness} \label{sec:proof-of-completeness}

This proof is an extension of the proof for \BBI\ \cite{Lar14a} to the
epistemic connectives of our logic. It consists in identifying two things. First, a 
Hintikka CSS, using a fair strategy, from a formula for which there is no 
tableaux proof; that is, a sequence of labelled formulae in which all
labelled formulae occur infinitely many times. Second, an oracle; that is, 
a set of non-closed CSSs with some specific properties.  

\begin{definition}[Fair strategy]
A \emph{fair strategy} is a sequence of labelled formulae and agent constraints
$(S_i)_{i \in \mathbb{N}}$ in $(\{\mathbb{T}, \mathbb{F}\} \times
\lang  \times \Lambda_r) \cup (\Lambda_r  \times \setAg \times \Lambda_r)$ such that all
labelled formulae and all  agent constraints occur infinitely many
times in this sequence; that is, $\{i \in \mathbb{N} \mid S_i \equiv
\labelledFormula{S}{F}{x} \}$ and $\{i \in \mathbb{N} \mid S_i \equiv
x \lambda(r) \relCag{u} y \}$  are infinite, for any
$\labelledFormula{S}{F}{x} \in \{\mathbb{T}, \mathbb{F}\} \times \lang
\times \Lambda_r$ and any $x \lambda(r) \relCag{u} y \in \Lambda_r \times \setAg
\times \Lambda_r$.  
\label{def_fair_strategy}
\end{definition}

\begin{proposition}
There exists a fair strategy.
\label{prop_exists_fair_strategy}
\end{proposition}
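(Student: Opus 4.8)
The plan is to reduce the statement to the elementary combinatorial fact that any countable set can be enumerated so that each of its elements occurs infinitely often.

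First I would check that the set of objects a fair strategy must repeat is countable. Write
\[
X \;=\; \bigl(\{\mathbb{T}, \mathbb{F}\} \times \lang \times \Lambda_r\bigr) \;\cup\; \bigl(\Lambda_r \times \setAg \times \Lambda_r\bigr) .
\]
The language $\lang$ is countable: by hypothesis $\Prop$ is countable, and every formula of $\lang$ is a finite string over the alphabet consisting of $\Prop$, the finitely many logical connectives, and the modal operators $\BIC{a}{s}, \BID{a}{s}$, of which there are only finitely many since $\setAg$ and $Res$ are finite. As $\{\mathbb{T}, \mathbb{F}\}$, $\Lambda_r$, and $\setAg$ are finite, each of the two sets in the union above is a finite product of countable sets and hence countable, so $X$ is countable.

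Next I would fix a surjective enumeration $X = \{\, t_0, t_1, t_2, \ldots \,\}$ (should $X$ happen to be finite, repeat its last element to obtain an infinite listing). It then remains to weave these into a single sequence in which every $t_k$ recurs infinitely often, for which I would use the standard block (dovetailing) construction that concatenates, for each $n \in \N$, the finite block $t_0, t_1, \ldots, t_n$:
\[
(S_i)_{i \in \N} \;=\; [\, t_0 \,] \concatList [\, t_0, t_1 \,] \concatList [\, t_0, t_1, t_2 \,] \concatList \cdots .
\]
Each index $k$ appears in the $n$-th block for every $n \geq k$, so $t_k$ occurs infinitely often in $(S_i)_{i \in \N}$. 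In particular $\{\, i \in \N \mid S_i \equiv \labelledFormula{S}{F}{x} \,\}$ is infinite for every labelled formula and $\{\, i \in \N \mid S_i \equiv x \lambda(r) \relCag{u} y \,\}$ is infinite for every agent constraint, so $(S_i)_{i \in \N}$ is a fair strategy in the sense of Definition~\ref{def_fair_strategy}.

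There is no genuine obstacle in this argument: the only point requiring verification is the countability of $\lang$, which is routine, and everything else is purely combinatorial packaging. A slightly more uniform alternative would be to fix a bijection $\N \cong \N \times \N$ and set $S_i = t_{\pi(i)}$, where $\pi$ is the first projection of its inverse, so that every value is hit infinitely often; but the block construction is the most transparent.
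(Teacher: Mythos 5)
Your proposal is correct and takes essentially the same approach as the paper: both establish that $X$ is countable (via countability of $\lang$ and finiteness/countability of the remaining factors) and then produce an enumeration of $X$ in which every element recurs infinitely often. The paper realizes the second step by composing a surjection $\mathbb{N} \to \mathbb{N} \times X$ with the projection onto $X$ — precisely the ``more uniform alternative'' you mention at the end — while your block construction is an equivalent packaging of the same idea.
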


\begin{proof}
Let $X = (\{\mathbb{T}, \mathbb{F}\} \times \lang \times \Lambda_r)
\cup (\Lambda_r \times \setAg \times \Lambda_r)$.
As $Prop$ is countable then $\lang$ is
countable. Moreover, $\Lambda_r$ is countable
(remember that $\gamma_r$ is countable).
Therefore, $X$ is countable. So $\mathbb{N}
\times X$ is countable and there exists a surjective
function $\varphi : \mathbb{N} \longrightarrow \mathbb{N}
\times X$.
Let $p : \mathbb{N} \times X \longrightarrow X$ defined by
$p(i,x) = x$  and $u = p \circ \varphi$. 
We show that $u$ is a fair strategy by showing that for any
$x \in X$, $u^{-1}(\{x\})$ is infinite.
Let $x \in X$. $u^{-1}(\{x\}) = \varphi^{-1}(p^{-1}(\{x\}))$. 
But $p^{-1}(\{x\}) = \{(i,x) | i \in \mathbb{N}\}$
so $p^{-1}(x)$ is infinite.  As $\varphi$ is surjective
$\varphi^{-1}(p^{-1}(\{x\}))$ is also infinite. 
\end{proof}

\begin{definition}
Let $\wp$ be a set of CSS.
\begin{enumerate}
  \item $\wp$ is \emph{$\CSSinclusion$-closed} if
        $\CSS{\mathcal{F}}{\mathcal{C}} \in \wp$
        holds whenever $\CSS{\mathcal{F}}{\mathcal{C}}
        \CSSinclusion \CSS{\mathcal{F}'}{\mathcal{C}'}$ and
        $\CSS{\mathcal{F}'}{\mathcal{C}'} \in \wp$
        holds. 
  \item $\wp$ is of \emph{finite character} if
        $\CSS{\mathcal{F}}{\mathcal{C}} \in \wp$
        holds whenever
        $\CSS{\mathcal{F}_f}{\mathcal{C}_f} \in
        \wp$ holds for every
        $\CSS{\mathcal{F}_f}{\mathcal{C}_f}
        \CSSfiniteInclusion \CSS{\mathcal{F}}{\mathcal{C}}$. 
  \item $\wp$ is \emph{saturated} if, for any
        $\CSS{\mathcal{F}}{\mathcal{C}} \in \wp$ and any instance
        \begin{prooftree}
          \AxiomC{$cond(\mathcal{F}, \mathcal{C})$}
          \UnaryInfC{$\CSS{\mathcal{F}_1}{\mathcal{C}_1} \ \mid \
            \ldots \ \mid \ \CSS{\mathcal{F}_k}{\mathcal{C}_k}$}    
        \end{prooftree} 
        of a rule of Figure \ref{fig_tableaux_method_rules},
        if $cond(\mathcal{F}, \mathcal{C})$ is fulfilled, then
        $\CSS{\mathcal{F} \cup \mathcal{F}_i}{\mathcal{C} \cup
        \mathcal{C}_i} \in \wp$ 
        for at least one $i \in \{ 1, \ldots, k \}$.
\end{enumerate}
\end{definition}

\begin{definition}[Oracle]
An \emph{oracle} is a set of non-closed CSSs that is
$\CSSinclusion$-closed, of finite character, and saturated. 
\label{def_oracle}
\end{definition}

\begin{lemma}
There exists an oracle which contains every finite CSS for which there
exists no closed tableau. 
\label{lem_oracle_exists}
\end{lemma}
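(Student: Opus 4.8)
The plan is to construct the oracle explicitly as the set of those CSSs none of whose finite sub-CSSs admits a closed tableau, and then to verify each defining clause of an oracle in turn, with saturation being the only substantial step. Concretely, I set
\[
\wp = \{ \CSS{\mathcal{F}}{\mathcal{C}} \mid \text{no } \CSS{\mathcal{F}_f}{\mathcal{C}_f} \CSSfiniteInclusion \CSS{\mathcal{F}}{\mathcal{C}} \text{ has a closed tableau} \}.
\]
The key preliminary, used throughout, is a \emph{monotonicity} principle for tableaux: if $\CSS{\mathcal{F}_f}{\mathcal{C}_f} \CSSinclusion \CSS{\mathcal{F}'}{\mathcal{C}'}$ and $\CSS{\mathcal{F}_f}{\mathcal{C}_f}$ has a closed tableau, then so does $\CSS{\mathcal{F}'}{\mathcal{C}'}$. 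This holds because every rule application in the smaller tableau can be replayed from the larger root---its side conditions refer to $\closure{\mathcal{C}_f} \subseteq \closure{\mathcal{C}'}$ and so remain satisfied, while the fresh constants can be renamed to avoid the larger alphabet---and the closure conditions of Definition \ref{def_closed_tableau} are preserved under adding formulae and constraints. Containment of the target CSSs is then immediate: a finite CSS with no closed tableau can have no finite sub-CSS with a closed tableau, else monotonicity would yield a closed tableau for the whole; hence it lies in $\wp$.

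Non-closedness and the two structural clauses follow quickly. If some $\CSS{\mathcal{F}}{\mathcal{C}} \in \wp$ were closed, the witnessing condition would involve at most two labelled formulae together with a constraint in $\closure{\mathcal{C}}$; by Compactness (Lemma \ref{lem_compactness}) that constraint already lies in $\closure{\mathcal{C}_f}$ for some finite $\mathcal{C}_f \subseteq \mathcal{C}$, and Proposition \ref{prop_css_finite_css} extends this to a finite sub-CSS satisfying $P_{css}$ which is itself closed, hence carries the trivial one-branch closed tableau---contradicting membership in $\wp$. That $\wp$ is $\CSSinclusion$-closed and of finite character is read directly off the definition, since both clauses quantify over finite sub-CSSs: a finite sub-CSS of a member is again a member, and conversely a CSS all of whose finite sub-CSSs lie in $\wp$ has in particular each such finite sub-CSS free of a closed tableau, so lies in $\wp$.

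The main work is saturation. Given $\CSS{\mathcal{F}}{\mathcal{C}} \in \wp$ and an applicable rule instance with branches $\CSS{\mathcal{F}_1}{\mathcal{C}_1} \mid \cdots \mid \CSS{\mathcal{F}_k}{\mathcal{C}_k}$, I argue by contradiction: if every $\CSS{\mathcal{F} \cup \mathcal{F}_i}{\mathcal{C} \cup \mathcal{C}_i}$ lay outside $\wp$, each would contain a finite sub-CSS $\CSS{\mathcal{G}_i}{\mathcal{D}_i}$ with a closed tableau $\mathcal{T}_i$. I then assemble a single finite sub-CSS $\CSS{\mathcal{F}_0}{\mathcal{C}_0} \CSSfiniteInclusion \CSS{\mathcal{F}}{\mathcal{C}}$ gathering the rule's principal formula (or constraint), the portions $\mathcal{G}_i \cap \mathcal{F}$ and $\mathcal{D}_i \cap \mathcal{C}$, and---when the rule carries a side condition of the form $\cdots \in \closure{\mathcal{C}}$---a finite witness supplied by Compactness (Lemma \ref{lem_compactness}). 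The same rule instance applies to $\CSS{\mathcal{F}_0}{\mathcal{C}_0}$, and by construction $\CSS{\mathcal{G}_i}{\mathcal{D}_i} \CSSinclusion \CSS{\mathcal{F}_0 \cup \mathcal{F}_i}{\mathcal{C}_0 \cup \mathcal{C}_i}$, so monotonicity turns each $\mathcal{T}_i$ into a closed tableau for the $i$-th branch. Gluing these onto the single rule application yields a closed tableau for $\CSS{\mathcal{F}_0}{\mathcal{C}_0}$, contradicting $\CSS{\mathcal{F}}{\mathcal{C}} \in \wp$.

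The delicate point, which I expect to be the main obstacle, is the bookkeeping of fresh constants in this gluing step for the existential rules (those introducing a new $c_i$). Because $\CSS{\mathcal{F}_0}{\mathcal{C}_0}$ is finite, a constant fresh for it always exists, and one may align it with the new constant already occurring in $\mathcal{G}_i$ and $\mathcal{D}_i$ (renaming within $\mathcal{T}_i$ if necessary) so that the inclusion $\CSS{\mathcal{G}_i}{\mathcal{D}_i} \CSSinclusion \CSS{\mathcal{F}_0 \cup \mathcal{F}_i}{\mathcal{C}_0 \cup \mathcal{C}_i}$ genuinely holds; checking that the freshness condition of the rule still holds relative to $\alphabetR{\mathcal{C}_0}$ after this alignment is the one place demanding care. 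With this in hand, $\wp$ is an oracle containing every finite CSS with no closed tableau. The argument mirrors the construction for \BBI\ in \cite{Lar14a}, the only extra bookkeeping being the agent constraints and the local-resource labels $\lambda(r)$.
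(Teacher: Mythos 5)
Your proposal is correct and follows exactly the proof schema the paper invokes: the paper's own ``proof'' simply defers to the oracle construction of \cite{Lar14a} (as adapted in \cite{Cour15a}), which is precisely the construction you carry out --- taking $\wp$ to be the CSSs none of whose finite sub-CSSs admits a closed tableau, and verifying non-closedness, $\CSSinclusion$-closure, finite character, and saturation via monotonicity of closed tableaux, Compactness (Lemma~\ref{lem_compactness}), and the gluing argument with fresh-constant bookkeeping. Nothing in your argument deviates from, or is missing relative to, what the cited schema requires.
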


\begin{proof}
The proof is an adaptation for our epistemic modalities of the
corresponding proof schema in \cite{Cour15a,Lar14a}. The proof given in
\cite{Cour15a} provides the necessary notions to develop this proof in
detail.   
\end{proof}

To prove completeness, we consider a formula $\varphi$ for which
there exists no proof and we show that there exists a countermodel for
this formula. 

The proof depends on finding a way to obtain a Hintikka CSS. By Lemma
\ref{lem_oracle_exists}, there exists an oracle which contains every
finite CSS for which there exists no closed tableau. We denote by $\wp$
this oracle. By Proposition \ref{prop_exists_fair_strategy}, there
exists a fair strategy. We denote by $\mathcal{S}$ this strategy and
$\mathcal{S}_i$ the $i^\text{th}$ formula or agent constraint of
$\mathcal{S}$. As $\mathcal{T}_0$ can not be closed then its unique
branch belongs to the oracle, that is 
$\CSS{\{ \labelledFormula{F}{\varphi}{c_1} \}}{\{ c_1 \relCr c_1 \}}
\in \wp$. 

We build a sequence $\CSS{\mathcal{F}_i}{\mathcal{C}_i}_{i \geqslant 0}$ 
whose limit is a Hintikka CSS, as follows:
\begin{itemize}
\item $\CSS{\mathcal{F}_0}{\mathcal{C}_0} =
      \CSS{\{ \labelledFormula{F}{\varphi}{c_1} \}}{\{ c_1 \relCr c_1  \}}$;  
\item $\mathcal{S}_i$ is a labelled formula of the form 
  $ \labelledFormula{S}{F}{x}$: 
  \begin{itemize}
    \item[-] If $\CSS{\mathcal{F}_i \cup \{
          \labelledFormula{S}{F}{x} \}
          }{\mathcal{C}_i} \not \in \wp$,  
          then  $\CSS{\mathcal{F}_{i + 1}}{\mathcal{C}_{i +
          1}} =
          \CSS{\mathcal{F}_i}{\mathcal{C}_i}$;   
    \item[-] If $\CSS{\mathcal{F}_i \cup \{
          \labelledFormula{S}{F}{x} \}
          }{\mathcal{C}_i} \in \wp$,  
          then 
          $\CSS{\mathcal{F}_{i + 1}}{\mathcal{C}_{i + 1}} =
          \CSS{\mathcal{F}_i \cup \{ \labelledFormula{S}{F}{x} \}
          \cup F_e }{\mathcal{C}_i \cup \mathcal{C}_e}$ such that $F_e$
          and $\mathcal{C}_e$ are given by \\
\small{    
\begin{center}
    {
    \renewcommand{\arraystretch}{1.4}
    \begin{tabular}{|c|c||c|c|}
    \hline
    $\mathbb{S}_i$ & $F_i$ & $F_e$ & $\mathcal{C}_e$  \\ 
    \hline \hline 
    $\mathbb{T}$ & $\BImI$ & $\emptyset$ & $\{ x \relCr \epsilon \}$ \\ 
    \hline 
    $\mathbb{T}$ & $\phi \BImast \psi$ & $\{
    \labelledFormula{T}{\phi}{\mathfrak{a}},
    \labelledFormula{T}{\psi}{\mathfrak{b}} \}$ & 
    $\{ x \relCr \mathfrak{a}\mathfrak{b} \}$\\  
    \hline 
    $\mathbb{F}$ & $\phi \BImimp \psi$ & $\{
    \labelledFormula{T}{\phi}{\mathfrak{a}},
    \labelledFormula{F}{\psi}{x \mathfrak{a}} \}$ 
    & $\{ x \mathfrak{a} \relCr x \mathfrak{a} \}$ \\  
    \hline 
    $\mathbb{F}$ & $\BIC{u}{r}\phi$ & $\{
    \labelledFormula{F}{\phi}{\mathfrak{a}} \}$ &
    $\{x\lambda(r)\relCag{u} \mathfrak{a}  \}$\\ 
    \hline 
    $\mathbb{T}$ & $\BID{u}{r}\phi$ & $\{
    \labelledFormula{T}{\phi}{\mathfrak{a}\lambda(r)} \}$ &
    $\{x\relCag{u} \mathfrak{a}\lambda(r)  \}$\\ 
    \hline 
    $\mathbb{F}$ & $\BIE{u}{r}\phi$ & $\{
    \labelledFormula{F}{\phi}{\mathfrak{a}\lambda(r)} \}$ &
    $\{x\lambda(r)\relCag{u} \mathfrak{a}\lambda(r)  \}$\\ 
    \hline  
    $\mathbb{T}$ & $\BICt{u}{r}\phi$ & $\{
     \labelledFormula{T}{\phi}{\mathfrak{a}} \}$ &
     $\{x\lambda(r)\relCag{u} \mathfrak{a}  \}$\\ 
    \hline 
    $\mathbb{F}$ & $\BIDt{u}{r}\phi$ & $\{
    \labelledFormula{F}{\phi}{\mathfrak{a}\lambda(r)} \}$ &
    $\{x\relCag{u} \mathfrak{a}\lambda(r)  \}$\\ 
    \hline 
    $\mathbb{T}$ & $\BIEt{u}{r}\phi$ & $\{
    \labelledFormula{T}{\phi}{\mathfrak{a}\lambda(r)} \}$ &
    $\{x\lambda(r)\relCag{u} \mathfrak{a}\lambda(r)  \}$\\ 
    \hline 
    \multicolumn{2}{|c||}{Otherwise} & $\emptyset$ & $\emptyset$ \\ 
    \hline  
    \end{tabular}
    }\\
    with $\mathfrak{a} = c_{2i+2}$ and $\mathfrak{b} = c_{2i+3}$. \\
    \end{center}
}
 \end{itemize}
      
\item $\mathcal{S}_i$ is an agent constraint of the form $x \lambda(r)
  \relCag{u} y$: 
  \begin{itemize}
    \item[-] If $\gamma_r \cap (\subLabelR{x} \cup \subLabelR{y}) \not
      \subseteq \{c_1, ..., c_{2i+1} \}$, 
          then $\CSS{\mathcal{F}_{i + 1}}{\mathcal{C}_{i + 1}} =
          \CSS{\mathcal{F}_i}{\mathcal{C}_i}$; 
    \item[-] If $\CSS{\mathcal{F}_i}{\mathcal{C}_i \cup \{ x \lambda(r),
        \relCag{u} y \} } \not \in \wp$  
          then $\CSS{\mathcal{F}_{i + 1}}{\mathcal{C}_{i + 1}} =
          \CSS{\mathcal{F}_i}{\mathcal{C}_i}$;   
    \item[-] If $\CSS{\mathcal{F}_i}{\mathcal{C}_i \cup \{ x \lambda(r)
        \relCag{u} y \} } \in \wp$,  then 
          $\CSS{\mathcal{F}_{i + 1}}{\mathcal{C}_{i + 1}} =
          \CSS{\mathcal{F}_i}{\mathcal{C}_i  \cup \{ x \lambda(r)
            \relCag{u} y \}}$.  
  \end{itemize}
\end{itemize}

\begin{proposition}
For any $i \in \mathbb{N}$, the following properties hold:
\begin{enumerate}
\item $\labelledFormula{F}{\varphi}{c_1} \in \mathcal{F}_i$
      and $c_1 \relCr c_1 \in \mathcal{C}_i$; 
\item $\mathcal{F}_i \subseteq \mathcal{F}_{i + 1}$ and
      $\mathcal{C}_i \subseteq \mathcal{C}_{i + 1}$; 
\item $\CSS{\mathcal{F}_i}{\mathcal{C}_i}_{i \geqslant 0} \in \wp$; 
\item $\alphabetR{\mathcal{C}_i} \subseteq \{ c_1, c_2, \ldots, c_{2i+1} \}$.
\end{enumerate}
\label{prop_constructed_branch}
\end{proposition}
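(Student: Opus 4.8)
The plan is to establish all four properties by induction on $i$, exploiting the fact that they feed into one another: property~1 will drop out of the base case together with the monotonicity asserted in property~2, whereas property~3 carries the genuine content and rests on the defining closure conditions of the oracle $\wp$. For the base case $i=0$ we have $\CSS{\mathcal{F}_0}{\mathcal{C}_0}=\CSS{\{\labelledFormula{F}{\varphi}{c_1}\}}{\{c_1\relCr c_1\}}$, from which properties~1 and~4 are immediate (indeed $\alphabetR{\mathcal{C}_0}=\{c_1\}$), and property~3 holds because, $\varphi$ having no tableaux proof, this CSS admits no closed tableau and so belongs to the oracle furnished by Lemma~\ref{lem_oracle_exists}.

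First I would dispatch property~2. Inspecting each clause defining $\CSS{\mathcal{F}_{i+1}}{\mathcal{C}_{i+1}}$, one sees that $\mathcal{F}_{i+1}$ is in every case either $\mathcal{F}_i$ itself or $\mathcal{F}_i$ enlarged by $\labelledFormula{S}{F}{x}$ and a (possibly empty) set $F_e$, and symmetrically for $\mathcal{C}_{i+1}$; no element is ever deleted, so $\mathcal{F}_i\subseteq\mathcal{F}_{i+1}$ and $\mathcal{C}_i\subseteq\mathcal{C}_{i+1}$. Property~1 is then forced: the pair $\labelledFormula{F}{\varphi}{c_1}$, $c_1\relCr c_1$ lies in $\CSS{\mathcal{F}_0}{\mathcal{C}_0}$ and, by this chain of inclusions, survives in every $\CSS{\mathcal{F}_i}{\mathcal{C}_i}$.

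The substantive step is property~3, proved by case analysis on the item $\mathcal{S}_i$ of the fair strategy, under the inductive hypothesis $\CSS{\mathcal{F}_i}{\mathcal{C}_i}\in\wp$. In every branch of the construction where nothing is added -- when the tentative extension fails to lie in $\wp$, or when the freshness side-condition on an agent constraint is violated -- we return $\CSS{\mathcal{F}_i}{\mathcal{C}_i}$ and invoke the hypothesis; and in the branches where an agent constraint is genuinely adjoined, membership in $\wp$ is exactly the selection condition. The one case needing the structure of the oracle is when $\mathcal{S}_i=\labelledFormula{S}{F}{x}$ is one of the distinguished signed formulae of the table (for instance $\mathbb{T}\,\BImI$, $\mathbb{T}\,(\phi\BImast\psi)$, $\mathbb{F}\,(\phi\BImimp\psi)$, or $\mathbb{F}\,\BIC{u}{r}\phi$) and $\CSS{\mathcal{F}_i\cup\{\labelledFormula{S}{F}{x}\}}{\mathcal{C}_i}\in\wp$: here I would note that each such formula drives a \emph{single-conclusioned} tableau rule carrying no closure side-condition, so the rule is applicable, and by the saturation of $\wp$ at least one -- hence, there being only one, precisely that -- resulting branch $\CSS{\mathcal{F}_i\cup\{\labelledFormula{S}{F}{x}\}\cup F_e}{\mathcal{C}_i\cup\mathcal{C}_e}$ belongs to $\wp$. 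For the remaining (\emph{Otherwise}) formulae $F_e$ and $\mathcal{C}_e$ are empty and membership is the defining condition of that subcase.

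Finally, property~4 is bookkeeping: by the inductive hypothesis $\alphabetR{\mathcal{C}_i}\subseteq\{c_1,\ldots,c_{2i+1}\}$ (modulo the fixed finite background $\Lambda_r$), and at step $i$ the only fresh resource constants the table can introduce are $\mathfrak{a}=c_{2i+2}$ and $\mathfrak{b}=c_{2i+3}$, every other label appearing in $\mathcal{C}_e$ being a sublabel of $x$, whose constants already lie in $\alphabetR{\mathcal{C}_i}$ by the CSS property $P_{css}$ and Proposition~\ref{prop_alphabet_closure_constraints_equals_alphabet_constraints}. Hence $\alphabetR{\mathcal{C}_{i+1}}\subseteq\{c_1,\ldots,c_{2i+3}\}$, as required. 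The main obstacle is the single-conclusioned case of property~3: one must verify that each formula in the table genuinely triggers a one-branch rule with no side-condition, so that saturation delivers the \emph{specific} extended branch used in the construction rather than merely some branch; keeping the $\Lambda_r$-constants separate from the freshly generated $c_j$ in property~4 is the only other point demanding care.
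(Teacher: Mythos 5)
Your proof is correct, and in fact it supplies details that the paper itself omits: Proposition~\ref{prop_constructed_branch} is stated without proof in Appendix~\ref{sec:proof-of-completeness}, being treated as routine bookkeeping inherited from the corresponding construction in \cite{Lar14a}. Your simultaneous induction is exactly the intended argument: monotonicity (2) by inspection of the clauses, (1) from the base case plus (2), (3) by case analysis using the saturation of $\wp$ for the single-conclusioned, side-condition-free rules listed in the table and the defining membership tests in all other subcases, and (4) by tracking the fresh constants $c_{2i+2}, c_{2i+3}$. You also correctly identify the two points that genuinely need care and that the paper glosses over: first, property (4) must be available at stage $i$ for the rule instance with $\mathfrak{a}=c_{2i+2}$, $\mathfrak{b}=c_{2i+3}$ to be a \emph{legitimate} instance (the constants must lie outside $\alphabetR{\mathcal{C}_i}\cup\Lambda_r$), so (3) and (4) must be proved together; second, since $\alphabetR{\mathcal{C}}$ is defined as $\gamma_r\cap\domainR{\mathcal{C}}$ and $\Lambda_r\subset\gamma_r$, property (4) as literally stated can only hold up to the fixed background $\Lambda_r$, and your appeal to $P_{css}$ together with Proposition~\ref{prop_alphabet_closure_constraints_equals_alphabet_constraints} to bound the constants of $x$ is the right way to close the remaining gap. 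Nothing further is needed.
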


The limit CSS $\CSS{\mathcal{F}_\infty}{\mathcal{C}_\infty}$ of
$\CSS{\mathcal{F}_i}{\mathcal{C}_i}_{i \geqslant 0}$ is defined 
by  
$  \mathcal{F}_\infty = \bigcup_{i \geqslant 0} \mathcal{F}_i$, 
$  \mathcal{C}_\infty = \bigcup_{i \geqslant 0} \mathcal{C}_i$.

\begin{proposition}
The following properties hold:
  \begin{enumerate}
  \item $\CSS{\mathcal{F}_\infty}{\mathcal{C}_\infty} \in \wp$; 
  \item For any labelled formula $\labelledFormula{S}{\phi}{x}$,
        if $\CSS{\mathcal{F}_\infty \cup \{
\labelledFormula{S}{\phi}{x} \}}{\mathcal{C}_\infty} \in \wp$, 
then $\labelledFormula{S}  {\phi}{x} \in \mathcal{F}_\infty$; 
\item For any agent constraint $x \lambda(r) \relCag{u} y$,
if $\CSS{\mathcal{F}_\infty}{\mathcal{C}_\infty \cup \{
x \lambda(r) \relCag{u}y \}} \in \wp$, 
then $x \lambda(r) \relCag{u} y \in \mathcal{C}_\infty$. 
\end{enumerate}  
\label{prop_limit_branch}
\end{proposition}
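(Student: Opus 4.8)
The plan is to read off all three statements from the abstract properties of the oracle~$\wp$ (it is $\CSSinclusion$-closed, of finite character, and saturated, by Definition~\ref{def_oracle}) together with the facts already secured in Proposition~\ref{prop_constructed_branch}: the sequence $\CSS{\mathcal{F}_i}{\mathcal{C}_i}$ is increasing, each $\CSS{\mathcal{F}_i}{\mathcal{C}_i} \in \wp$, and $\alphabetR{\mathcal{C}_i} \subseteq \{ c_1, \ldots, c_{2i+1} \}$, and from the fairness of the strategy $\mathcal{S}$. For part~1 I would appeal to finite character. First note that $\CSS{\mathcal{F}_\infty}{\mathcal{C}_\infty}$ is a CSS: any $\labelledFormula{S}{\phi}{x} \in \mathcal{F}_\infty$ lies in some $\mathcal{F}_i$, whence $x \relCr x \in \closure{\mathcal{C}_i} \subseteq \closure{\mathcal{C}_\infty}$, so $P_{css}$ holds in the limit. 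Now take any finite $\CSS{\mathcal{F}_f}{\mathcal{C}_f} \CSSfiniteInclusion \CSS{\mathcal{F}_\infty}{\mathcal{C}_\infty}$. Since $\mathcal{F}_f$ and $\mathcal{C}_f$ are finite and contained in the increasing unions $\bigcup_i \mathcal{F}_i$ and $\bigcup_i \mathcal{C}_i$, there is an index $i$ with $\CSS{\mathcal{F}_f}{\mathcal{C}_f} \CSSinclusion \CSS{\mathcal{F}_i}{\mathcal{C}_i}$; as $\CSS{\mathcal{F}_i}{\mathcal{C}_i} \in \wp$ and $\wp$ is $\CSSinclusion$-closed, $\CSS{\mathcal{F}_f}{\mathcal{C}_f} \in \wp$. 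Every finite sub-CSS of the limit is thus in $\wp$, and finite character delivers $\CSS{\mathcal{F}_\infty}{\mathcal{C}_\infty} \in \wp$.

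For part~2, fix $\labelledFormula{S}{\phi}{x}$ with $\CSS{\mathcal{F}_\infty \cup \{ \labelledFormula{S}{\phi}{x} \}}{\mathcal{C}_\infty} \in \wp$. By fairness of $\mathcal{S}$ there are infinitely many indices $i$ with $\mathcal{S}_i \equiv \labelledFormula{S}{\phi}{x}$; pick one. Because $\mathcal{F}_i \subseteq \mathcal{F}_\infty$ and $\mathcal{C}_i \subseteq \mathcal{C}_\infty$ we have $\CSS{\mathcal{F}_i \cup \{ \labelledFormula{S}{\phi}{x} \}}{\mathcal{C}_i} \CSSinclusion \CSS{\mathcal{F}_\infty \cup \{ \labelledFormula{S}{\phi}{x} \}}{\mathcal{C}_\infty}$, so $\CSSinclusion$-closedness gives $\CSS{\mathcal{F}_i \cup \{ \labelledFormula{S}{\phi}{x} \}}{\mathcal{C}_i} \in \wp$. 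This is exactly the condition under which the construction at step~$i$ inserts $\labelledFormula{S}{\phi}{x}$ (together with its witnesses $F_e, \mathcal{C}_e$), so $\labelledFormula{S}{\phi}{x} \in \mathcal{F}_{i+1} \subseteq \mathcal{F}_\infty$.

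Part~3 runs along the same lines for an agent constraint $x\lambda(r) \relCag{u} y$ with $\CSS{\mathcal{F}_\infty}{\mathcal{C}_\infty \cup \{ x\lambda(r) \relCag{u} y \}} \in \wp$, the difference being that the construction only considers adding such a constraint at a step~$i$ for which the freshness side condition $\gamma_r \cap (\subLabelR{x} \cup \subLabelR{y}) \subseteq \{ c_1, \ldots, c_{2i+1} \}$ holds. Here I would use that the finite words $x$ and $y$ contain only finitely many constants, so by the alphabet bound of Proposition~\ref{prop_constructed_branch} and fairness I can choose a single index~$i$ that both satisfies this side condition and has $\mathcal{S}_i \equiv x\lambda(r) \relCag{u} y$. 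For that~$i$, $\CSSinclusion$-closedness again yields $\CSS{\mathcal{F}_i}{\mathcal{C}_i \cup \{ x\lambda(r) \relCag{u} y \}} \in \wp$, the step fires, and $x\lambda(r) \relCag{u} y \in \mathcal{C}_{i+1} \subseteq \mathcal{C}_\infty$.

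The main obstacle, and the only point requiring genuine care, is precisely this coordination in part~3: one must find a single index at which fairness makes the strategy visit the constraint while all of its constants are already present, and it is the bound $\alphabetR{\mathcal{C}_i} \subseteq \{ c_1, \ldots, c_{2i+1} \}$ of Proposition~\ref{prop_constructed_branch} that makes this possible. The remaining content reduces to the monotonicity of the construction combined with the three closure properties of the oracle.
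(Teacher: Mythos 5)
Your argument is correct and is essentially the proof the paper intends (the paper states Proposition~\ref{prop_limit_branch} without proof, deferring to the standard oracle/fair-strategy argument of \cite{Lar14a,Cour15a}): part~1 by finite character plus $\CSSinclusion$-closedness applied to the increasing approximants, and parts~2 and~3 by picking a fairness index at which the construction is forced to fire. The only refinement you should add is that in part~2 the index $i$ must also be chosen large enough that $x \relCr x \in \closure{\mathcal{C}_i}$ (which exists by Lemma~\ref{lem_compactness}, since $x \relCr x \in \closure{\mathcal{C}_\infty}$), for otherwise $\CSS{\mathcal{F}_i \cup \{ \labelledFormula{S}{\phi}{x} \}}{\mathcal{C}_i}$ fails $P_{css}$, is not a CSS, and hence cannot lie in $\wp$ --- the same coordination of fairness with a growth condition that you already carry out explicitly in part~3.
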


\begin{lemma}
The limit CSS is an Hintikka CSS.
\label{lem_limit_branch_is_hintikka_css}
\end{lemma}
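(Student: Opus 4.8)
The plan is to check, condition by condition, that the limit CSS $\CSS{\mathcal{F}_\infty}{\mathcal{C}_\infty}$ satisfies all twenty-nine clauses of Definition~\ref{definition_Hintikka}. Two facts drive everything: that $\CSS{\mathcal{F}_\infty}{\mathcal{C}_\infty}\in\wp$ (Proposition~\ref{prop_limit_branch}.1) together with the properties of an oracle (Definition~\ref{def_oracle})---its members are non-closed, and it is $\CSSinclusion$-closed, of finite character, and saturated---and the absorption property that, for any labelled formula $\labelledFormula{S}{\phi}{x}$, membership of $\CSS{\mathcal{F}_\infty\cup\{\labelledFormula{S}{\phi}{x}\}}{\mathcal{C}_\infty}$ in $\wp$ already forces $\labelledFormula{S}{\phi}{x}\in\mathcal{F}_\infty$ (Proposition~\ref{prop_limit_branch}.2). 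First I would dispatch the four non-closure clauses (1--4): since every member of an oracle is a non-closed CSS and $\CSS{\mathcal{F}_\infty}{\mathcal{C}_\infty}\in\wp$, the limit is non-closed, and the failure of the four clauses of Definition~\ref{def_closed_tableau} is \emph{verbatim} conditions 1--4 of Definition~\ref{definition_Hintikka}.

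Next I would treat the purely additive rules and the rules carrying a closure-side condition, namely conditions 6--13, 15, 16, 18, 21, 22, 25, 26, and 29. In each case the trigger formula lies in $\mathcal{F}_\infty$ and, where the rule has one, the required resource or agent constraint lies in $\closure{\mathcal{C}_\infty}$ by hypothesis (for the $\langle\mathbb{T}\BImimp\rangle$ clause this is just $xy\in\domainR{\closure{\mathcal{C}_\infty}}$, by Corollary~\ref{corollary1}). Hence the corresponding instance of the relevant rule of Figure~\ref{fig_tableaux_method_rules} is applicable to $\CSS{\mathcal{F}_\infty}{\mathcal{C}_\infty}$, and saturation of $\wp$ returns at least one of its conclusions, unioned with the limit CSS, to $\wp$; $\CSSinclusion$-closure followed by Proposition~\ref{prop_limit_branch}.2 then pushes each new labelled formula into $\mathcal{F}_\infty$. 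A single-premiss rule such as $\langle\mathbb{T}\BIaand\rangle$ or $\langle\mathbb{T}\BIC{}{}\rangle$ yields the required conjunction of memberships, and a branching rule such as $\langle\mathbb{F}\BIaand\rangle$ or $\langle\mathbb{F}\BImast\rangle$ yields the required disjunction. I would write out one additive case and one constraint-conditioned case in full and remark that the remainder are identical.

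The delicate family is the \emph{generative} rules $\langle\mathbb{T}\BImI\rangle$, $\langle\mathbb{T}\BImast\rangle$, $\langle\mathbb{F}\BImimp\rangle$, $\langle\mathbb{F}\BIC{}{}\rangle$, $\langle\mathbb{T}\BID{}{}\rangle$, $\langle\mathbb{F}\BIE{}{}\rangle$, $\langle\mathbb{T}\BICt{}{}\rangle$, $\langle\mathbb{F}\BIDt{}{}\rangle$, and $\langle\mathbb{T}\BIEt{}{}\rangle$, giving the existential conditions 5, 14, 17, 19, 20, 23, 24, 27, and 28. Here saturation of the limit is useless, since these rules demand \emph{fresh} label constants while $\alphabetR{\mathcal{C}_\infty}$ may exhaust $\gamma_r$; the witnesses must come from the construction itself. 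The key step is a fairness argument: if the trigger $\labelledFormula{S}{\phi}{x}$ lies in $\mathcal{F}_\infty=\bigcup_i\mathcal{F}_i$, it already lies in some $\mathcal{F}_{j_0}$, and by fairness (Definition~\ref{def_fair_strategy}) one has $\mathcal{S}_i\equiv\labelledFormula{S}{\phi}{x}$ for some $i\geq j_0$. At that stage $\labelledFormula{S}{\phi}{x}\in\mathcal{F}_i$, so $\mathcal{F}_i\cup\{\labelledFormula{S}{\phi}{x}\}=\mathcal{F}_i$ and $\CSS{\mathcal{F}_i}{\mathcal{C}_i}\in\wp$ by Proposition~\ref{prop_constructed_branch}.3; the oracle test of the construction therefore passes, and the prescribed $F_e$ and $\mathcal{C}_e$---built from the genuinely new constants $c_{2i+2},c_{2i+3}$, new by Proposition~\ref{prop_constructed_branch}.4---are added at stage $i+1$ and persist into the limit by monotonicity (Proposition~\ref{prop_constructed_branch}.2). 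Reading off the relevant row of the construction table against the matching Hintikka clause then exhibits the required witness.

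I expect this generative family to be the main obstacle. One must verify that the chosen constants really are fresh (so that adjoining the new constraints is consistent with staying inside $\wp$) and, crucially, that the fairness step is taken at a stage \emph{after} the trigger formula has entered the branch---this is exactly what makes the oracle test succeed and lets the construction supply the existential witnesses that saturation cannot provide in the limit. Everything else reduces to mechanically pairing each rule of Figure~\ref{fig_tableaux_method_rules} with its Hintikka clause and invoking saturation together with Proposition~\ref{prop_limit_branch}.2.
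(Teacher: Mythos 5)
Your proposal is correct and follows exactly the route the paper intends: the paper's own ``proof'' is only a two-line sketch deferring to Proposition~\ref{prop_limit_branch} and a check of the conditions of Definition~\ref{definition_Hintikka}, and your three-way split --- non-closure of oracle members for conditions 1--4, saturation plus $\CSSinclusion$-closure plus Proposition~\ref{prop_limit_branch}.2 for the universally quantified clauses, and the fairness/freshness argument via the construction table and Proposition~\ref{prop_constructed_branch} for the existential clauses --- is precisely the standard argument (inherited from the cited completeness proofs for \BBI) that the paper leaves implicit. Your classification of the twenty-nine clauses into these groups is accurate, so no gap remains.
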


\begin{proof}
By Proposition \ref{prop_limit_branch},
$\CSS{\mathcal{F}_\infty}{\mathcal{C}_\infty} \in \wp$. We
must verify that all conditions of Definition
\ref{definition_Hintikka} hold.
\end{proof}

\begin{theorem}[Completeness]
Let $\varphi$ be a formula. If $\varphi$ is valid, then there exists a
proof for $\varphi$.  
\label{th_completeness}
\end{theorem}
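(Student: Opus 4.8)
The plan is to prove the contrapositive: assuming that $\varphi$ admits no tableaux proof, I would construct a Hintikka CSS from which a countermodel of $\varphi$ can be extracted, contradicting the validity of $\varphi$. First I would invoke Lemma \ref{lem_oracle_exists} to fix an oracle $\wp$ --- a set of non-closed CSSs that is $\CSSinclusion$-closed, of finite character, and saturated --- containing every finite CSS that admits no closed tableau. Since a tableaux proof for $\varphi$ is precisely a closed tableau for the CSS $\CSS{\{ \labelledFormula{F}{\varphi}{c_1} \}}{\{ c_1 \relCr c_1 \}}$, the absence of a proof means this initial CSS has no closed tableau and hence lies in $\wp$. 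In parallel, Proposition \ref{prop_exists_fair_strategy} supplies a fair strategy $\mathcal{S}$ enumerating every labelled formula and every agent constraint infinitely often.

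Next I would carry out the inductive construction of the chain $\CSS{\mathcal{F}_i}{\mathcal{C}_i}_{i \geqslant 0}$ laid out immediately before the statement, starting from the initial CSS and processing the $i$-th item $\mathcal{S}_i$ of the strategy. When $\mathcal{S}_i$ is a labelled formula, I adjoin it together with the fresh witnesses and constraints dictated by the relevant tableaux rule --- introducing new constants $c_{2i+2}, c_{2i+3}$ where required --- provided the enlarged CSS stays in $\wp$, and otherwise leave the stage unchanged; the handling of an agent constraint $x\lambda(r) \relCag{u} y$ is analogous, subject to the alphabet condition. Proposition \ref{prop_constructed_branch} records the invariants that make this well defined: monotonicity of $\mathcal{F}_i$ and $\mathcal{C}_i$, persistent membership in $\wp$, retention of $\labelledFormula{F}{\varphi}{c_1}$, and the bound $\alphabetR{\mathcal{C}_i} \subseteq \{c_1,\ldots,c_{2i+1}\}$ that keeps the fresh-constant bookkeeping consistent.

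I would then pass to the limit $\CSS{\mathcal{F}_\infty}{\mathcal{C}_\infty}$ obtained by unioning the $\mathcal{F}_i$ and the $\mathcal{C}_i$. Using finite character and $\CSSinclusion$-closedness of the oracle, Proposition \ref{prop_limit_branch} shows that the limit still belongs to $\wp$ and, crucially, that it is saturated in the sense that any labelled formula or agent constraint whose addition would keep the CSS in $\wp$ is already present. Lemma \ref{lem_limit_branch_is_hintikka_css} then concludes that the limit is a Hintikka CSS. Since $\labelledFormula{F}{\varphi}{c_1} \in \mathcal{F}_\infty$, Lemma \ref{lem_hintikka_css_counter_model} extracts, via the map $\Omega$, a countermodel of $\varphi$, so $\varphi$ is not valid, contradicting the hypothesis; hence a tableaux proof for $\varphi$ must exist.

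The main obstacle I anticipate lies not in this skeleton, which follows Larchey-Wendling's \BBI\ argument closely, but in the verification hidden in Lemma \ref{lem_limit_branch_is_hintikka_css} that all the Hintikka conditions of Definition \ref{definition_Hintikka} hold for the epistemic connectives $\BIC{}{}$, $\BID{}{}$, $\BIE{}{}$ and their duals. Each such modality couples the agent constraints $\relCag{u}$ to the local-resource label $\lambda(r)$, and the delicate cases are the existential conditions --- for instance those triggered by $\labelledFormula{F}{\BIC{u}{r}\phi}{x}$ or $\labelledFormula{F}{\BIDt{u}{r}\phi}{x}$ --- where fairness must guarantee that both the witnessing constant and its agent constraint are eventually introduced, and the universal conditions, where one must check that closure under the rules of Figure \ref{fig_contraints_closure} never produces a constraint demanding a labelled formula that is absent. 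Confirming that the alphabet bound and fresh-constant discipline of Proposition \ref{prop_constructed_branch} mesh correctly with these conditions, for all of the new modalities simultaneously, is where the real work of the completeness proof resides.
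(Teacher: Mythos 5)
Your proposal is correct and follows essentially the same route as the paper: the contrapositive argument via the oracle of Lemma \ref{lem_oracle_exists}, the fair strategy, the inductive construction of the chain $\CSS{\mathcal{F}_i}{\mathcal{C}_i}$, the limit CSS shown to be a Hintikka CSS by Lemma \ref{lem_limit_branch_is_hintikka_css}, and countermodel extraction via Lemma \ref{lem_hintikka_css_counter_model}. Your closing remarks about where the real verification effort lies (the Hintikka conditions for the epistemic modalities and the fresh-constant bookkeeping) accurately identify the content the paper delegates to the cited lemmas.
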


\begin{proof}
Similar to the proof of the corresponding result in \cite{CGP15}. 
We suppose that there is no proof for the formula $\varphi$ and show
that $\varphi$ is not valid. The method which we present here allows
us to build a limit CSS $\CSS{\mathcal{F}_\infty}{\mathcal{C}_\infty}$
that, by Lemma \ref{lem_limit_branch_is_hintikka_css}, is a Hintikka CSS. 
By property 1 of Proposition \ref{prop_constructed_branch}, 
$\labelledFormula{F}{\varphi}{c_1} \in \mathcal{F}_i$, for any $i \geqslant 0$. 
By the definition of a limit CSS, $\labelledFormula{F}{\varphi}{c_1} \in \mathcal{F}_\infty$. 
By Lemma \ref{lem_hintikka_css_counter_model},  $\varphi$ is not valid.  
\end{proof}

\end{document}